\crefname{hypothesis}{Hypothesis}{Hypotheses}
\crefname{appsec}{Appendix}{Appendices}
\def\Z{{\mathbb{Z}}}
\def\R{{\mathbb{R}}}
\def\C{{\mathbb{C}}}
\def\N{{\mathbb{N}}}
\DeclareMathOperator{\var}{Var} 
\DeclareMathOperator{\dist}{dist} 
\DeclareMathOperator{\diag}{diag} 
\DeclareMathOperator{\defin}{def} 
\DeclareMathOperator{\poly}{poly} 
\newcommand{\wrapp}[1]{\left({#1}\right)}
\newcommand{\wrapb}[1]{\left[{#1}\right]}
\newcommand{\wrapc}[1]{\left\{{#1}\right\}}
\newcommand{\abs}[1]{\left|{#1}\right|}
\newcommand{\norm}[1]{\left\|{#1}\right\|}
\newcommand{\Var}{\mathrm{Var}}
\newcommand{\Cov}{\mathrm{Cov}}
\newcommand{\Ent}{\mathrm{Ent}}
\newcommand{\TV}[2]{\left\|#1 - #2\right\|_{\mathrm{TV}}}
\newcommand{\kl}[2]{D_{\mathrm{KL}} (#1 \parallel #2)}
\newcommand{\ceil}[1]{\left\lceil #1 \right\rceil}
\newcommand{\floor}[1]{\left\lfloor #1 \right\rfloor}
\DeclareMathOperator{\Spec}{Spec}
\newcommand{\fpras}{\mathsf{FPRAS}}
\newcommand{\fptas}{\mathsf{FPTAS}}
\DeclareMathOperator{\NP}{\mathsf{NP}}
\DeclareMathOperator{\RP}{\mathsf{RP}}
\newcommand{\eps}{\varepsilon}
\newcommand{\PP}{\mathbb{P}}
\newcommand{\EE}{\mathcal{E}}
\newcommand{\MM}{\Psi}
\DeclareMathOperator{\M}{\mathcal{M}}
\DeclareMathOperator{\SAW}{SAW}
\DeclareMathOperator{\T}{\mathbb{T}}
\newcommand{\SC}{\mathfrak{X}}
\newcommand{\bb}{b}
\newcommand{\seq}{\text{\scriptsize{\,$=$\,}}}
\newcommand{\sL}{{\sigma_\Lambda}}
\newcommand{\Cat}{C_1}
\newcommand{\Pgl}{P_{\textsc{gl}}}
\newcommand{\ii}{i}
 \newcommand{\linkdest}[1]{\Hy@raisedlink{\hypertarget{#1}{}}}
\title{Optimal Mixing of Glauber Dynamics: Entropy Factorization via High-Dimensional Expansion
\thanks{Submitted to the editors on \today. 
A preliminary short version of these results will appear in Proceedings of the 53rd Annual ACM Symposium on Theory of Computing (STOC 2021). 
\funding{Z. Chen and E. Vigoda are supported by NSF grant CCF-2007022. K. Liu is supported by NSF grant CCF-1907845 and ONR-YIP grant N00014-17-1-2429.}}
}
\author{
	Zongchen Chen
	\thanks{Georgia Institute of Technology, Atlanta, GA (\email{chenzongchen@gatech.edu}).}
	\and Kuikui Liu
	\thanks{University of Washington, Seattle, WA (\email{liukui17@cs.washington.edu}).}
	\and Eric Vigoda
	\thanks{University of California, Santa Barbara, CA
	(\email{vigoda@ucsb.edu}).}
}
\date{\today}
\begin{document}
	
\maketitle

\begin{abstract}
We prove an optimal mixing time bound for the single-site update Markov chain known as the Glauber dynamics or Gibbs sampling in a variety of settings. Our work presents an improved version of the spectral independence approach of Anari et al. (2020) and shows $O(n\log{n})$ mixing time on any $n$-vertex graph of bounded degree when the maximum eigenvalue of an associated influence matrix is bounded. As an application of our results, for the hard-core model on independent sets weighted by a fugacity $\lambda$, we establish $O(n\log{n})$ mixing time for the Glauber dynamics on any $n$-vertex graph of constant maximum degree $\Delta$ when $\lambda<\lambda_c(\Delta)$ where $\lambda_c(\Delta)$ is the critical point for the uniqueness/non-uniqueness phase transition on the $\Delta$-regular tree. More generally, for any antiferromagnetic 2-spin system we prove $O(n\log{n})$ mixing time of the Glauber dynamics on any bounded degree graph in the corresponding tree uniqueness region. Our results apply more broadly; for example, we also obtain $O(n\log{n})$ mixing for $q$-colorings of triangle-free graphs of maximum degree $\Delta$ when the number of colors satisfies $q > \alpha \Delta$ where $\alpha \approx 1.763$, and $O(m\log{n})$ mixing for generating random matchings of any graph with bounded degree and $m$ edges.

Our approach is based on two steps. First, we show that the approximate tensorization of entropy (i.e., factorizing entropy into single vertices), which is a key step for establishing the modified log-Sobolev inequality in many previous works, can be deduced from entropy factorization into blocks of fixed linear size. Second, we adapt the local-to-global scheme of Alev and Lau (2020) to establish such block factorization of entropy in a more general setting of pure weighted simplicial complexes satisfying local spectral expansion; this also substantially generalizes the result of Cryan et al. (2019). 


\end{abstract}

\begin{keywords}
  approximate counting, Markov chain Monte Carlo, Glauber dynamics, spectral independence, high-dimensional expanders, entropy factorization
\end{keywords}

\begin{AMS}
  60J10, 68Q87, 68W20 
\end{AMS}




\section{Introduction}

This paper establishes a well-known conjecture that a popular Markov chain known as the Glauber dynamics
converges very quickly to its stationary distribution in the tree uniqueness region, i.e., decay
of correlations region.   The Glauber dynamics is the quintessential
example of a local Markov chain, and its convergence rate is of great interest due to its simplicity and wide
applicability.  

Our setting is the general framework of spin systems.  Spin systems capture
many combinatorial models of interest, including the hard-core model on weighted independent
sets, the Ising model, and colorings, and are equivalent to undirected graphical models.  
For integer $q\geq 2$, a $q$-state spin system is defined by a $q\times q$ interaction matrix $A$.
For a given graph $G=(V,E)$ with $n=|V|$ vertices, 
the configurations of the model are the collection $\Omega$ of assignments $\sigma:V\rightarrow [q]$
of spins to the vertices of the graph.  Each configuration $\sigma\in\Omega$ has an associated weight $w(\sigma)$
defined by the pairwise interactions weighted by the interaction matrix $A$, see \cref{sec:main-result}
for a detailed definition.

The Gibbs distribution $\mu$ is the probability distribution over the collection $\Omega$ of configurations and is
defined as $\mu(\sigma)=w(\sigma)/Z$ where $Z=\sum_{\sigma} w(\sigma)$ is the normalizing factor
known as the partition function.  Approximately sampling from the Gibbs distribution is polynomial-time
equivalent to approximating the partition function~\cite{JVV,SVV}. Given an $\eps>0$ and $\delta>0$, an
$\fpras$ for the partition function outputs a $(1\pm\eps)$-relative approximation of the partition function
with probability $\geq 1-\delta$, whereas an $\fptas$ is the deterministic analog (i.e., it achieves $\delta=0$).

The canonical example of a spin system in statistical physics is the Ising model.
The Ising model is a 2-spin system (i.e., $q=2$); the 
spin space is denoted as $\{+,-\}$ and the configurations of the model are the $2^n$ assignments of spins $\{+,-\}$ to the vertices of the
underlying graph.  In the simpler case without an external field
the Ising model has a single parameter $\beta>0$ corresponding to the (logarithm of the) inverse temperature.
A configuration $\sigma\in\Omega$ has weight $w(\sigma) = \beta^ {m(\sigma)}$ where 
$m(\sigma)= |\{(u,v)\in E: \sigma(u)=\sigma(v)\}|$ is the
number of monochromatic edges in $\sigma$.  When $\beta>1$ then the model is {\em ferromagnetic} as
the two fully monochromatic configurations have maximum weight, whereas when $\beta<1$ then the
model is {\em antiferromagnetic}.  

The hard-core model is a natural combinatorial example of an antiferromagnetic 2-spin system.  The
model is parameterized by a fugacity $\lambda>0$.  For a graph $G=(V,E)$, configurations of the model
are the collection $\Omega$ of independent sets of $G$, and the weight
of an independent set $\sigma$ is $w(\sigma)=\lambda^{|\sigma|}$.

In general, a 2-spin system is defined by three parameters $\beta,\gamma\geq 0$ and $\lambda>0$.  A spin configuration $\sigma\in\{0,1\}^V$ is assigned weight: 
$w(\sigma)=\beta^{m_1(\sigma)}\gamma^{m_0(\sigma)}\lambda^{n_1(\sigma)},$
where, for $s\in\{0,1\}$,  $m_s(\sigma)$ is the number of edges where 
both endpoints receive spin $s$ and $n_s(\sigma)$ is the number of vertices assigned spin $s$.  Note the Ising model corresponds to the
case $\beta=\gamma$ where $\lambda$ is the external field, and 
the hard-core model corresponds to $\beta=0,\gamma=1$.
The model is ferromagnetic when $\beta\gamma>1$ and antiferromagnetic when $\beta\gamma<1$ (the model is trivial when $\beta\gamma=1$).

The Glauber dynamics is a simple Markov chain $(X_t)$ designed for sampling from the Gibbs distribution~$\mu$.  
The transitions $X_t\rightarrow X_{t+1}$ update a randomly chosen vertex as follows:
(i) select a vertex $v$ uniformly at random; (ii) for all $u\neq v$, set $X_{t+1}(u)=X_t(u)$; and (iii)
choose $X_{t+1}(v)$ from the marginal distribution for the spin at $v$ conditional on the configuration $X_{t+1}(N(v))$ on
the neighbors $N(v)$ of $v$.  It is straightforward to verify that the chain is ergodic (in the cases considered here, see the definition
of totally-connected in \cref{sec:main-result})
and the unique stationary distribution is the Gibbs distribution. 

The {\em mixing time} is the number of transitions, for the worst initial state $X_0$,
to guarantee that $X_t$ is within total variation distance $\leq 1/4$ of the Gibbs distribution; for a formal statement, see \cref{eqn:mixing}.
We say the chain is {\em rapidly mixing} when the mixing time is polynomial in $n=|V|$.
Hayes and Sinclair~\cite{HayesSinclair} established that the mixing time of the Glauber dynamics is $\Omega(n\log{n})$
for a family of bounded-degree graphs, and hence we say that the Glauber dynamics has optimal mixing time
when the mixing time is $O(n\log{n})$.

The computational complexity of approximating the partition function 
is closely connected to statistical physics phase transitions.  For $\Delta\geq 3$, consider the tree $T_\ell$ 
of height $\ell$ where all of the internal vertices have degree $\Delta$, and let $r$ denote its root. 
The uniqueness/non-uniqueness phase transition captures whether the leaves influence the root, in the limit as the
height grows.  

The uniqueness/non-uniqueness phase transition 
is nicely illustrated for the Ising model
which has two extremal boundaries: the all $+$ boundary and all $-$ boundary.  For $s\in\{+,-\}$, let $p_\ell^s$ denote the
marginal probability that the root has spin $+$ in the Gibbs distribution on $T_\ell$ conditional on all leaves having spin $s$.  
The model is in the uniqueness phase iff 
$\lim_{\ell\rightarrow\infty} p_{\ell}^+ = \lim_{\ell\rightarrow\infty} p_{\ell}^-$.
For the Ising model (without an external field) 
the uniqueness/non-uniqueness phase transition occurs at $\beta_c(\Delta) = (\Delta-2)/\Delta$ for the antiferromagnetic case and $\overline{\beta}_c(\Delta)=\Delta/(\Delta-2)$ for the ferromagnetic case. For the hard-core model, the critical fugacity is 
$\lambda_c(\Delta):=(\Delta-1)^{\Delta-1}/(\Delta-2)^{\Delta}$.
This phase transition on the $\Delta$-regular tree is connected 
to the complexity of approximating the partition function on
graphs of maximum degree $\Delta$.

For the hard-core model, for constant $\Delta$, for any $\delta>0$, 
Weitz~\cite{Wei06} presented an $\fptas$ for the partition function on
graphs of maximum degree $\Delta$ when $\lambda<(1-\delta)\lambda_c(\Delta)$.
In contrast, when $\lambda>\lambda_c$, Sly~\cite{Sly10} (see also~\cite{SS14,GSV16}) showed that, 
unless $\NP=\RP$, there is no $\fpras$ for
approximating the partition function on graphs of maximum degree $\Delta$.
Li, Lu, and Yin~\cite{LLY13} generalized Weitz's correlation
decay algorithmic approach
to all antiferromagnetic $2$-spin systems when the system is \emph{up-to-$\Delta$ unique} (see \cref{def:2spin-uniqueness}).
One important caveat to these correlation decay approaches is that the  
running time depends exponentially on $\log{\Delta}$ and~$1/\delta$.

Despite the algorithmic successes of the correlation decay approach,
establishing rapid mixing of the Glauber dynamics in the same tree uniqueness region was a vexing open problem. 
Anari, Liu, and Oveis Gharan~\cite{ALO20} introduced the spectral independence approach based on the theory of high-dimensional expanders~\cite{KM17, DK17, KO18, Opp18, AL20}, and established rapid mixing of the Glauber dynamics
for the hard-core model on any graph of maximum degree $\Delta$ when $\lambda<(1-\delta)\lambda_c(\Delta)$
for $\delta>0$.  However, while the mixing time had polynomial dependence on $\Delta$, it also had doubly exponential dependence on $1/\delta$. In~\cite{CLV20} the authors established rapid mixing for all 
antiferromagnetic 2-spin systems when the system is up-to-$\Delta$-unique with gap $\delta$ and improved the mixing time
to an exponential dependence on $1/\delta$. Here, roughly speaking, up-to-$\Delta$ uniqueness with gap $\delta$ means (multiplicative) gap $\delta$ from the uniqueness threshold on the $\Delta$-regular tree for all $d\leq\Delta$; 
see \cref{def:2spin-uniqueness} for a precise statement, and \cite{LLY13} for more discussion.

In this work, we not only establish a fixed polynomial upper bound on the mixing time, but we also prove optimal
mixing of the Glauber dynamics.   Our approach holds for general spin systems.  The spectral independence 
approach, first introduced for 2-spins in \cite{ALO20} and subsequently extended to $q$-spins in \cite{CGSV20, FGYZ20}, considers the $qn\times qn$ influence matrix.  For spins $i,j\in [q]$ and vertices $u,v\in V$,
the entry $((u,i),(v,j))$ of the influence matrix measures the effect of vertex $u$ having spin $i$ on the marginal
probability that vertex $v$ has spin $j$, see \cref{def:influence-matrix} for a precise statement.
Here we prove that if the maximum eigenvalue of the influence matrix is upper bounded 
and the marginal probabilities are lower bounded then the mixing time is $O(n\log{n})$ where the only
dependence on $1/\delta$ and $\Delta$ is in the constant factor captured by the big-O notation.
Our main result is stated in \cref{thm:main} in \cref{sec:main-result} after
presenting the necessary definitions.

We establish optimal mixing time of $O(n\log{n})$ by proving that the Glauber dynamics
contracts relative entropy (with respect to the Gibbs distribution) at a constant rate.
This is analogous to establishing a modified log-Sobolev constant for the Glauber dynamics;
there are several recent results in other contexts also proving entropy decay for various
Markov chains~\cite{CGM19,CP20,BCPSV20}. 
In contrast, previous works utilizing the spectral independence approach~\cite{ALO20,CLV20} and related works
on high-dimensional expanders~\cite{KM17, DK17, Opp18,KO18,AL20} consider the spectral gap (or analogously, decay of variance); such an approach is unable to establish optimal mixing time.
Our proof approach is outlined in \cref{sec:entropy-outline}.

The application of our results is nicely illustrated for the 
particular case of antiferromagnetic $2$-spin systems.  
We prove $O(n\log{n})$ mixing time of the Glauber dynamics when the system
is up-to-$\Delta$-unique.  This is the same region where the 
correlation decay results of~\cite{LLY13} and the rapid
mixing results of~\cite{CLV20} hold, which matches the hardness results of~\cite{SS14}. 
Note, a mixing time of $O(n\log{n})$ implies an $\widetilde{O}(n^2)$ time $\fpras$ for approximating the partition function~\cite{SVV,Kolmogorov}.

\begin{theorem}\label{thm:2-spin}
For all $\Delta \ge 3$, all $\delta \in(0,1)$, and all parameters $(\beta,\gamma,\lambda)$ specifying an antiferromagnetic 2-spin system which is up-to-$\Delta$ unique with gap $\delta$, 
there exists $C = C(\Delta,\delta,\beta,\gamma,\lambda)$ such that 
for every $n$-vertex graph $G=(V,E)$ of maximum degree at most $\Delta$, 
the mixing time of the Glauber dynamics for the 2-spin system on $G$ with parameters $(\beta,\gamma,\lambda)$ is at most $C n\log n$.
\end{theorem}

For the case of the hard-core model our theorem yields the following result.
\begin{theorem}\label{thm:hard-core}
For all $\Delta \ge 3$ and all $\delta \in (0,1)$, 
there exists $C=C(\Delta,\delta)$ such that 
for every $n$-vertex graph $G=(V,E)$ of maximum degree at most $\Delta$ and every $\lambda \leq (1 - \delta)\lambda_{c}(\Delta)$, 
the mixing time of the Glauber dynamics for the hard-core model on $G$ with fugacity $\lambda$ is at most $C n\log{n}$.
\end{theorem}
\begin{remark}
For interested readers, the constant $C = C(\Delta,\delta)$ scales roughly as $\Delta^{O(\Delta^{2}/\delta)}$.
\end{remark}

For the case of the Ising model in both the antiferromagnetic and ferromagnetic case, our theorem yields optimal mixing whenever $\beta$ is between $\beta_{c}(\Delta) = \frac{\Delta-2}{\Delta}$ and $\overline{\beta}_{c}(\Delta) = \frac{\Delta}{\Delta - 2}$.
\begin{theorem}\label{thm:ising}
For all $\Delta \ge 3$ and all $\delta \in (0,1)$, 
there exists $C=C(\Delta,\delta)$ such that 
for every $n$-vertex graph $G=(V,E)$ of maximum degree at most $\Delta$, every $\beta \in [\frac{\Delta-2+\delta}{\Delta-\delta}, \frac{\Delta-\delta}{\Delta-2+\delta}]$, and every $\lambda >0$, 
the mixing time of the Glauber dynamics for the Ising model on $G$ with inverse temperature $\beta$ and external field $\lambda$ is at most $C n\log{n}$.
\end{theorem}
\begin{remark}
We can actually show that specifically for the Ising model, $C = \Delta^{O(1/\delta)}$ suffices when $n$ is large enough and so we obtain polynomial mixing time even when the graph has unbounded degree.
\end{remark}

Recall that the above results are tight as there is no efficient approximation algorithm 
in the tree non-uniqueness region which corresponds to $\lambda>\lambda_c(\Delta)$ for the
hard-core model and $\beta<\beta_c(\Delta)$ for the antiferromagnetic Ising model.
The only analog of the above results establishing optimal mixing time in the entire tree uniqueness 
region was the work of Mossel and Sly~\cite{MS13} for the ferromagnetic Ising model.  Their proof
utilizes the monotonicity properties of the ferromagnetic Ising model which allows the use of the censoring
inequality of Peres and Winkler~\cite{PW13}.
The algorithm of Jerrum and Sinclair~\cite{JS:ising} gives an $\fpras$ for the ferromagnetic Ising model for any $\beta$ and any $G$, but the polynomial exponent is a large constant.

Our results hold for multi-spin systems as well.  The most notable example of a multi-spin system is the $q$-colorings problem,
namely, proper vertex $q$-colorings.
Given a graph $G=(V,E)$ of maximum degree $\Delta$, can we approximate the number of $q$-colorings of $G$?
Jerrum~\cite{Jerrum} proved $O(n\log{n})$ mixing time of the Glauber dynamics whenever $q>2\Delta$.
This was further improved in~\cite{CDMPP,Vigoda} 
to $O(n^2)$ mixing time when $q>(11/6 - \eps)\Delta$ for some small $\eps>0$.
There are several further improvements with various assumptions on the girth or maximum degree, c.f. \cite{DFHV}.
On the hardness side, Galanis et al.~\cite{GSV15} proved that unless $\NP=\RP$ there is no $\fpras$ 
for approximating the number of $q$-colorings when $q$ is even and $q<\Delta$.

For triangle-free graphs, a recent pair of works~\cite{FGYZ20,CGSV20} extended the spectral independence 
approach to establish rapid mixing of the Glauber dynamics when $q>(\alpha^*+\delta)\Delta$ for any $\delta>0$
where $\alpha^*\approx 1.763$; however the polynomial exponent in the mixing time depends on $1/\delta$ in these results.
Using our main result we prove $O(n\log{n})$ mixing time of the Glauber dynamics under the same conditions.

\begin{theorem}\label{thm:coloring}
Let $\alpha^*\approx 1.763$ denote the unique solution to $x=\exp(1/x)$.
For all $\Delta \ge 3$ and all $\delta>0$, 
there exists $C=C(\Delta,\delta)$ such that 
for every $n$-vertex triangle-free graph $G=(V,E)$ of maximum degree at most $\Delta$ and every $q \ge (\alpha^*+\delta)\Delta$, 
the mixing time of the Glauber dynamics for sampling random $q$-colorings on $G$ is at most $C n\log{n}$.
\end{theorem}

We prove spectral independence bounds for the monomer-dimer model on all matchings of a graph; no nontrivial bounds were previously known. Given a graph $G=(V,E)$ and a fugacity $\lambda>0$, the Gibbs distribution $\mu$ for the monomer-dimer model is defined on the collection $\M$ of all matchings of $G$ where $\mu(M)=w(M)/Z$ for $w(M)=\lambda^{|M|}$.  The Glauber dynamics for the monomer-dimer model adds or deletes a random edge in each step.
In particular, from $X_t\in\M$, choose an edge $e$ uniformly at random from $E$ and let $X'=X_t\oplus e$. 
If $X'\in\M$ then let $X_{t+1}=X'$ with probability $w(X') / (w(X') + w(X_t))$ and otherwise let $X_{t+1}=X_t$. 

We prove $O(m\log n)$ mixing time for the Glauber dynamics 
for sampling matchings on bounded-degree graphs with $n$ vertices and $m$ edges.  A classical result of Jerrum and Sinclair \cite{JS89b} yields rapid mixing of the Glauber dynamics for any graph, but the best mixing time bound was $O(n^2m\log{n})$~\cite{Jbook}.

\begin{theorem}\label{thm:matching}
For all $\Delta \ge 3$ and all $\lambda > 0$, 
there exists $C = C(\Delta,\lambda)$ such that 
for every $n$-vertex graph $G=(V,E)$ of maximum degree at most $\Delta$, 
the mixing time of the Glauber dynamics for the monomer-dimer model on $G$ with fugacity $\lambda$ is at most $C m\log n$.
\end{theorem}

For general ferromagnetic 2-spin systems the existing picture is not as clear as for antiferromagnetic systems.
Our work extends to ferromagnetic 2-spin systems, proving $O(n\log{n})$ mixing time for the same range 
of parameters as the previously best known bounds~\cite{GL18,SS19,CLV20}. In particular, we recover Theorems 26 and 27 in \cite{CLV20} with $O(n\log n)$ mixing time.

Finally, we mention that our techniques imply asymptotically optimal bounds (up to constant factors) on both the standard and modified log-Sobolev constants of the Glauber dynamics for spin systems on bounded degree graphs in all of the regimes mentioned above. This also applies for certain problems where prior works have obtained rapid mixing via other techniques such as path coupling and canonical paths. 

\subsection{Result for General Spin Systems}
\label{sec:main-result}
Our main results will follow from a general statement regarding the Glauber dynamics for an arbitrary spin system satisfying marginal bounds and spectral independence. We first proceed with a few definitions. 

Let $q \ge 2$ be an integer and $[q] = \{1,\dots,q\}$. 
Given a graph $G = (V,E)$, we consider the $q$-spin system on $G$ parameterized by a symmetric interaction matrix $A \in \R_{\ge 0}^{q\times q}$ representing ``interaction strengths'' and a field vector $h \in \R_{>0}^q$ representing ``external fields''. 
A configuration $\sigma \in [q]^V$ is an assignment of spins to vertices. 
The \emph{Gibbs distribution} $\mu = \mu_{G,A,h}$ over all configurations is given by
\[
\mu (\sigma) = \frac{1}{Z_G(A,h)} \prod_{\{u,v\} \in E} A(\sigma_u, \sigma_v) \prod_{v \in V} h(\sigma_v), \qquad \forall \sigma \in [q]^V
\]
where 
\[
Z_G(A,h) = \sum_{\sigma\in [q]^V} \prod_{\{u,v\} \in E} A(\sigma_u, \sigma_v) \prod_{v \in V} h(\sigma_v)
\]
is called the \emph{partition function}. 
The hard-core model, Ising model, random colorings, and monomer-dimer model (equivalent to hard-core model on line graphs) all belong to the family of spin systems. 

Let $\mu$ be an arbitrary distribution over $[q]^V$. 
A configuration $\sigma \in [q]^V$ is said to be \emph{feasible} with respect to $\mu$ if $\mu(\sigma) > 0$. 
Let $\Omega = \Omega(\mu)$ denote the collection of all feasible configurations (we omit $\mu$ when it is clear from the context); namely, $\Omega$ is the support of $\mu$. 
Furthermore, for $\Lambda \subseteq V$ let $\Omega_\Lambda = \{\tau \in [q]^\Lambda: \mu_\Lambda(\tau) > 0\}$ denote the collection of all feasible (partial) configurations on $\Lambda$, with the convention that $\Omega_v = \Omega_{\{v\}}$ for a single vertex $v$. 
Observe that $\Omega_V = \Omega$. 
For any subset $\Lambda \subseteq V$ and boundary condition $\tau \in \Omega_\Lambda$, we often consider the conditional distribution $\mu_S^\tau(\cdot) = \mu(\cdot | \sL \seq \tau)$ over configurations on $S = V \setminus \Lambda$, and we shall write $\Omega^\tau_U$ for the set of feasible (partial) configurations on $U \subseteq S$ under this conditional measure.

For a subset $S \subseteq V$, 
the Hamming graph $\mathcal{H}_{S}$ is defined to be the graph with vertex set $[q]^S$ of all configurations on $S$ such that two configurations are adjacent iff they differ at exactly one vertex. 
A collection $\Omega_0 \subseteq [q]^S$ of configurations on $S$ is said to be connected if the induced subgraph $\mathcal{H}_{S}[\Omega_0]$ of $\mathcal{H}_{S}$ is connected. 
A distribution $\mu$ over $[q]^V$ is said to be \emph{totally-connected} 
if for every nonempty subset $S \subseteq V$ and every boundary condition $\tau \in \Omega_{V \setminus S}$, the set $\Omega^\tau_S$ is connected. 


\begin{assumption}
Throughout the paper, we always assume that the distribution $\mu$ we are interested in is totally-connected. 
\end{assumption}

We remark that all soft-constraint models (i.e., $A(i,j) > 0$ for all $i,j \in [q]$) satisfy this assumption and common hard-constraint models, including the hardcore model, $q$-colorings when $q \ge \Delta+2$, and matchings, all satisfy this assumption as well.

The \emph{Glauber dynamics}, also known as the \emph{Gibbs sampling}, is a simple, natural, and popular Markov chain for sampling from a distribution $\mu$ over $[q]^V$. The dynamics starts with some (possibly random) configuration $X_0$. For every $t \ge 1$, a new random configuration $X_{t+1}$ is generated from $X_t$ as follows: pick a coordinate $v \in V$ uniformly at random, set $X_{t+1}(u) = X_{t}(u)$ for all $u \in V \setminus \{v\}$, and sample $X_{t+1}(v)$ from the conditional distribution $\mu(\sigma_v \seq \cdot \mid \sigma_{V \setminus \{v\}} \seq X_t(V \setminus \{v\}))$. 
Denote the transition matrix of the Glauber dynamics by $\Pgl$. 
If $\mu$ is totally-connected, then the Glauber dynamics is ergodic (i.e., irreducible and aperiodic) and has stationary distribution $\mu$. 

Let $P$ be the transition matrix of an ergodic Markov chain $(X_t)$ on a finite state space $\Omega$ with stationary distribution $\mu$. 
For $t\ge 0$ and $\sigma \in \Omega$, let $P^t(\sigma,\cdot)$ denote the distribution of $X_t$ when starting the chain with $X_0 = \sigma$. 
For $\eps \in (0,1)$, the \emph{mixing time} of $P$ is defined as
\begin{equation}\label{eqn:mixing}
T_{\mathrm{mix}}(P, \eps) = \max_{\sigma \in \Omega} \min \left\{ t\ge 0: \TV{P^t(\sigma,\cdot)}{\mu} \le \eps \right\}. 
\end{equation}

We will require two conditions of the distribution $\mu$. 
The first is that the marginal probability of each vertex is bounded away from $0$ under any conditioning.

\begin{definition}[Bounded Marginals]
\label{def:bounded-marg}
We say a distribution $\mu$ over $[q]^V$ is \emph{$\bb$-marginally bounded} if 
for every $\Lambda \subsetneq V$ and $\tau \in \Omega_\Lambda$, 
it holds for every $v \in V \setminus \Lambda$ and $i \in \Omega_v^\tau$ that, 
\[
\mu(\sigma_v \seq i \mid \sL \seq \tau) \ge \bb.
\]
\end{definition}

The second condition is the notion of spectral independence, first introduced by \cite{ALO20} and later generalized to multi-spin systems in \cite{CGSV20,FGYZ20}. 
Here we use the definitions from \cite{CGSV20}. 

\begin{definition}[Influence Matrix]
\label{def:influence-matrix}
Given $\Lambda \subsetneq V$ and $\tau \in \Omega_\Lambda$, let 
\[
\widetilde{V}_\tau = \{(u,i): u\in V \setminus \Lambda, i \in \Omega_u^\tau\}. 
\]
For every $(u,i), (v,j) \in \widetilde{V}_\tau$ with $u\neq v$, we define the \emph{(pairwise) influence} of $(u,i)$ on $(v,j)$ conditioned on $\tau$ by 
\[
\MM_\mu^\tau\big( (u,i),(v,j) \big) = \mu(\sigma_v \seq j \mid \sigma_u \seq i, \sL \seq \tau) - \mu(\sigma_v \seq j \mid \sL \seq \tau).
\]
Furthermore, let $\MM_\mu^\tau\big( (v,i),(v,j) \big) = 0$ for all $(v,i), (v,j) \in \widetilde{V}_\tau$. 
We call $\MM_\mu^\tau$ the \emph{(pairwise) influence matrix} conditioned on $\tau$. 
\end{definition}

Note that all eigenvalues of the influence matrix $\MM_\mu^\tau$ are real; for instance, see \cite{ALO20,CGSV20,BCCPSV21}. 

\begin{definition}[Spectral Independence]
\label{def:spectral-ind}
We say a distribution $\mu$ over $[q]^V$ is \emph{$\eta$-spectrally independent} if 
for every $\Lambda \subsetneq V$ and $\tau \in \Omega_\Lambda$, 
the largest eigenvalue $\lambda_1(\MM_\mu^\tau)$ of the influence matrix $\MM_\mu^\tau$ satisfies
\[
\lambda_1 (\MM_\mu^\tau) \le \eta. 
\]
\end{definition}

The work of \cite{FGYZ20} defined another version of influence matrix by
\[
\Psi_\mu^\tau(u,v) = \max_{i,j \in \Omega_u^\tau} \TV{\mu( \sigma_v \seq \cdot \mid \sigma_u \seq i, \sL = \tau)}{\mu( \sigma_v \seq \cdot \mid \sigma_u \seq j, \sL = \tau)},
\]
and the spectral independence correspondingly. 
We remark that \cref{def:spectral-ind} is weaker than the notion of spectral independence given in \cite{FGYZ20}, and for all current applications as in \cite{CGSV20,FGYZ20} or here in this paper, both definitions work.

Our main result is that if the Gibbs distribution on a bounded-degree graph is both marginally bounded and spectrally independent, 
then the Glauber dynamics satisfies the modified log-Sobolev inequality with constant $\Omega(1/n)$ (see \cref{def:func-ineq}) and 
mixes in $O(n\log n)$ steps, where $n$ is the number of vertices of the graph.

\begin{theorem}\label{thm:main}
Let $\Delta \ge 3$ be an integer and $b,\eta >0$ be reals. 
Suppose that $G=(V,E)$ is an $n$-vertex graph of maximum degree at most $\Delta$ and $\mu$ is a totally-connected Gibbs distribution of some spin system on $G$. 
If $\mu$ is both $b$-marginally bounded and $\eta$-spectrally independent, 
then the Glauber dynamics for sampling from $\mu$ satisfies the modified log-Sobolev inequality with constant $\frac{1}{C_1 n}$ where 
\[
C_1 = \left( \frac{\Delta}{\bb} \right)^{O\left(\frac{\eta}{\bb^2}+1\right)}. 
\]
Furthermore, the mixing time of the Glauber dynamics satisfies  
\[
T_{\mathrm{mix}}(\Pgl, \eps) 
= \left( \frac{\Delta}{\bb} \right)^{O\left(\frac{\eta}{\bb^2}+1\right)} 
\times 
O\left( n\log\left(\frac{n}{\eps}\right) \right). 
\]
\end{theorem}
\begin{remark}
More specifically, when $n \ge \frac{24\Delta}{\bb^2}(\frac{4\eta}{\bb^2}+1)$ we can choose
\[
C_1 = \frac{18\log(1/b)}{b^4} \left( \frac{24\Delta}{\bb^2} \right)^{\frac{4\eta}{\bb^2}+1},
\]
and the mixing time is bounded by
\[
T_{\mathrm{mix}}(\Pgl, \eps) \le \ceil{ \frac{18\log(1/b)}{b^4} \left( \frac{24\Delta}{\bb^2} \right)^{\frac{4\eta}{\bb^2}+1} n \left( \log n + \log \log \frac{1}{b} + \log \frac{1}{2\eps^2} \right)}. 
\]
\end{remark}

Previous results \cite{ALO20, CLV20,CGSV20,FGYZ20} could obtain $\mathrm{poly}(\Delta) \times n^{O(\eta)}$ mixing but without the assumption of marginal boundedness. In the setting of spin systems, we always have $b$-marginal boundedness with $b$ depending only on the parameters $A,h$ of the spin system and the maximum degree $\Delta$ of the graph, and so our results supersede those of \cite{ALO20, CLV20,CGSV20,FGYZ20} in the bounded degree regime.

\begin{remark}
After the first version of this paper, the work \cite{BCCPSV21} reformulates the proof of \cref{thm:main} without using simplicial complexes; in particular, the constant $C_1$ is brought down to $C_1 = (\Delta/b)^{O((\eta/b) + 1)}$. The proof approach in this paper can also be modified to achieve the same bound, by considering \cref{lem:entvarrelate} specialized to simplicial complexes corresponding to spin systems. 
\end{remark}

\subsection{Result for General Simplicial Complexes}
\label{subsec:SC}
The recent work \cite{ALO20} studied spin systems, and more generally any distribution over $[q]^V$, in a novel way by viewing full and partial configurations as a high dimensional simplicial complex and utilizing tools such as high-dimensional expansion. 
Subsequent works \cite{CLV20,CGSV20,FGYZ20} follow the same path as well. 
In this paper we also study spin systems in the framework of simplicial complexes. 
Moreover, we obtain new bounds on the mixing time and modified log-Sobolev constant of the global down-up and up-down walks for arbitrary pure weighted simplicial complexes. 
Before presenting our results, we first review some standard notation.

A \textit{simplicial complex $\SC$} is a collection of subsets (called faces) of a ground set $\mathcal{U}$ which is downwards closed; 
that is, if $\sigma \in \SC$ and $\tau \subseteq \sigma$ then $\tau \in \SC$. 
The dimension of a face is its size,\footnote{This differs from the standard definition of ``dimension'' in the literature, which is the size of the subset minus $1$. In our setting, setting the dimension equal to the size of the set is more natural.} and 
the dimension of $\SC$ is defined to be the maximum dimension of its faces. 
We say an $n$-dimensional simplicial complex $\SC$ is \textit{pure} if every face is contained in a maximal face of size $n$.
We write $\SC(k)$ for the collection of faces of size $k$. 
For a $k$-dimensional face $\tau \in \SC(k)$, we can define a pure $(n-k)$-dimensional simplicial subcomplex $\SC_\tau$ by taking $\SC_\tau = \{\xi \subseteq \mathcal{U} \setminus \tau: \tau \cup \xi \in \SC \}$. 

For a pure $n$-dimensional simplicial complex $\SC$, consider a positive weight function $w:\SC(n) \rightarrow \R_{>0}$, which induces a distribution $\pi_{n}$ on $\SC(n)$ with $\pi_{n}(\sigma) \propto w(\sigma)$.
Furthermore, we can also define a distribution $\pi_{k}$ over $\SC(k)$ for each nonnegative integer $k < n$ via the following process: sample $\sigma$ from $\pi_{n}$, and select a uniformly random subset of size $k$. 
For $\tau \in \SC(k)$, the weight function $w$ induces the weights for the simplicial subcomplex $\SC_\tau$ by $w_\tau(\xi) = w(\tau \cup \xi)$ for each $\xi \in \SC_\tau(n-k)$. 
The distribution $\pi_{\tau, j}$ is also defined accordingly for each nonnegative integer $j \le n-k$.

As noticed in \cite{ALO20}, there is a natural way to represent every distribution $\mu$ over $[q]^V$ with $|V| = n$ as a pure $n$-dimensional weighted simplicial complex $(\SC = \SC^\Omega,\mu)$, which is defined as follows. 
The ground set of $\SC$ consists of pairs 
\[
\widetilde{V} = \{(v,i) : v \in V, i \in \Omega_v\}. 
\] 
The maximal faces of~$\SC$ consist of collections of $n$ pairs forming a valid configuration $\sigma \in \Omega$; i.e., every configuration $\sigma \in \Omega$ corresponds to a maximal face $\{(v,\sigma_v): v\in V\}$. 
The rest of $\SC$ is generated by taking downwards closure so that $\SC$ is pure by construction.
Namely, every $U \subseteq V$ and $\tau \in \Omega_U$ corresponds to a face $\{(v,\tau_v): v \in U\}$; we shall denote it by $(U,\tau)$ for simplicity. 
Note that the faces of intermediate dimension can be thought of as partial configurations. 
Now, if there is a weight function $w: \Omega \to \R_{>0}$ associated with $\mu$ such that $\mu(\sigma) \propto w(\sigma)$ for each $\sigma \in \Omega$, then it also gives a weight function $w: \SC(n) \to \R_{>0}$ by the one-to-one correspondence between $\Omega$ and $\SC(n)$, and thus induces the associated distribution $\pi_n$ on $\SC(n)$. 
Observe that $\pi_{n}$ is exactly the distribution~$\mu$. 
Moreover, for each $k < n$, the distribution $\pi_k$ on $\SC(k)$ is given by 
\[
\pi_k(U,\tau) = \frac{1}{\binom{n}{k}} \, \mu(\sigma_U \seq \tau)
\]
for every $U \subseteq V$ and $\tau \in \Omega_U$.

For simplicial complexes, the global down-up and up-down walks between faces of distinct dimensions have attracted a lot of attention in recent years \cite{KM17, DK17, KO18, Opp18, ALOV18ii, CGM19, AL20}.
For integers $0\le r < s \le n$, define the \emph{order-$(s,r)$ (global) down-up} walk with transition matrix denoted by $P^\vee_{s,r}$ to be the following random walk over $\SC(s)$: in each step we remove $s-r$ elements, chosen uniformly at random, from the current face $\sigma_t \in \SC(s)$ to obtain a face $\tau_t \in \SC(r)$, and then pick $\xi_{t+1} \in \SC_{\tau_t}(s-r)$ from the distribution $\pi_{\tau_t,s-r}$ and set $\sigma_{t+1} = \tau_t \cup \xi_{t+1}$. 
The stationary distribution of $P^\vee_{s,r}$ is $\pi_s$. 
In particular, observe that the Glauber dynamics for a distribution $\mu$ over $[q]^V$ is the same as the order-$(n,n-1)$ down-up walk for the weighted simplicial complex $(\SC,\mu)$. 
Similarly, the \emph{order-$(r,s)$ (global) up-down} walk with transition matrix $P^\wedge_{r,s}$ is a random walk over $\SC(r)$ with stationary distribution $\pi_r$: given the current face $\tau_t \in \SC(r)$, sample $\xi_{t+1} \in \SC_{\tau_t}(s-r)$ from $\pi_{\tau_t,s-r}$, set $\sigma_{t+1} = \tau_t \cup \xi_{t+1}$, and finally remove $s-r$ elements from $\sigma_{t+1}$ uniformly at random to obtain $\tau_{t+1} \in \SC(r)$. 


We establish the modified log-Sobolev inequality and give meaningful bounds on the mixing time for the down-up and up-down walks for arbitrary weighted simplicial complexes. 
Our proof utilizes the local-to-global scheme as in \cite{AL20} and establishes contraction of entropy extending the result of \cite{CGM19}. 
Before stating our main result, we first give the definitions of marginal boundedness and local spectral expansion for simplicial complexes. 
As we shall see from \cref{claim:bounded} and \cref{claim:spec-ind} below, our requirements of marginal boundedness and spectral independence in \cref{thm:main} is translated from the corresponding conditions needed for simplicial complexes.


\begin{definition}[Bounded Marginals]
We say a pure $n$-dimensional weighted simplicial complex $(\SC,w)$ is \emph{$(\bb_0,\dots, \bb_{n-1})$-marginally bounded} if for all $0\le k\le n-1$, every $\tau \in \SC(k)$, and every $i\in \SC_{\tau}(1)$, 
we have
\[
\pi_{\tau,1}(i) \ge \bb_k.
\]
\end{definition}



\begin{claim}
\label{claim:bounded}
If a distribution $\mu$ over $[q]^V$ is $\bb$-marginally bounded, then the pure weighted simplicial complex $(\SC,\mu)$ is $(\bb_0,\dots, \bb_{n-1})$-marginally bounded with 
$\bb_k = \frac{\bb}{n-k}$ for each $k$. 
\end{claim}

The proof of \cref{claim:bounded} can be found in \cref{subsubsec:proof-claim}. 


The global walks in simplicial complexes can be studied by decomposition into local walks which we define now.
For every $0 \le k \le n-2$ and every face $\tau \in \SC(k)$, the \emph{local walk} at $\tau$ with transition matrix $P_\tau$ is the following (non-lazy) random walk over $\SC_\tau(1)$: given the current element $i \in \SC_\tau(1)$, the next element is generated from the distribution $\pi_{\tau \cup \{i\} ,1}$. 
One can relate mixing properties of the local walks to the mixing properties of the global walks; see \cite{KO18, ALOV18ii, CGM19, AL20}. 
In nearly all prior works, such a relation was quantified using the spectral gap of the walks. 
Like in \cite{CGM19}, while our ultimate goal is to show the modified log-Sobolev inequality of the global walks, we will still need the notion of local spectral expansion for local walks. 
Let us now capture this idea using the following definition, taking after \cite{KM17, DK17, Opp18, KO18, AL20, KM20}.

\begin{definition}[Local Spectral Expansion \cite{AL20}]
We say a pure weighted $n$-dimensional simplicial complex $(\SC,w)$ is a \emph{$(\zeta_{0},\dots,\zeta_{n-2})$-local spectral expander} if for every $0 \leq k \leq n-2$ and every $\tau \in \SC(k)$, we have 
\[
\lambda_{2}(P_{\tau}) \leq \zeta_{k}. 
\]
\end{definition}

\begin{claim}
\label{claim:spec-ind}
If a distribution $\mu$ over $[q]^V$ is $\eta$-spectrally independent, then the weighted simplicial complex $(\SC,\mu)$ is a $(\zeta_{0},\dots,\zeta_{n-2})$-local spectral expander with 
$\zeta_k = \frac{\eta}{n-k-1}$ for each $k$. 
\end{claim}

\begin{proof}
This is Theorem 8 from \cite{CGSV20}. 
\end{proof}


We then show that for any pure weighted simplicial complexes, the modified log-Sobolev inequality (see \cref{def:func-ineq}) holds for down-up and up-down walks if the marginal probabilities of the simplicial complex are bounded away from zero and all local walks have good expansion properties. 
This also bounds the mixing times of these random walks. 

\begin{theorem}\label{thm:down-up-mixing}
Let $(\SC,w)$ be a pure $n$-dimensional weighted simplicial complex. 
If $(\SC,w)$ is $(\bb_0,\dots,\bb_{n-1})$-marginally bounded and has $(\zeta_0,\dots,\zeta_{n-2})$-local spectral expansion, 
then for every $0 \le r < s \le n$, 
both the order-$(s,r)$ down-up walk and the order-$(r,s)$ up-down walk satisfy the modified log-Sobolev inequality with constant $\kappa = \kappa (r,s)$ defined as
\[
\kappa = \frac{\sum_{k=r}^{s-1} \Gamma_k}{\sum_{k=0}^{s-1} \Gamma_k}
\]
where: $\Gamma_0 = 1$; for $1\le k \le s-1$, $\Gamma_k = \prod_{j = 0}^{k-1} \alpha_j$; and for $0 \le k \le s-2$, 
\[
\alpha_k = \max\left\{ 
1-\frac{4\zeta_k}{\bb_k^2(s-k)^2} , 
\frac{1-\zeta_k}{4+2\log(\frac{1}{2b_k b_{k+1}})}
\right\}.
\]
Furthermore, the mixing time of the order-$(s,r)$ down-up walk is bounded by
\begin{equation}\label{eqn:mix-down-up}
T_{\mathrm{mix}}(P^\vee_{s,r}, \eps) \le \ceil{\frac{1}{\kappa} \left( \log \log \frac{1}{\pi_s^*} + \log \frac{1}{2\eps^2} \right)}
\end{equation}
where $\pi_s^* = \min_{\sigma \in \SC(s)} \pi_s(\sigma)$. 
The mixing time of the order-$(r,s)$ up-down walk is also bounded by \cref{eqn:mix-down-up} with $\pi_s^*$ replaced by $\pi_r^*$. 
\end{theorem}

\cref{thm:down-up-mixing} generalizes both the result of \cite{CGM19} for simplicial complexes with respect to strongly log-concave distributions and the result of \cite{AL20} for the Poincar\'{e} inequality (i.e., bounding the spectral gap). 
It in some sense answers a question of \cite{CGM19} on local-to-global modified log-Sobolev inequalities in high-dimensional expanders, at least in the bounded marginals setting.

Even though \cref{thm:down-up-mixing} can give a bound on the mixing time of the Glauber dynamics, which is the order-$(n,n-1)$ down-up walk in the corresponding weighted simplicial complex, our main result \cref{thm:main} does \emph{not} follow directly from \cref{thm:down-up-mixing}. 
In fact, we will only consider the order-$(n,n-\ell)$ down-up walk for $\ell = \Theta(n)$, which corresponds to the heat-bath block dynamics that updates a uniformly random subset of $\ell$ vertices in every step. 
One of our main technical contributions is to compare this block dynamics with the single-site Glauber dynamics; we shall detail this in \cref{subsec:AT-UBF} below.
Nevertheless, we find \cref{thm:down-up-mixing} interesting of its own and possible for future applications in other problems.

{
In the following section, we formally define approximate tensorization of entropy (see \cref{def:at}) and the more general notion of uniform block factorization of entropy (see \cref{def:ubf}), and then provide a proof outline for our main results.  An outline of the paper is provided in~\cref{sec:paper-outline}.

}

\section{Proof Outline}\label{sec:outline}
In this section, we outline our proofs of \cref{thm:main,thm:down-up-mixing}. 

\subsection{Approximate Tensorization and Uniform Block Factorization}
\label{subsec:AT-UBF}

One way of establishing rapid mixing of the Glauber dynamics is to show that the Gibbs distribution satisfies the \emph{approximate tensorization of entropy}. 
This approach has been (implicitly) used in much of the literature to establish the log-Sobolev inequalities, from which one can deduce an optimal bound on the mixing time. 
Before giving the formal definition, we first review some standard definitions. 
 
Consider a distribution $\mu$ supported on $\Omega \subseteq [q]^V$. 
For every $f: \Omega \to \R_{\ge 0}$, we denote the expectation of $f$ under $\mu$ by $\mu(f) = \sum_{\sigma \in \Omega} \mu(\sigma) f(\sigma)$ and the entropy of $f$ by $\Ent_{\mu}(f) = \mu(f \log \frac{f}{\mu(f)})$. 
We often simply write $\Ent(f)$ for the entropy and drop the subscript $\mu$ when it is clear from the context. 
More generally, given $S \subseteq V$ and $\tau \in \Omega_{V \setminus S}$, for every $f: \Omega_S^\tau \to \R_{\ge 0}$ we use $\mu_S^\tau(f)$ to denote the expectation of $f$ under the conditional distribution $\mu_S^\tau$ and $\Ent_S^\tau(f)$ for the corresponding entropy. 
For most of the time we are actually given a function $f: \Omega \to \R_{\ge 0}$, and we will still write $\mu_S^\tau(f)$ and $\Ent_S^\tau(f)$ where we think of $f$ as restricted to the space $\Omega_S^\tau$ and implicitly assume that the configuration outside $S$ is given by $\tau$; i.e., for an argument $\sigma \in \Omega_S^\tau$ the value of $f$ is $f(\sigma\cup \tau)$. 
It is helpful to think of $\mu_S^\tau(f)$ and $\Ent_S^\tau(f)$ as a function of the boundary condition $\tau$. 
In this sense, the notation $\mu[\Ent_S(f)]$, for example, represents the expectation of the function $\Ent_S^\tau(f)$ where $\tau \in \Omega_{V \setminus S}$ is distributed as the marginal of $\mu$ on $V \setminus S$. 

The notion of approximate tensorization of entropy is formally defined as follows. 

\begin{definition}[Approximate Tensorization]
\label{def:at}
We say that a distribution $\mu$ over $[q]^V$ satisfies the \emph{approximate tensorization of entropy} (with constant $\Cat$) if for all $f: \Omega \to \R_{\ge 0}$ we have 
\begin{equation}\label{eq:approx-tensor}
\Ent (f) \le \Cat \sum_{v\in V} \mu[\Ent_v(f)].
\end{equation}
\end{definition}

Approximate tensorization can be understood as closeness of $\mu$ to a product distribution, or weak dependency of variables. 
In fact, if $\mu$ is exactly a product distribution (e.g., the Gibbs distribution on an empty graph), then approximate tensorization holds with constant $\Cat = 1$; e.g., see~\cite{Cesi01,CMT15}.
If $\mu$ satisfies approximate tensorization with a constant $C_1$ independent of $n$, then the Glauber dynamics for sampling from $\mu$ mixes in $O(n\log n)$ steps. 
In fact, given approximate tensorization, one can deduce tight bounds on all of the following quantities: the spectral gap, both standard and modified log-Sobolev constants, relative entropy decay rate, mixing time, and concentration bounds. See \cref{fact:at-mixing} for a detailed summary. 

In many cases, especially on the integer lattice $\Z^d$, log-Sobolev inequalities for the Glauber dynamics are established through the approximate tensorization of entropy, which is more intuitive and easier to handle; e.g., see~\cite{Martinelli-notes,GZ-notes,Cesi01,CP20}.
Despite the success on $\Z^d$, there is not much study for spin systems on bounded-degree graphs. 
The works of \cite{CMT15,Marton15} considered approximate tensorization for general discrete product spaces, and gave sufficient conditions to derive it; however, for spin systems these results do not cover the whole uniqueness region.

One can regard approximate tensorization of entropy as factorizing entropy into all single vertices. 
Motivated by tools from high dimensional simplicial complexes \cite{AL20,ALO20} and study on general block factorization of entropy \cite{CP20}, we consider in this paper a more general notion of entropy factorization, where the entropy is factorized into subsets of vertices of a fixed size. 
The formal definition is given as follows. 

\begin{definition}[Uniform Block Factorization]
\label{def:ubf}
We say that a distribution $\mu$ over $[q]^V$ satisfies the \emph{$\ell$-uniform block factorization of entropy} (with constant $C$) if for all $f: \Omega \to \R_{\ge 0}$ we have 
\begin{equation}\label{eq:block-factor}
\frac{\ell}{n} \, \Ent (f) \le C \cdot \frac{1}{\binom{n}{\ell}} \sum_{S\in \binom{V}{\ell}} \mu[\Ent_S(f)].
\end{equation}
\end{definition}

We remark that uniform block factorization of entropy is a special case of block factorization given by equation~(1.3) in \cite{CP20}; there, the entropy factorizes into arbitrary blocks with arbitrary weights. 
Also observe that $1$-uniform block factorization is the same as approximate tensorization of entropy. 
Just as the approximate tensorization corresponds to the single-site Glauber dynamics, 
the $\ell$-uniform block factorization corresponds to the heat-bath block dynamics where in each step a subset of vertices of size $\ell$ is chosen uniformly at random and gets updated. 
Moreover, similar results as in \cref{fact:at-mixing} can be deduced for this block dynamics. 

Our first key result is a reduction from approximate tensorization to uniform block factorization. 
For $\bb$-marginally bounded Gibbs distributions on graphs with maximum degree $\leq \Delta$, 
we show that approximate tensorization is implied by $\ell$-uniform block factorization for $\ell = \ceil{\theta n}$ and an appropriate constant $\theta$ depending on $\bb$ and $\Delta$. 
This is given by the following lemma. 

\begin{lemma}\label{lem:comparison}
Let $\Delta\ge 3$ be an integer and $\bb > 0$ be a real. 
Consider the Gibbs distribution $\mu$ on an $n$-vertex graph $G$ of maximum degree at most $\Delta$ and assume that $\mu$ is $\bb$-marginally bounded. 
Suppose there exist positive reals $\theta \le \frac{b^2}{12\Delta}$ and $C$ such that 
$\mu$ satisfies the $\ceil{\theta n}$-uniform block factorization of entropy with constant $C$. 
Then $\mu$ satisfies the approximate tensorization of entropy with constant 
\[
\Cat =  \frac{18\log(1/b)}{b^4}\, C. 
\] 
\end{lemma}

This lemma is proved in \cref{sec:approxtensoruniformblock}, it follows from a type of ``shattering'' result which is stated in~\cref{lem:probsizecomponent}.

\begin{remark}
The notion of approximate tensorization and uniform block factorization with respect to variance is also meaningful. 
In fact, for variance these definitions are equivalent to bounding the spectral gap of the corresponding chains. 
Moreover, \cref{lem:comparison} holds for variance as well, which can already provide a tight bound on the spectral gap of the Glauber dynamics combining results from \cite{ALO20,CLV20,CGSV20,FGYZ20}. We will discuss this in more detail in \cref{app:var}. 
\end{remark}



\subsection{Simplicial Complexes and Entropy Contraction}
\label{sec:entropy-outline}

Our next goal is to establish $\ell$-uniform block factorization of entropy for $\ell = \Theta(n)$, which relies on the spectral independence property.  
The following lemma holds for all distributions over $[q]^V$, not only Gibbs distributions. 

\begin{lemma}\label{lem:down-up-k}
Let $\bb,\eta >0$ be reals. 
Then for every real $\theta \in (0,1)$ and every integer $n \ge \frac{2}{\theta}(\frac{4\eta}{\bb^2}+1)$ the following holds. 

Let $V$ be a set of size $n$ and $\mu$ be a distribution over $[q]^V$. 
If $\mu$ is both $\bb$-marginally bounded and $\eta$-spectrally independent, 
then $\mu$ satisfies $\ceil{\theta n}$-uniform block factorization of entropy with constant 
\[
C = \left( \frac{2}{\theta} \right)^{\frac{4\eta}{\bb^2}+1}. 
\]
\end{lemma}

Recall that there is a natural correspondence between a distribution $\mu$ over $[q]^V$ and the weighted simplicial complex $(\SC,\mu)$. 
For general weighted simplicial complexes, one property studied in \cite{CGM19} is how the entropy of a function defined on faces contracts when it projects down from higher dimensions to lower. 
This can be captured by the definition below. 
For a pure $n$-dimensional weighted simplicial complex $(\SC,w)$ and a nonnegative integer $k < n$, let $P^\uparrow_k$ denote the $|\SC(k)| \times |\SC(k+1)|$ dimensional transition matrix corresponding to adding a random element $i \notin \tau$ to some $\tau \in \SC(k)$ where $i$ is distributed as $\pi_{\tau,1}$. 
Also for any $0\le r < s \le n$ and any function $f^{(s)}:\SC(s) \rightarrow \R_{\geq0}$, define $f^{(r)}: \SC(r) \rightarrow \R_{\geq0}$ by $f^{(r)} = P^{\uparrow}_{r} \cdots P^{\uparrow}_{s-1} f^{(s)}$.

\begin{definition}[Global Entropy Contraction]
We say a pure $n$-dimensional weighted simplicial complex $(\SC,w)$ satisfies the \emph{order-$(r,s)$ global entropy contraction} with rate $\kappa = \kappa(r,s)$ if for all $f^{(s)} : \SC(s) \rightarrow \R_{\geq0}$ we have
\[
\Ent_{\pi_r}(f^{(r)}) \le (1-\kappa)\, \Ent_{\pi_s}(f^{(s)}).
\]
\end{definition}

It turns out, 
as a remarkable fact, 
that uniform block factorization of entropy for a distribution $\mu$ over $[q]^V$ is \emph{equivalent} to global entropy contraction for the weighted simplicial complex $(\SC,\mu)$. 

\begin{lemma}\label{lem:equiv}
A distribution $\mu$ over $[q]^V$ satisfies the $\ell$-uniform block factorization of entropy with some constant $C$ 
if and only if
the corresponding weighted simplicial complex $(\SC,\mu)$ satisfies order-$(n-\ell,n)$ global entropy contraction with rate $\kappa$, where 
$C \kappa = \ell/n$. 
\end{lemma}

The proof of \cref{lem:equiv} can be found in \cref{subsubsec:proof-claim}. 
As a consequence, to prove \cref{lem:down-up-k}, it suffices to establish global entropy contraction for the weighted simplicial complex $(\SC,\mu)$. 

Just like approximate tensorization and uniform block factorization having many implications for the corresponding single-site and block dynamics (e.g., see \cref{fact:at-mixing}), the notion of global entropy contraction can provide for weighted simplicial complexes meaningful bounds on  
the spectral gap, modified log-Sobolev constant, relative entropy decay rate, mixing time, and concentration bounds; 
see \cref{fact:ent-contraction} for details. 
In Lemma 11 of \cite{CGM19}, the authors established order-$(r,s)$ global entropy contraction with rate $\kappa = \frac{s-r}{s}$ for simplicial complexes with respect to homogeneous strongly log-concave distributions. 
From this, they deduced the modified log-Sobolev inequality for the down-up and up-down walks and showed rapid mixing of it.

We then show that for an arbitrary weighted simplicial complex $(\SC,w)$, one can deduce global entropy contraction from local spectral expansion whenever the marginals of the induced distributions are nicely bounded. For this, we prove a local-to-global result for entropy contraction in the spirit of \cite{AL20}. If we additionally know that the marginals are nicely bounded, we can further reduce the local entropy contraction to local spectral expansion. 


\begin{lemma}\label{lem:ent-contraction}
Let $(\SC,w)$ be a pure $n$-dimensional weighted simplicial complex. 
Suppose that $(\SC,w)$ is $(\bb_0,\dots,\bb_{n-1})$-marginally bounded and has $(\zeta_0,\dots,\zeta_{n-2})$-local spectral expansion. 
Then for all $0 \le r < s \le n$, 
$(\SC,w)$ satisfies order-$(r,s)$ global entropy contraction with rate $\kappa = \kappa(r,s)$ given as in \cref{thm:down-up-mixing}. 
\end{lemma}

\cref{thm:down-up-mixing} follows immediately from \cref{lem:ent-contraction} and \cref{fact:ent-contraction}. 
We remark that \cref{lem:ent-contraction} recovers Lemma 11 of \cite{CGM19} for simplicial complexes corresponding to discrete log-concave distributions, 
since there one has $\zeta_k = 0$ for all $k$ as shown in \cite{ALOV18ii}.




We present next the proof of \cref{lem:down-up-k}, which follows directly from \cref{lem:equiv,lem:ent-contraction}.

\begin{proof}[Proof of \cref{lem:down-up-k}]
From \cref{claim:bounded} and \cref{claim:spec-ind} we know that the weighted simplicial complex $(\SC,\mu)$ corresponding to $\mu$ is $(\bb_0,\dots,\bb_{n-1})$-marginally bounded with $\bb_k = \frac{b}{n-k}$ and has $(\zeta_0,\dots,\zeta_{n-2})$-local spectral expansion with $\zeta_k = \frac{\eta}{n-k-1}$. 
Then, \cref{lem:ent-contraction} implies that $(\SC,\mu)$ satisfies order-$(n-\ell,n)$ global entropy contraction for $\ell = \ceil{\theta n}$ with rate
\[
\kappa = \frac{\sum_{k=n-\ell}^{n-1} \Gamma_k}{\sum_{k=0}^{n-1} \Gamma_k}
\]
where $\Gamma_0 = 1$, $\Gamma_k = \prod_{j = 0}^{k-1} \alpha_j$, and $$\alpha_k = \max\left\{1-\frac{4\eta}{b^2(n-k-1)} , \frac{1-\eta/(n-k-1)}{4+2\log((n-k)(n-k-1)/(2b^2))} \right\}.$$ 
Define an integer $R = \ceil{\frac{4\eta}{b^2}}$ and observe that $n \ge \ell \ge \theta n \ge 2R$ by our assumption that $n \ge \frac{2}{\theta}(\frac{4\eta}{\bb^2}+1)$. 
Thus, we have
\[
\alpha_k \ge \hat{\alpha}_k := \max\left\{1-\frac{R}{n-k-1} , 0 \right\}. 
\]
The use of $\hat{\alpha}_k$ instead of $\alpha_k$ is only to simplify the calculation later, where we get a telescoping product with integral $R$. 
Notice that $\kappa$, when viewed as a function of $\alpha_k$'s, is monotone increasing with each $\alpha_k$. 
Thus, to lower bound $\kappa$, we can plug in the lower bounds $\hat{\alpha}_k$'s and get
\[
\kappa \ge \frac{\sum_{k=n-\ell}^{n-1} \hat{\Gamma}_k}{\sum_{k=0}^{n-1} \hat{\Gamma}_k}
\]
where $\hat{\Gamma}_0 = 1$ and $\hat{\Gamma}_k = \prod_{j = 0}^{k-1} \hat{\alpha}_j$ for each $k \ge 1$. 
We will show that for every $0\le k \le n-1$ one actually has
\begin{equation}\label{eq:Gamma-hat}
\hat{\Gamma}_k = \frac{(n-k-1)(n-k-2) \cdots (n-k-R)}{(n-1)(n-2) \cdots (n-R)}. 
\end{equation}
For $k = 0$ we have $\hat{\Gamma}_0 = 1$ and \cref{eq:Gamma-hat} holds. 
For $1 \le j \le n-R-2$ we have
\[
\hat{\alpha}_j = \max\left\{\frac{n-j-1-R}{n-j-1} , 0 \right\} = \frac{n-j-1-R}{n-j-1}
\]
and thus for $1\le k \le n-R-1$
\[
\hat{\Gamma}_k = \prod_{j = 0}^{k-1}\frac{n-j-1-R}{n-j-1} = \frac{(n-k-1)(n-k-2) \cdots (n-k-R)}{(n-1)(n-2) \cdots (n-R)} .
\]
Finally, since $\hat{\alpha}_j = 0$ when $n-R-1 \le j \le n-2$, we have $\hat{\Gamma}_k = 0$ for $n-R \le k \le n-1$. 
Therefore, \cref{eq:Gamma-hat} is true for all $k$. 
It then follows that
\begin{align*}
\kappa &\ge \frac{\sum_{k=n-\ell}^{n-1} (n-k-1)(n-k-2) \cdots (n-k-R)}{\sum_{k=0}^{n-1} (n-k-1)(n-k-2) \cdots (n-k-R)}\\
&= \frac{\sum_{j=0}^{\ell-1} j(j-1) \cdots (j-R+1)}{\sum_{j=0}^{n-1} j(j-1) \cdots (j-R+1)}. 
\end{align*}
The following is a standard equality which can be proved by induction:
\[
\sum_{j=0}^{N-1} j(j-1) \cdots (j-R+1) = \frac{1}{R+1} N(N-1) \cdots (N-R). 
\]
Hence, we obtain
\[
\kappa \ge \frac{\ell(\ell-1) \cdots (\ell-R)}{n(n-1) \cdots (n-R)}. 
\]
Finally, we deduce from \cref{lem:equiv} that
\[
C \le \frac{\ell}{n} \cdot \frac{1}{\kappa} 
\le \frac{(n-1) \cdots (n-R)}{(\ell-1) \cdots (\ell-R)} 
\le \left( \frac{n-R}{\ell-R} \right)^R 
\le \left( \frac{2n}{\ell} \right)^R 
\le \left( \frac{2}{\theta} \right)^{\frac{4\eta}{b^2} + 1}
\]
where we use our assumption $\ell \ge \theta n \ge 2R$. 
\end{proof}

\subsection{Wrapping up}

Combining \cref{lem:down-up-k,lem:comparison}, we establish approximate tensorization of entropy with a constant independent of $n$, when the Gibbs distribution is marginally bounded and spectrally independent. 
This is stated in the following theorem. 

\begin{theorem}\label{thm:at}
Let $\Delta \ge 3$ be an integer and $b,\eta >0$ be reals. 
Suppose that $G=(V,E)$ is an $n$-vertex graph of maximum degree at most $\Delta$ and $\mu$ is a totally-connected Gibbs distribution of some spin system on $G$. 
If $\mu$ is both $b$-marginally bounded and $\eta$-spectrally independent and $n \ge \frac{24\Delta}{\bb^2}(\frac{4\eta}{\bb^2}+1)$, 
then $\mu$ satisfies the approximate tensorization of entropy with constant
\[
C_1 = \frac{18\log(1/b)}{b^4} \left( \frac{24\Delta}{\bb^2} \right)^{\frac{4\eta}{\bb^2}+1}.
\]
\end{theorem}

\cref{thm:main} then follows immediately from \cref{thm:at,fact:at-mixing}.

Our main results \cref{thm:2-spin,thm:hard-core,thm:ising,thm:coloring,thm:matching} will follow from \cref{thm:main} by establishing marginal boundedness and spectral independence for each model.  The detailed proofs
are contained in \cref{app:proof-main}, we include here a brief sketch.
The marginal boundedness is a trivial bound. 
The spectral independence was previously established for antiferromagnetic 2-spin systems including the hard-core model and the Ising model in the whole uniqueness region \cite{ALO20,CLV20}, and for random $q$-colorings when $q$ is sufficiently large \cite{CGSV20,FGYZ20}. 
For the monomer-dimer model, spectral independence is not known previously. Following the proof strategy of \cite{CLV20} and utilizing the two-step recursion from \cite{BGKNT07}, we show the following. 

\begin{theorem}\label{thm:spec-ind-matching}
Let $\Delta \ge 3$ be an integer and $\lambda > 0$ be a real. 
Then for every graph $G = (V,E)$ of maximum degree at most $\Delta$ with $m = |E|$, the Gibbs distribution $\mu$ of the monomer-dimer model on $G$ with fugacity $\lambda$ is $\eta$-spectrally independent for $\eta = \min\wrapc{2\lambda\Delta, O(\sqrt{\lambda\Delta})}$.
\end{theorem}

\subsection{Outline of Paper}
\label{sec:paper-outline}

The remainder of the paper is organized as follows. 
In \cref{sec:preliminaries}, we collect relevant preliminaries. 
In \cref{sec:approxtensoruniformblock}, we show how to reduce approximate tensorization to uniform block factorization with linear-sized blocks; specifically, we prove \cref{lem:comparison}. 
In \cref{sec:Proof-ent-contraction}, we reduce uniform block factorization and, more generally, global entropy contraction in the setting of weighted simplicial complexes to local entropy contraction; 
we then further reduce local entropy contraction to local spectral expansion when the simplicial complexes have bounded marginals and thus prove \cref{lem:ent-contraction}. 
In \cref{sec:monomerdimerspecind}, we bound the spectral independence of the monomer-dimer model on bounded degree graphs and prove \cref{thm:spec-ind-matching}. We finish off the proofs of our main mixing time results in \cref{app:proof-main}. Finally, we conclude with some open problems in \cref{sec:conclusion}, and discuss analogous results for variance in \cref{app:var}.

\section{Preliminaries}
\label{sec:preliminaries}

In this section we review some standard definitions.

In the following definition, we assume the underlying distribution $\mu$ is fixed and omit it from the subscript. 

\begin{definition}
Let $\Omega$ be a finite set and $\mu$ be a distribution over $\Omega$. 
For all functions $f,g: \Omega \to \R$: 
\begin{enumerate}[(a)]
\item The \emph{expectation} of $f$ is defined as $\mu(f) = \sum_{x \in \Omega} \mu(x) f(x)$;
\item The \emph{variance} of $f$ is defined as $\Var(f) = \mu[(f-\mu(f))^2] = \mu(f^2) - \mu(f)^2$;
\item The \emph{covariance} of $f$ and $g$ is defined as $\Cov(f,g) = \mu[(f-\mu(f))(g-\mu(g))] = \mu(fg) - \mu(f)\mu(g)$;
\item If $f \ge 0$, the \emph{entropy} of $f$ is defined as $\Ent(f) = \mu\left[f \log\left( \frac{f}{\mu(f)}\right)\right] = \mu(f\log f) - \mu(f) \log \mu(f)$ with the convention that $0\log 0 = 0$.
\end{enumerate}
\end{definition}


For two distributions $\mu, \nu$ over a finite set $\Omega$, the Kullback–Leibler divergence (KL divergence), also called relative entropy, is defined as
\[
\kl{\nu}{\mu} = \sum_{x \in \Omega} \nu(x) \log \left( \frac{\nu(x)}{\mu(x)} \right).
\]
Let $f = \nu/\mu$ be the relative density of $\nu$ with respect to $\mu$; i.e., $f(x) = \nu(x)/\mu(x)$ for all $x \in \Omega$. 
Then $\Ent(f) = \kl{\nu}{\mu}$. 
The following is a well-known fact; see, e.g., \cite{Duchi19}. 

\begin{fact}[Donsker-Varadhan's Variational Representation]\label{lem:donskervaradhan}
For two distributions $\mu, \nu$ over a finite set $\Omega$, the KL divergence admits the following variational formula:
\begin{align*}
    \kl{\nu}{\mu} = \sup_{f :\Omega \to \R} \wrapc{ \nu(f) - \log \mu(e^f)}. 
\end{align*}
\end{fact}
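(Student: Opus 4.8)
The plan is to prove the two inequalities $\kl{\nu}{\mu} \ge \sup_f \left\{ \nu(f) - \log\mu(e^f) \right\}$ and $\kl{\nu}{\mu} \le \sup_f \left\{ \nu(f) - \log\mu(e^f) \right\}$ separately, using only the non-negativity of KL divergence (Gibbs' inequality) as an external fact.

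For the first inequality, fix an arbitrary $f : \Omega \to \R$. Since $\Omega$ is finite and $f$ is real-valued, $\mu(e^f)$ is finite, so I may introduce the tilted (Gibbs) measure $\mu_f$ defined by $\mu_f(x) = \mu(x) e^{f(x)} / \mu(e^f)$, which is a genuine probability distribution on $\Omega$. Assuming first that $\nu$ is absolutely continuous with respect to $\mu$ (so all the KL divergences below are finite), a direct computation gives the identity
\[
\kl{\nu}{\mu} - \kl{\nu}{\mu_f} = \sum_{x \in \Omega} \nu(x) \log \frac{\mu_f(x)}{\mu(x)} = \sum_{x\in\Omega} \nu(x)\bigl( f(x) - \log\mu(e^f) \bigr) = \nu(f) - \log\mu(e^f).
\]
Since $\kl{\nu}{\mu_f} \ge 0$, this yields $\kl{\nu}{\mu} \ge \nu(f) - \log\mu(e^f)$, and taking the supremum over $f$ gives the first inequality. (If instead there is $x$ with $\mu(x) = 0 < \nu(x)$, then $\kl{\nu}{\mu} = +\infty$ and the right-hand side is also $+\infty$, obtained by sending $f(x) \to +\infty$, so the inequality holds trivially.)

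For the reverse inequality it suffices to exhibit a near-optimal test function. I would take $f^\star = \log(\nu/\mu)$ on the support of $\nu$; then $\nu(f^\star) = \kl{\nu}{\mu}$ by definition, while $\sum_{x : \nu(x) > 0} \mu(x) e^{f^\star(x)} = \sum_{x : \nu(x) > 0} \nu(x) = 1$. To make $f^\star$ finite on all of $\mathrm{supp}(\mu)$, I would set its value to $-M$ on $\mathrm{supp}(\mu) \setminus \mathrm{supp}(\nu)$ and let $M \to \infty$: then $\nu(f^\star)$ is unchanged and $\mu(e^{f^\star}) \to 1$, so $\nu(f^\star) - \log\mu(e^{f^\star}) \to \kl{\nu}{\mu}$, whence $\sup_f \left\{ \nu(f) - \log\mu(e^f) \right\} \ge \kl{\nu}{\mu}$. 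Combining the two inequalities completes the proof.

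The only real subtlety — the ``main obstacle'', such as it is — is the bookkeeping around points where $\mu$ or $\nu$ vanishes: one must check that the degenerate case $\nu \not\ll \mu$ gives $+\infty$ consistently on both sides, and that the optimizer $f^\star$ can be brought into the admissible class of real-valued functions by the harmless limiting argument above. Everything else reduces to the one-line tilting identity together with Gibbs' inequality $\kl{\cdot}{\cdot} \ge 0$.
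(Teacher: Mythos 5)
The paper does not prove this fact itself; it cites it as well known (referring to Duchi's lecture notes), so there is no in-paper argument to compare against. Your proof is correct and is the standard one: the tilting identity $\kl{\nu}{\mu} - \kl{\nu}{\mu_f} = \nu(f) - \log\mu(e^f)$ plus nonnegativity of KL gives one direction, and the near-optimizer $f^\star = \log(\nu/\mu)$ (with the $-M$ truncation off $\mathrm{supp}(\nu)$ to keep $f^\star$ real-valued, and the separate check that $\nu \not\ll \mu$ makes both sides $+\infty$) gives the other.
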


We then review some standard functional inequalities, and refer to \cite{BT06,MT06} for more backgrounds. 

\begin{definition}\label{def:func-ineq}
Let $\Omega$ be a finite set and $\mu$ be a distribution over $\Omega$. 
Let $P$ denote the transition matrix of an ergodic, reversible Markov chain on $\Omega$ with stationary distribution $\mu$. 
\begin{enumerate}[(a)]
\item The Dirichlet form of $P$ is defined as for every $f, g: \Omega \to \R$, 
\[
\EE_P(f,g) = \frac{1}{2} \sum_{x,y \in \Omega} \mu(x) P(x, y) (f(x) - f(y)) (g(x) - g(y)). 
\]
In particular, if $\Omega \subseteq [q]^V$ and $P = \Pgl$ is the Glauber dynamics for $\mu$, then we can write
\[
\EE_{\Pgl}(f,g) = \frac{1}{n} \sum_{v \in V} \mu[\Cov_v(f,g)],
\]
see, e.g., \cite{MSW03,CMT15}.

\item We say the \emph{Poincar\'{e} inequality} holds with constant $\lambda$ if for every $f: \Omega \to \R$, 
\[
\lambda \, \Var(f) \le \EE_{P}(f, f).
\]
The spectral gap of $P$ is $$\lambda(P) = \inf \Big\{ \frac{\EE_{P}(f, f)}{\Var(f)} \Bigm| f: \Omega \to \R, \Var(f) \neq 0 \Big\}.$$

\item We say the \emph{standard log-Sobolev inequality} holds with constant $\rho$ if for every $f: \Omega \to \R_{\ge 0}$, 
\[
\rho \, \Ent(f) \le \EE_P(\sqrt{f}, \sqrt{f}). 
\]
The standard log-Sobolev constant of $P$ is $$\rho(P) = \inf \Big\{ \frac{\EE_P(\sqrt{f}, \sqrt{f})}{\Ent(f)} \Big| f: \Omega \to \R_{\ge 0}, \Ent(f) \neq 0 \Big\}.$$

\item We say the \emph{modified log-Sobolev inequality} holds with constant $\rho_0$ if for every $f: \Omega \to \R_{\ge 0}$, 
\[
\rho_0 \, \Ent(f) \le \EE_P(f, \log f). 
\]
The modified log-Sobolev constant of $P$ is $$\rho_0(P) = \inf \Big\{ \frac{\EE_P(f, \log f)}{\Ent(f)} \Bigm| f: \Omega \to \R_{\ge 0}, \Ent(f) \neq 0 \Big\}.$$

\item We say \emph{the relative entropy decays} with rate $\alpha$ if for every distribution $\nu$ over $\Omega$, 
\[
\kl{\nu P}{\mu} \le \left( 1-\alpha \right) \kl{\nu}{\mu}.
\]
\end{enumerate}
\end{definition}

Next, we consider the case that $\Omega \subseteq [q]^V$ for a finite set $V$. 
Let $S \subseteq V$ and $\tau \in \Omega_{V \setminus S}$. 
Recall that for every function $f: \Omega \to \R_{\ge 0}$, we write $\mu_S^\tau(f)$ and $\Ent_S^\tau(f) = \Ent_{\mu_S^\tau}(f)$ for the expectation and entropy of $f$ under the conditional distribution $\mu_S^\tau(\cdot) = \mu(\sigma_S \seq \cdot \mid \sigma_{V \setminus S} = \tau)$, where $f = f_\tau$ is understood as a function of the configuration on $S$ with $\tau$ fixed outside $S$. 
We think of $\mu_S^\tau(f)$ and $\Ent_S^\tau(f)$ as functions of $\tau$, and we will use, for example, $\Ent[\mu_S(f)]$ to represent the entropy of $\mu_S^\tau(f)$ under $\mu$, and $\mu[\Ent_S(f)]$ for the expectation of $\Ent_S^\tau(f)$. 
We give below a useful property of the expectation and entropy; see, e.g., \cite{MSW03} for proofs. 

\begin{fact}
\label{fact:ent-decomp}
Let $S \subseteq V$ and $\tau \in \Omega_{V \setminus S}$. 
For every function $f: \Omega \to \R_{\ge 0}$, we have
\[
\mu(f) = \mu[\mu_S(f)] \qquad\text{and}\qquad \Ent(f) = \mu[\Ent_S(f)] + \Ent[\mu_S(f)]. 
\] 
\end{fact}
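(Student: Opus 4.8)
The plan is to derive both identities directly from the definitions; nothing beyond the law of total probability and one telescoping cancellation is needed. Write $\mu_{V\setminus S}$ for the marginal of $\mu$ on $V\setminus S$, so that $\mu(\sigma\cup\tau) = \mu_{V\setminus S}(\tau)\,\mu_S^\tau(\sigma)$ for every $\tau\in\Omega_{V\setminus S}$ and $\sigma\in\Omega_S^\tau$, and recall that $\mu[\,\cdot\,]$ denotes expectation of a function of $\tau$ under $\mu_{V\setminus S}$.

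First I would prove the expectation identity: grouping feasible configurations by their restriction to $V\setminus S$ and then to $S$,
\[
\mu(f) = \sum_{\tau\in\Omega_{V\setminus S}} \mu_{V\setminus S}(\tau) \sum_{\sigma\in\Omega_S^\tau} \mu_S^\tau(\sigma)\,f(\sigma\cup\tau) = \sum_{\tau} \mu_{V\setminus S}(\tau)\,\mu_S^\tau(f) = \mu[\mu_S(f)].
\]
For the entropy identity I would use the form $\Ent_\nu(g) = \nu(g\log g) - \nu(g)\log\nu(g)$ (with the convention $0\log 0 = 0$). Applying it to $f$ on each conditional space and averaging over $\tau$ yields
\[
\mu[\Ent_S(f)] = \sum_{\tau} \mu_{V\setminus S}(\tau)\,\mu_S^\tau(f\log f) - \sum_{\tau} \mu_{V\setminus S}(\tau)\,\mu_S^\tau(f)\log\mu_S^\tau(f),
\]
while applying it to the function $\tau\mapsto\mu_S^\tau(f)$ on $\Omega_{V\setminus S}$ yields
\[
\Ent[\mu_S(f)] = \sum_{\tau} \mu_{V\setminus S}(\tau)\,\mu_S^\tau(f)\log\mu_S^\tau(f) - \mu[\mu_S(f)]\log\mu[\mu_S(f)].
\]
Adding the two displays, the sums $\sum_{\tau} \mu_{V\setminus S}(\tau)\,\mu_S^\tau(f)\log\mu_S^\tau(f)$ cancel; invoking the expectation identity once for $f$ (so that $\mu[\mu_S(f)] = \mu(f)$) and once more for $f\log f$ turns what remains into $\mu(f\log f) - \mu(f)\log\mu(f) = \Ent(f)$, which is the claim.

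The computation is purely algebraic — no structural property of $\mu$, such as marginal boundedness or spectral independence, is used. The only points requiring a moment's care, and the closest thing to an obstacle here, are bookkeeping: restricting every sum to feasible configurations so that each conditional law $\mu_S^\tau$ is well-defined, and checking that $0\log 0 = 0$ is applied consistently (which is harmless since $f\ge 0$, and $\mu_S^\tau(f) = 0$ forces $f$ to vanish on the corresponding fiber).
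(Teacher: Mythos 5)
Your proof is correct, and is the standard chain-rule computation for entropy: the paper itself does not prove \cref{fact:ent-decomp} but simply cites \cite{MSW03}, and the argument you give is precisely what one finds there. The only point worth flagging is the one you already noted yourself — the bookkeeping that $\mu_S^\tau(f)=0$ forces $f$ to vanish on the fiber over $\tau$ so that the $0\log 0=0$ convention is applied consistently — and you handle it correctly.
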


\subsection{Implications of Approximate Tensorization}

We summarize here a few corollaries of approximate tensorization of entropy for arbitrary distributions over discrete product spaces. 

\begin{fact}\label{fact:at-mixing}
Let $V$ be a set of size $n$ and $\mu$ be a distribution over $[q]^V$. 
If $\mu$ satisfies the approximate tensorization of entropy with constant $C_1$, then the Glauber dynamics for $\mu$ satisfies all of the following:
\begin{enumerate}[(1)]
\item \label{item:poincare} 
The Poincar\'{e} inequality holds with constant $\lambda = \frac{1}{C_1 n}$; 

\item \label{item:MLSI} 
The modified log-Sobolev inequality holds with constant $\rho_0 = \frac{1}{C_1 n}$; 

\item \label{item:RED} 
The relative entropy decays with rate $\alpha = \frac{1}{C_1 n}$; 

\item \label{item:mixing} 
The mixing time of the Glauber dynamics satisfies
\[
T_{\mathrm{mix}}(\Pgl, \eps) \le \ceil{C_1 n \left( \log \log \frac{1}{\mu_{\min}} + \log \frac{1}{2\eps^2} \right)}
\]
where $\mu_{\min} = \min_{\sigma \in \Omega} \mu(\sigma)$; 
If furthermore $\mu$ is $\bb$-marginally bounded, then we have $\mu_{\min} \ge b^n$ and thus 
\[
T_{\mathrm{mix}}(\Pgl, \eps) \le \ceil{C_1 n \left( \log n + \log \log \frac{1}{b} + \log \frac{1}{2\eps^2} \right)};
\]

\item \label{item:concentration}
For every $f:\Omega \to \R$ which is $c$-Lipschitz with respect to the Hamming distance on $[q]^{V}$ and every $a \ge 0$, we have the concentration inequality
\[
	\Pr_{\sigma \sim \mu} \wrapb{ \left| f(\sigma) - \mu(f) \right| \ge a } \le 2 \, \exp\wrapp{-\frac{a^2}{2c^{2} C_{1}n}}; 
\]

\item \label{item:SLSI} 
If furthermore $\mu$ is $\bb$-marginally bounded, then the standard log-Sobolev inequality holds with constant $\rho = \frac{1-2b}{\log(1/b-1)} \cdot \frac{1}{C_1 n}$ when $\bb < \frac{1}{2}$, or $\rho = \frac{1}{2C_1 n}$ when $\bb = \frac{1}{2}$. 
\end{enumerate}
\end{fact}


The $\ell$-uniform block factorization of entropy implies similar results for the heat-bath block dynamics that updates a random subset of vertices of size $\ell$ in each step.

The implications in \cref{fact:at-mixing} are all known and have been widely used, often implicitly. 
In the proof below, we give references where explicit statements or direct proofs are available. 

\begin{proof}[Proof of \cref{fact:at-mixing}]
(\ref{item:poincare}) and (\ref{item:MLSI}) are proved in \cite[Proposition 1.1]{CMT15}. 
To show (\ref{item:RED}), let $P_v$ be the transition matrix corresponding to updating the spin at $v$ conditioned on all other vertices. 
Thus, we have the decomposition
\[
\Pgl = \frac{1}{n}\sum_{v\in V} P_v.
\]
Let $f = \nu/\mu$ be the relative density of $\nu$ with respect to $\mu$. Then we get
\begin{align*}
\kl{\nu \Pgl}{\mu} 
&= D_{\mathrm{KL}} \left(\frac{1}{n} \sum_{v\in V} \nu P_v \Biggm \Vert \mu \right) 
\le \frac{1}{n} \sum_{v\in V} \kl{\nu P_v}{\mu}
= \frac{1}{n} \sum_{v\in V} \Ent(P_v f)\\
&= \frac{1}{n} \sum_{v\in V} \Ent[\mu_v (f)] 
= \frac{1}{n} \sum_{v\in V} \Ent(f) - \mu[\Ent_v(f)] 
\\
&= \Ent(f) - \frac{1}{n} \sum_{v\in V} \mu[\Ent_v(f)] \le \left( 1 - \frac{1}{C_1 n} \right) \Ent(f) 
\\&=  \left( 1 - \frac{1}{C_1 n} \right) \kl{\nu}{\mu}. 
\end{align*}
(\ref{item:mixing}) can be deduced from (\ref{item:RED}) as shown by \cite[Lemma 2.4]{BCPSV20}; see also \cite[Corollary 2.8]{BT06} for the continuous time setting. 
(\ref{item:concentration}) follows from (\ref{item:MLSI}) and \cite[Lemma 15]{CGM19}. 
Finally, (\ref{item:SLSI}) follows by an application of \cite[Theorem A.1]{DSC96}. 
\end{proof}


\section{Approximate Tensorization via Uniform Block Factorization}\label{sec:approxtensoruniformblock}


Fix a graph $G$ on $n$ vertices of maximum degree at most $\Delta$, and assume that $\mu$ is a $\bb$-marginally bounded Gibbs distribution defined on $G$ satisfying the $\ceil{\theta n}$-uniform block factorization of entropy with constant $C$ where $\theta \le b^2/(4e\Delta)$; 
i.e., for $\ell = \ceil{\theta n}$ and all $f: \Omega \to \R_{\ge 0}$ it holds that
\[
\frac{\ell}{n} \, \Ent(f) \le C \cdot \frac{1}{\binom{n}{\ell}} \sum_{S \in \binom{V}{\ell}} \mu[\Ent_S(f)]. 
\]
We will show that $\mu$ also satisfies the approximate tensorization of entropy with constant $\Theta(C)$, which establishes \cref{lem:comparison}.

The intuition behind our approach is that for $\ell$ as large as $\theta n$, if one picks a uniformly random subset $S \subseteq V$ satisfying $|S| = \ell$, then the induced subgraph $G[S]$ of $G$ on vertex set $S$ is disconnected into many small connected components, each of which has constant size in expectation and at most $O(\log n)$ with high probability. 
Since the conditional Gibbs distribution $\mu_S^\tau$ is a product distribution of each connected component, we can use entropy factorization for product distributions to reduce approximate tensorization on $G$ to that on small connected subgraphs of $G$. 
This allows us to upper bound the optimal approximate tensorization constant with a converging series. 

Towards fulfilling this intuition, for any $S \subseteq V$, let $\mathcal{C}(S)$ denote the set of connected components of $G[S]$, with each connected component being viewed as a subset of vertices of $S$. 
Note that $\mathcal{C}(S)$ is a partition of $S$. 
For any $v \in S$, let $S_{v}$ denote the (unique) connected component in $\mathcal{C}(S)$ containing~$v$; for $v \notin S$, take $S_{v} = \emptyset$. 
The following is a well-known fact regarding the factorization of entropy for product measures; see, e.g., \cite{Cesi01,CMT15}. 
\begin{lemma}
\label{lem:prod-factor}
For every subset $S \subseteq V$, every boundary condition $\tau \in \Omega_{V \setminus S}$, and every function $f: \Omega_S^\tau \to \R_{\ge 0}$, we have
\[
\Ent_S^\tau(f) \le \sum_{U \in \mathcal{C}(S)} \mu_S^\tau[\Ent_U(f)]. 
\]
\end{lemma}
Recall that $\Ent_U(f) = \Ent_U^{\phi}(f)$ is regarded as a function of the boundary condition $\phi \in \Omega_{S \setminus U}^\tau$ on $S \setminus U$, and $\mu_S^\tau[\Ent_U(f)]$ is the expectation of it under the conditional Gibbs measure $\mu_S^\tau$. 

We also need the following crude exponential upper bound on the approximate tensorization constant for a Gibbs distribution with bounded marginals. 

\begin{lemma}\label{lem:at-crude}
If $\mu$ is $\bb$-marginally bounded, 
then for every subset $U \subseteq V$, every boundary condition $\xi \in \Omega_{V \setminus U}$, and every function $f: \Omega_U^\xi \to \R_{\ge 0}$, we have
\[
\Ent_U^\xi(f) \le \frac{3|U|^2 \log(1/b)}{2b^{2|U|+2}} \sum_{v \in U} \mu_U^\xi[\Ent_v(f)]. 
\]
\end{lemma}

Finally, the lemma below shows that when a uniformly random and sufficiently small subset of vertices is selected, the size of the connected component containing a given vertex is small with high probability.
\begin{lemma}
\label{lem:probsizecomponent}
Let $G=(V,E)$ be an $n$-vertex graph of maximum degree at most $\Delta$.
Then for every $k \in \N^+$ we have 
\begin{align*}
    \PP_S(|S_{v}| = k) \leq \frac{\ell}{n} \cdot (2e\Delta \theta)^{k - 1},
\end{align*}
where the probability $\PP$ is taken over a uniformly random subset $S \subseteq V$ of size $\ell = \ceil{\theta n}$.
\end{lemma}




The proofs of \cref{lem:probsizecomponent,lem:at-crude} can be found in \cref{subsec:at-proof-lemmas}. 
We are now ready to prove \cref{lem:comparison}. 

\begin{proof}[Proof of \cref{lem:comparison}]
Combining everything in this section, we deduce that
\begin{align*}
    &\Ent(f) \\
    &\leq C \cdot \frac{n}{\ell} \cdot \frac{1}{\binom{n}{\ell}} \sum_{S \in \binom{V}{\ell}} \mu[\Ent_S(f)] \tag*{($\ell$-uniform block factorization)}\\
    &\leq C \cdot \frac{n}{\ell} \cdot \frac{1}{\binom{n}{\ell}} \sum_{S \in \binom{V}{\ell}} \sum_{U \in \mathcal{C}(S)} \mu[\Ent_U(f)] \tag*{(\cref{lem:prod-factor})} \\
    &\leq C \cdot \frac{n}{\ell} \cdot \frac{1}{\binom{n}{\ell}} \sum_{S \in \binom{V}{\ell}} \sum_{U \in \mathcal{C}(S)} \frac{3|U|^2 \log(1/b)}{2b^{2|U|+2}} \sum_{v \in U} \mu[\Ent_v(f)] \tag*{(\cref{lem:at-crude})} \\
    &= \frac{3C\log(1/b)}{2b^4}  \cdot \frac{n}{\ell} \,\sum_{v \in V} \mu[\Ent_v(f)]    \sum_{k=1}^{\ell} \PP_S(|S_v| = k) \cdot \frac{k^2}{b^{2(k-1)}}   \tag*{(rearranging)} \\
    &\le \frac{3C\log(1/b)}{2b^4}  \,\sum_{v \in V} \mu[\Ent_v(f)]    \sum_{k=1}^{\ell} k^2 \left( \frac{2e\Delta \theta}{\bb^2} \right)^{k-1}   \tag*{(\cref{lem:probsizecomponent})} \\
    &\le \frac{3C\log(1/b)}{2b^4}  \,\sum_{k=1}^{\ell} \frac{k^2}{2^{k-1}} \sum_{v \in V} \mu[\Ent_v(f)] \tag*{($\theta \le \frac{b^2}{12\Delta}$)}\\
    &\le \frac{18C\log(1/b)}{b^4} \,\sum_{v \in V} \mu[\Ent_v(f)].  \tag*{($\sum_{k=1}^\infty \frac{k^2}{2^{k-1}} = 12$)} \\
\end{align*}
This establishes the lemma. 
\end{proof}

\subsection{Proof of Technical Lemmas}
\label{subsec:at-proof-lemmas}

We first prove \cref{lem:at-crude} which gives a crude bound on the approximate tensorization constant for any subset and boundary condition. 

\begin{proof}[Proof of \cref{lem:at-crude}]
Fix a subset $U \subseteq V$ of size $k \ge 1$ and some boundary condition $\xi \in \Omega_{V \setminus U}$. 
Let $C_1 = C_1(U,\xi)$ be the optimal constant of approximate tensorization for $\mu_U^\xi$; 
hence, for every function $f: \Omega_U^\xi \to \R_{\ge 0}$ one has
\[
\Ent_U^\xi(f) \le C_1 \sum_{v\in U} \mu_U^\xi[\Ent_v(f)]. 
\]
Let $\lambda = \lambda(U,\xi)$ be the spectral gap of the Glauber dynamics for $\mu_U^\xi$, and let $\rho = \rho(U,\xi)$ be the standard log-Sobolev constant. 
Thus, for every function $f: \Omega_U^\xi \to \R_{\ge 0}$ it holds that
\begin{align*}
\lambda \, \Var_U^\xi(f) &\le \frac{1}{k} \sum_{v \in U} \mu_U^\xi[\Var_v(f)];\\
\rho \, \Ent_U^\xi(f) &\le \frac{1}{k} \sum_{v \in U} \mu_U^\xi[\Var_v(\sqrt{f})]. 
\end{align*}
Since $\Var_v(\sqrt{f}) \le \Ent_v(f)$ (see \cite{Sal97}), we have
\begin{equation}\label{eq:cat-slsc}
\Cat \le \frac{1}{\rho k}; 
\end{equation}
see also \cite[Proposition 1.1]{CMT15}. 
Next, \cite[Corollary A.4]{DSC96} gives a comparison between the standard log-Sobolev constant and the spectral gap:
\[
\rho \ge \frac{(1-2\mu^*)}{\log(1/\mu^*-1)} \lambda
\]
where 
$ \mu^* = \min_{\sigma \in \Omega_U^\xi} \mu_U^\xi(\sigma) $. 
Since $\mu$ is $\bb$-marginally bounded, we have $\mu^* \ge b^k$. 
Also, notice that $|\Omega_U^\xi| = 1$ and $|\Omega_U^\xi| = 2$ corresponds to trivial cases where we have $\Cat \le 1$, so we may assume that $|\Omega_U^\xi| \ge 3$ which makes $\mu^* \le 1/3$. 
It follows that
\begin{equation}\label{eq:slsc-gap}
\rho \ge \frac{\lambda}{3k \log(1/b)}. 
\end{equation}
Finally, Cheeger's inequality yields
\begin{equation}\label{eq:gap-cond}
\lambda \ge \frac{\Phi^2}{2}
\end{equation}
where $\Phi$ is the conductance of the Glauber dynamics defined by
\begin{align*}
\Phi &= \min_{\substack{\Omega_0 \subseteq \Omega_U^\xi\\ \mu_U^\xi(\Omega_0) \le \frac{1}{2}}} \Phi_{\Omega_0} \\
\Phi_{\Omega_0} &= \frac{\Pgl(\Omega_0, \Omega_U^\xi \setminus \Omega_0)}{\mu_U^\xi(\Omega_0)} = \frac{1}{\mu_U^\xi(\Omega_0)} \sum_{\sigma \in \Omega_0} \sum_{\tau \in \Omega_U^\xi \setminus \Omega_0} \mu_U^\xi(\sigma) \Pgl(\sigma, \tau). 
\end{align*}
Our assumption that $\mu$ is totally-connected guarantees $\Phi_{\Omega_0} > 0$ for every $\Omega_0 \subseteq \Omega_U^\xi$ with $\mu_U^\xi(\Omega_0) \le \frac{1}{2}$. 
Furthermore, since $\mu$ is $\bb$-marginally bounded, for every $\sigma \in \Omega_0$ and $\tau \in \Omega_U^\xi \setminus \Omega_0$ such that $\Pgl(\sigma, \tau)>0$ we have
\[
\mu_U^\xi(\sigma) \Pgl(\sigma, \tau) \ge b^k \cdot \frac{b}{k} = \frac{b^{k+1}}{k}. 
\]
This gives 
\begin{equation}\label{eq:cond-bound}
\Phi \ge \frac{2b^{k+1}}{k}.
\end{equation}
Combining \cref{eq:cat-slsc,eq:slsc-gap,eq:gap-cond,eq:cond-bound}, we finally conclude that
\[
C_1 \le \frac{3k^2 \log(1/b)}{2b^{2k+2}},
\]
as claimed. 
\end{proof}

Next we establish \cref{lem:probsizecomponent}. We use the following lemma concerning the number of connected induced subgraphs in a bounded degree graph. 
\begin{lemma}[{\cite[Lemma 2.1]{BCKL13}}]\label{lem:ctconnected}
Let $G=(V,E)$ be a graph with maximum degree at most $\Delta$, and $v \in V$. Then for every $k \in \N^+$, the number of connected induced subgraphs of $G$ containing $v$ with $k$ vertices is at most $(e\Delta)^{k-1}$.
\end{lemma}

We then prove \cref{lem:probsizecomponent}. 
\begin{proof}[Proof of \cref{lem:probsizecomponent}]
If $\mathcal{A}_{v}(k)$ denotes the collection of subsets of vertices $U \subseteq V$ such that $|U| = k$, $v \in U$, and $G[U]$ is connected, then by the union bound, we have
\begin{align*}
    \PP_S(|S_{v}| = k) &\leq \PP_S(\exists U \in\mathcal{A}_{v}(k) : U \subseteq S) \\
    &\leq \sum_{U \in \mathcal{A}_{v}(k)} \PP_{S}(U \subseteq S) \\
    &= |\mathcal{A}_{v}(k)| \cdot \frac{\ell}{n} \cdot \frac{\ell-1}{n-1} \dotsb \frac{\ell - k + 1}{n - k + 1} \\
    &\leq |\mathcal{A}_{v}(k)| \cdot \frac{\ell}{n} \cdot \wrapp{\frac{\ell-1}{n-1}}^{k-1}. 
\end{align*}
We may assume that $n \ge 2$ (when $n=1$ the lemma holds trivially), and thus
\[
\frac{\ell-1}{n-1} \le \frac{\theta n}{n-1} \le 2\theta.  
\]
The lemma then follows immediately from $|\mathcal{A}_{v}(k)| \leq (e\Delta)^{k-1}$ by \cref{lem:ctconnected}. 
\end{proof}

\section{Global Entropy Contraction via Local Spectral Expansion}
\label{sec:Proof-ent-contraction}

In this section, we prove \cref{lem:ent-contraction} by establishing global entropy contraction when the simplicial complex is a local spectral expander. 
We give preliminaries in \cref{subsec:SC-preliminaries} for simplicial complexes. 
In \cref{subsec:local-to-global} we present a very general local-to-global scheme for entropy contraction. 
Finally, in \cref{subsec:local-ent-spec} we reduce local entropy contraction to local spectral expansion. 

\subsection{Preliminaries for Simplicial Complexes}
\label{subsec:SC-preliminaries}

Following the definitions and notations from \cref{subsec:SC}, here we give a few more definitions, examples, and references, emphasizing the global and local walks in weighted simplicial complexes. 
The proofs of \cref{claim:bounded,lem:equiv} are included in \cref{subsubsec:proof-claim}.

Consider a pure $n$-dimensional weighted simplicial complex $(\SC,w)$. 
We say $\SC$ is \textit{$n$-partite} if the ground set $\mathcal{U}$ of $\SC$ admits a partition $\mathcal{U} = \mathcal{U}_{1} \cup \dots \cup \mathcal{U}_{n}$ such that every face of $\SC$ has at most one element from each part $\mathcal{U}_{1},\dots,\mathcal{U}_{n}$. 
For a distribution $\mu$ over $[q]^V$ where $|V| = n$, the corresponding weighted simplicial complex $(\SC,\mu)$ is $n$-partite.

\subsubsection{Global Walks}
For $0\le k \le n$, the distribution $\pi_k$ on $\SC(k)$ is given by for every $\tau \in \SC(k)$, 
\begin{align*}
    \pi_{k}(\tau) = \frac{1}{\binom{n}{k}} \sum_{\sigma \in \SC(n) : \sigma \supseteq \tau} \pi_{n}(\sigma)
\end{align*}
Moreover, we have the following equality:
\[
\pi_{k}(\tau) = \frac{1}{k+1} \sum_{\sigma \in \SC(k+1) : \sigma \supseteq \tau} \pi_{k+1}(\sigma). 
\] 
We may define simple random walks on $\SC(k)$ for sampling from the distributions $\pi_{k}$ given by ``down-up'' and ``up-down'' motions in $\SC$. 
These walks were first introduced in \cite{KM17}, and further studied in \cite{DK17, KO18} in the context of high-dimensional expanders. 
Recent works \cite{ALOV18ii, CGM19, ALO20, AL20, CLV20, FGYZ20, CGSV20} have leveraged these walks to study mixing times of Markov chains.

For each $0 \leq k \leq n-1$, define the \textit{order-$k$ (global) up operator} $P_{k}^{\uparrow} \in \R^{\SC(k) \times \SC(k+1)}$ and the \textit{order-$(k+1)$ (global) down operator} $P_{k+1}^{\downarrow} \in \R^{\SC(k+1) \times \SC(k)}$ as the row stochastic matrices with the following entries:
\begin{gather*}
    P_{k}^{\uparrow}(\tau, \sigma) = \begin{cases}
        \frac{\pi_{k+1}(\sigma)}{(k+1)  \pi_{k}(\tau)}, &\quad\text{if } \tau \subseteq \sigma \\
        0, &\quad\text{o.w.}
    \end{cases} \\
    P_{k+1}^{\downarrow}(\sigma,\tau) = \begin{cases}
        \frac{1}{k+1},&\quad\text{if } \tau \subseteq \sigma \\
        0, &\quad\text{o.w.}
    \end{cases}
\end{gather*}
In words, the action of $P_{k}^{\uparrow}$ can be described as starting with $\tau \in \SC(k)$ and adding a random element $i \notin \tau$ according to the conditional distribution $\pi_{\tau,1}(i) = \frac{\pi_{k+1}(\tau \cup \{i\})}{(k+1) \pi_{k}(\tau)}$. 
In particular, we have $\pi_k P_{k}^{\uparrow} = \pi_{k+1}$. 
Similarly, the action of $P_{k+1}^{\downarrow}$ can be described as starting with $\sigma \in \SC(k+1)$ and removing a uniformly random element, and we have $\pi_{k+1} P_{k+1}^{\downarrow} = \pi_k$. 

From here, for $0\le j < k \le n$ we may define the order-$(j,k)$ (global) up-down walk on $\SC(j)$ as $P_{j,k}^\wedge = P_j^\uparrow \cdots P_{k-1}^\uparrow P_k^\downarrow \cdots P_{j+1}^\downarrow$ and the order-$(k,j)$ (global) down-up walk on $\SC(k)$ as $P_{k,j}^\vee = P_k^\downarrow \cdots P_{j+1}^\downarrow P_j^\uparrow \cdots P_{k-1}^\uparrow$. 
The stationary distributions of $P_{j,k}^\wedge$ and $P_{k,j}^\vee$ are $\pi_j$ and $\pi_k$ respectively. 

For example, for any distribution $\mu$ over $[q]^V$, the Glauber dynamics for sampling from $\mu$ is precisely the order-$(n,n-1)$ down-up walk on the corresponding weighted simplicial complex $(\SC,\mu)$, and the heat-bath block dynamics which updates a random subset of $\ell$ vertices in each step is precisely the order-$(n,n-\ell)$ down-up walk. 

Let $0<k \le n$ and let $f^{(k)} : \SC(k) \to \R$ be an arbitrary function. Then for each $0\le j < k$ we define the function $f^{(j)}: \SC(j) \to \R$ by $f^{(j)} = P_j^\uparrow \cdots P_{k-1}^\uparrow f^{(k)}$. One can think of $f^{(j)}$ as the projection of $f^{(k)}$ onto $\SC(j)$. 

\subsubsection{Local Walks}

One of the beautiful properties of simplicial complexes is that they facilitate a nice decomposition of the global walks into local walks. 
Recall that for every face $\tau \in \SC(k)$ there is a $(n-k)$-dimensional weighted simplicial subcomplex $(\SC_{\tau}, w_{\tau})$ where $\SC_{\tau} = \{\xi  \subseteq \mathcal{U} \setminus \tau: \tau \cup \xi \in \SC \}$ and $w_\tau(\xi) = w(\tau \cup \xi)$ for each $\xi \in \SC_\tau$. 
The subcomplex is known as the \textit{link} of $\SC$ with respect to the face $\tau$. 
The induced distribution $\pi_{\tau,n-k}$ over $\SC_{\tau}(n-k)$ can be thought of as the distribution of a face $\sigma$ obtained from $\pi_{n}$ conditioned on the event that $\tau \subseteq \sigma$.

Again, consider as an example the weighted simplicial complex $(\SC,\mu)$ with respect to a distribution $\mu$ over $[q]^V$. 
For every $U \subseteq V$ of size $k$ and every $\tau \in \Omega_U$, one can check that the distribution $\pi_{(U,\tau), n-k}$ over maximal faces of the link $(\SC_{(U,\tau)},w_{(U,\tau)})$ is exactly the conditional distribution $\mu_{V \setminus U}^\tau$ over configurations on $V \setminus U$.

With the notion of link, we may now define the local walks rigorously. 
For each $0 \leq k \leq n-2$ and $\tau \in \SC(k)$, we define the \textit{local walk} at $\tau$ as $P_{\tau} = 2 P_{\tau,1,2}^{\wedge} - I$ where $P_{\tau,1,2}^{\wedge} = P_{\tau,1}^{\uparrow}P_{\tau,2}^{\downarrow}$ is the order-$(1,2)$ up-down walk on the link $(\SC_{\tau}, w_{\tau})$. 
More specifically, for $i,j \in \SC_{\tau}(1)$, we have
\begin{align*}
    P_{\tau}(i,j) = \begin{cases}
        \pi_{\tau \cup \{i\}, 1} (j) = \frac{\pi_{\tau,2}(\{i,j\})}{2 \pi_{\tau,1}(i)}, &\quad\text{if } \{i,j\} \in \SC_{\tau}(2) \\
        0, &\quad\text{o.w.}
    \end{cases}
\end{align*}
Note that the stationary distribution of $P_{\tau}$ is $\pi_{\tau,1}$. 
One way to think of $P_{\tau}$ is being the random walk matrix corresponding to the following weighted graph: take the vertices of the weighted graph to be the elements of $\SC_{\tau}(1)$, and take the edges to be the vertex-pairs in $\SC_{\tau}(2)$, with weights given by $\pi_{\tau,2}$.

We can establish mixing properties of global walks by decomposing them into local walks, which are usually easier to study; 
we refer the readers to \cite{KO18, ALOV18ii, CGM19, AL20} for more details. 

Let $0 < k \le n$ and let $f: \SC(k) \to \R$ be an arbitrary function. 
Then for every $0\le j < k$ and every $\tau \in \SC(j)$, we define the function $f_\tau^{(k-j)}: \SC_\tau(k-j) \to \R$ by $f_\tau^{(k-j)} (\xi) = f^{(k)} (\tau \cup \xi)$ for each $\xi \in \SC_\tau(k-j)$. 
One can think of $f_\tau^{(k-j)}$ as the restriction of $f^{(k)}$ to the link $\SC_\tau(k-j)$.

Finally, we mention that the notion of global and local walks has a similar flavor as the restriction-projection framework used in \cite{MR00, MR02, JSTV04} for decomposition of general Markov chains. 
For instance, the analog of the ``restriction chains'' in the setting of simplicial complexes are the Glauber dynamics of the conditional distributions. 
However, we note that these analogous ``restriction chains'' significantly overlap, in contrast to the partitioning condition in \cite{MR00, MR02, JSTV04}.

\subsubsection{Proofs of \texorpdfstring{\cref{claim:bounded,lem:equiv}}{Claim 1.11 and Lemma 2.6}}

\label{subsubsec:proof-claim}

We present here the proofs of \cref{claim:bounded,lem:equiv}.

\begin{proof}[Proof of \cref{claim:bounded}]
For all $0\le k \le n-1$, every $\tau \in \SC(k)$, and every $i \in \SC_\tau(1)$, we have
\[
\pi_{\tau,1}(i) = \frac{\pi_{k+1}(\tau \cup \{i\})}{\sum_{j \in \SC_\tau(1)} \pi_{k+1}(\tau \cup \{j\})} = \frac{\pi_{k+1}(\tau \cup \{i\})}{(k+1) \pi_k(\tau)}. 
\]
Now, consider the weighted simplicial complex $(\SC,\mu)$ corresponding to a distribution $\mu$ over $[q]^V$. 
Every face in $\SC(k)$ is in the form $(U,\tau)$ where $U \subseteq V$, $|U| = k$, and $\tau \in \Omega_U$, and every element in $\SC_\tau(1)$ is of the form $(v,i)$ for some $v \in V\setminus U$ and $i\in \Omega_v^\tau$. 
Hence, the equality above implies that
\begin{align*}
\pi_{(U,\tau),1}\big((v,i)\big) &= \frac{\pi_{k+1}\big( (U \cup \{v\}, \tau \cup \{i\}) \big)}{(k+1) \pi_k\big( (U,\tau) \big)}\\
&= \frac{\frac{1}{\binom{n}{k+1}} \, \mu(\sigma_U \seq \tau, \sigma_v \seq i)}{(k+1) \cdot \frac{1}{\binom{n}{k}} \, \mu(\sigma_U \seq \tau)}\\
&= \frac{1}{n-k} \, \mu(\sigma_v \seq i \mid \sigma_U = \tau)\\
&\ge \frac{b}{n-k}
\end{align*}
where the last inequality follows from the marginal boundedness of $\mu$. This shows the claim. 
\end{proof}

From the proof we see that $\pi_{(U,\tau),1}\big((v,i)\big) \le \frac{1}{n-k}$ always holds. 
This in fact is true for all weighted simplicial complexes. 

To prove \cref{lem:equiv}, we need the following entropy decomposition result in \cite{CGM19}.
\begin{lemma}[Entropy Decomposition in Simplicial Complexes, \cite{CGM19}]
\label{lem:entropydecomp}
Let $(\SC,w)$ be a pure $n$-dimensional weighted simplicial complex, and let $1 \leq j < k \leq n$. 
Then the following decomposition of entropy holds: 
for every $f^{(k)} : \SC(k) \rightarrow \R_{\geq0}$, 
\begin{align*}
    \Ent_{\pi_{k}}(f^{(k)}) &= \Ent_{\pi_{j}}(f^{(j)}) + \sum_{\tau \in \SC(j)} \pi_{j}(\tau) \cdot \Ent_{\pi_{\tau,k-j}}(f_{\tau}^{(k-j)}). 
\end{align*}
\end{lemma}

\begin{proof}
Since $\pi_{k}(f^{(k)}) = \pi_{j}(f^{(j)})$ and the equality is scale invariant, we may assume without loss that $\pi_{k}(f^{(k)}) = \pi_{j}(f^{(j)}) = 1$; hence $\Ent_{\pi_{k}}(f^{(k)}) = \pi_{k} (f^{(k)}\log f^{(k)})$ and $\Ent_{\pi_{j}}(f^{(j)}) = \pi_{j} (f^{(j)}\log f^{(j)})$. 
Using the law of conditional expectation, we have
\begin{align*}
    &\pi_{k}\wrapp{f^{(k)}\log f^{(k)}} 
    \\&= \sum_{\tau \in \SC(j)} \pi_{j}(\tau) \cdot \pi_{\tau,k-j}\wrapp{f_\tau^{(k-j)} \log f_\tau^{(k-j)}} \\
    &= \sum_{\tau \in \SC(j)} \pi_{j}(\tau) \cdot \underset{=f^{(j)}(\tau)}{\underbrace{\pi_{\tau,k-j}(f_\tau^{(k-j)})}} \log \underset{=f^{(j)}(\tau)}{\underbrace{\pi_{\tau,k-j}(f_\tau^{(k-j)})}} 
    + \sum_{\tau \in \SC(j)} \pi_{j}(\tau) \cdot \Ent_{\pi_{\tau,k-j}}(f_\tau^{(k-j)}) \\
    &= \pi_{j} \wrapp{f^{(j)} \log f^{(j)}} + \sum_{\tau \in \SC(j)} \pi_{j}(\tau) \cdot \Ent_{\pi_{\tau,k-j}}(f_\tau^{(k-j)})
\end{align*}
as claimed. 
\end{proof}

\begin{proof}[Proof of \cref{lem:equiv}]
Since there is a one-to-one correspondence between $\Omega$ and $\SC(n)$, every function $f: \Omega \to \R_{\ge 0}$ for the spin system is equivalent to a function $f^{(n)} : \SC(n) \to \R_{\ge 0}$ for the simplicial complex. 
Moreover, for every $0\le k \le n-1$, every $U \subseteq V$ of size $k$, and every $\tau \in \Omega_U$, the function $f_\tau: \Omega_{V\setminus U} \to \R_{\ge 0}$ (restriction of $f$ to configurations on $V\setminus U$ with $U$ fixed to be $\tau$) is the same as the function $f_{(U,\tau)}^{(n-k)} : \SC_{(U,\tau)}(n-k) \to \R_{\ge 0}$ for the link with respect to $(U,\tau) \in \SC(k)$. 
Thus, we can get
\begin{align*}
\frac{1}{\binom{n}{\ell}} \sum_{S\in \binom{V}{\ell}} \mu[\Ent_S(f)]
&= \sum_{S\in \binom{V}{\ell}} \sum_{\tau \in \Omega_{V \setminus S}} 
\frac{1}{\binom{n}{\ell}} \cdot \mu(\sigma_{V \setminus S} \seq \tau) \cdot \Ent_S^\tau (f)\\
&= \sum_{U\in \binom{V}{n-\ell}} \sum_{\tau \in \Omega_U} 
\frac{1}{\binom{n}{\ell}} \cdot \mu(\sigma_U = \tau) \cdot \Ent_{V \setminus U}^\tau(f_\tau)\\
&= \sum_{(U,\tau) \in \SC(n-\ell)} \pi_{n-\ell}(U,\tau) \cdot \Ent_{\pi_{(U,\tau),\ell}}(f_{(U,\tau)}^{(\ell)})\\
&= \Ent_{\pi_n}(f^{(n)}) - \Ent_{\pi_{n-\ell}}(f^{(n-\ell)}) 
\end{align*}
where the last equality follows from \cref{lem:entropydecomp}. 
The lemma then follows. 
\end{proof}

\subsubsection{Implications of Global Entropy Contraction}

We summarize here a few corollaries of global entropy contraction for arbitrary weighted simplicial complexes. 

\begin{fact}\label{fact:ent-contraction}
Let $(\SC,w)$ be a pure $n$-dimensional weighted simplicial complex. 
If $(\SC,w)$ satisfies the order-$(r,s)$ global entropy contraction with rate $\kappa$, then the order-$(s,r)$ down-up walk and order-$(r,s)$ up-down walk satisfy all of the following:
\begin{enumerate}[(1)]
\item \label{it:gap}
The Poincar\'{e} inequality holds with constant $\lambda = \kappa$; 

\item \label{it:MLSI}
The modified log-Sobolev inequality holds with constant $\rho_0 = \kappa$; 


\item \label{it:red}
The relative entropy decays with rate $\alpha = \kappa$; 

\item \label{it:mix}
The mixing time of the order-$(s,r)$ down-up walk satisfies
\[
T_{\mathrm{mix}}(P^\vee_{s,r}, \eps) \le \ceil{\frac{1}{\kappa} \left( \log \log \frac{1}{\pi^*_s} + \log \frac{1}{2\eps^2} \right)}
\]
where $\pi^*_s = \min_{\sigma \in \SC(s)} \pi_s(\sigma)$; 
The same mixing time bound holds for the order-$(r,s)$ up-down walk as well with $\pi^*_s$ replaced by $\pi^*_r$.

\item \label{it:concentration}
For every $f:\SC(s) \to \R$ which is $c$-Lipschitz w.r.t. the shortest path metric induced by the order-$(s,r)$ down-up walk on $\SC(s)$, and every $a \ge 0$, we have the concentration inequality
\[
	\Pr_{\sigma \sim \pi_s} \wrapb{ \left| f(\sigma) - \mu(f) \right| \ge a } \le 2 \, \exp\wrapp{-\frac{\kappa a^2}{2c^{2}}}. 
\]
\end{enumerate}
\end{fact}

These implications are shown in \cite{CGM19}. We remark that \cref{fact:at-mixing} can also be viewed as a consequence of \cref{fact:ent-contraction} by \cref{lem:equiv}. 
Technically we can also show a standard log-Sobolev inequality under the assumption of marginal boundedness, like (\ref{item:SLSI}) of \cref{fact:at-mixing}. 
However, the constant is complicated to state in this setting and so we omit it here. 

\begin{proof}
(\ref{it:gap}) can be proved by the linearization argument in the proof of \cite[Proposition 1.1]{CMT15}. 
The remaining results are all shown in \cite{CGM19}: 
(\ref{it:red}) follows from \cite[Corollary 13]{CGM19}; 
(\ref{it:MLSI}) follows by (\ref{it:red}) and \cite[Theorem 7]{CGM19}; 
(\ref{it:mix}) can be deduced from (\ref{it:red}) and the direct proof of \cite[Corollary 8]{CGM19};  
finally, (\ref{it:concentration}) follows from (\ref{it:MLSI}) and \cite[Lemma 15]{CGM19}. 
\end{proof}

\subsection{Local-to-Global Entropy Contraction}
\label{subsec:local-to-global}

As mentioned earlier, one of the beautiful properties of simplicial complexes is that they admit a local-to-global phenomenon, where mixing properties of the local walks $P_{\tau}$ imply mixing properties for the global walks in a quantitative sense. For instance, $( \frac{C}{n-1},\frac{C}{n-2},\dots,C )$-local spectral expansion implies $\Omega(n^{-(1+C)})$ spectral gap for the Glauber dynamics \cite{AL20} (although we note weaker results of this type were previously known due to \cite{DK17, KO18}). It turns out this local-to-global spectral result is completely equivalent to the property that local contraction of variance implies global contraction of variance; see \cref{app:var} for more discussion. The goal of this section is to prove an analogous local-to-global result for entropy. 
First, we formalize our notion of local entropy contraction.

\begin{definition}[Local Entropy Contraction]\label{def:localentcontraction}
We say a pure $n$-dimensional weighted simplicial complex $(\SC,w)$ satisfies \emph{$(\alpha_0,\dots,\alpha_{n-2})$-local entropy contraction} 
if for every (global) function $f^{(n)}: \SC(n) \to \R_{\ge 0}$, every $0\le k \le n-2$, and every $\tau \in \SC(k)$, we have
\begin{align*}
    \Ent_{\pi_{\tau,2}} (f_{\tau}^{(2)}) \geq (1 + \alpha_{k}) \, \Ent_{\pi_{\tau,1}} (f_{\tau}^{(1)}).
\end{align*}
\end{definition}
Recall that $f_{\tau}^{(2)}(\xi) = f^{(k+2)}(\tau \cup \xi) = (P_{k+2}^\uparrow \cdots P_{n-1}^\uparrow f^{(n)} ) (\tau \cup \xi)$ for each $\xi \in \SC_{\tau}(2)$, and $f_{\tau}^{(1)}$ is defined similarly. 

One might want to replace the conditions in \cref{def:localentcontraction} with ``for every $0\le k \le n-2$, every $\tau \in \SC(k)$, and every (local) function $f_\tau^{(2)}: \SC_\tau(2) \to \R_{\ge 0}$'', resulting in a stronger notion of local entropy contraction. 
Indeed, this is the case for local spectral expansion or local variance contraction (see \cref{app:var}). 
In contrast, our \cref{def:localentcontraction} only considers local functions originating from global functions. 
This is sufficient since in the end we are only interested in the order-$(n,k)$ down-up walk $P_{n,n-\ell}^{\vee}$ and only need to consider global functions $f^{(n)}: \SC(n) \to \R_{\ge 0}$. 
Moreover, using this weaker notion makes it easier for us to deduce local entropy contraction from local spectral expansion in \cref{subsec:local-ent-spec}.

With the notion of local entropy contraction, we are able to prove the following local-to-global entropy contraction result.
\begin{theorem}[Local-to-Global Entropy Contraction]
\label{thm:local-global}
Suppose a pure weighted $n$-dimensional simplicial complex $(\SC,w)$ satisfies $(\alpha_0,\dots,\alpha_{n-2})$-local entropy contraction. Then for every function $f^{(n)}: \SC(n) \to \R_{\ge 0}$ and all $1\le k \le n-1$, we have
\begin{equation}\label{eq:ent-recursion}
\frac{\Ent_{\pi_{k+1}}(f^{(k+1)})}{\sum_{i=0}^{k} \Gamma_i} 
\ge 
\frac{\Ent_{\pi_j}(f^{(k)})}{\sum_{i=0}^{k-1} \Gamma_i} 
\end{equation}
where $\Gamma_{i} = \prod_{j=0}^{i-1} \alpha_j$ for $1\le i\le n-1$ and $\Gamma_{0} = 1$. 
In particular, $(\SC,w)$ satisfies the order-$(k,n)$ global entropy contraction with rate 
\[
\kappa = \frac{\sum_{i=k}^{n-1} \Gamma_i}{\sum_{i=0}^{n-1} \Gamma_i}. 
\]
\end{theorem}
\begin{remark}
After we posted a preliminary version of this paper, it was brought to our attention that Guo-Mousa had also independently obtained this result \cite{GM20}. The analogous result for variance, which we present in \cref{app:var}, was also independently obtained by Kaufman-Mass \cite{KM20}. Finally, we mention the recent work \cite{AASV21} which established local-to-global arguments for any $f$-divergence in a very general setting.
\end{remark}

In \cite{CGM19}, the authors showed that the simplicial complexes with respect to homogeneous strongly log-concave distributions satisfy $(1,\dots,1)$-local entropy contraction; see Lemma 10 of \cite{CGM19}. 
That is, $\alpha_k = 1$ for all $k$ and thus $\Gamma_k = 1$ for all $k$. 
Then \cref{thm:local-global} implies that for every $k$ we have
\[
\Ent_{\pi_{k+1}}(f^{(k+1)}) \ge \frac{k+1}{k} \, \Ent_{\pi_k}(f^{(k)}),
\]
which recovers Lemma 11 of \cite{CGM19}. 

\begin{proof}[Proof of \cref{thm:local-global}]
We prove \cref{eq:ent-recursion} by induction. When $k=1$, \cref{eq:ent-recursion} is equivalent to
\[
\Ent_{\pi_2} (f^{(2)}) \ge (1+\alpha_0) \, \Ent_{\pi_1} (f^{(1)}), 
\]
which holds by assumption. Now suppose \cref{eq:ent-recursion} holds for $k-1$; i.e., 
\[
\frac{\Ent_{\pi_{k}}(f^{(k)})}{\sum_{i=0}^{k-1} \Gamma_i} 
\ge 
\frac{\Ent_{\pi_{k-1}}(f^{(k-1)})}{\sum_{i=0}^{k-2} \Gamma_i}. 
\] 
By \cref{lem:entropydecomp} we have
\begin{align*}
\Ent_{\pi_{k+1}}(f^{(k+1)}) - \Ent_{\pi_{k-1}}(f^{(k-1)})
&= \sum_{\tau \in \SC(k-1)} \pi_{k-1}(\tau) \cdot \Ent_{\pi_{\tau,2}}(f_{\tau}^{(2)})\\
&\ge (1+\alpha_{k-1}) \sum_{\tau \in \SC(k-1)} \pi_{k-1}(\tau) \cdot \Ent_{\pi_{\tau,1}}(f_{\tau}^{(1)})\\
&= (1+\alpha_{k-1}) \left( \Ent_{\pi_{k}}(f^{(k)}) - \Ent_{\pi_{k-1}}(f^{(k-1)}) \right). 
\end{align*}
It follows that
\begin{align*}
\Ent_{\pi_{k+1}}(f^{(k+1)}) 
&\ge (1+\alpha_{k-1}) \, \Ent_{\pi_{k}}(f^{(k)}) - \alpha_{k-1} \, \Ent_{\pi_{k-1}}(f^{(k-1)})\\
&\ge (1+\alpha_{k-1}) \, \Ent_{\pi_{k}}(f^{(k)}) - \alpha_{k-1} \cdot \frac{\sum_{i=0}^{k-2} \Gamma_i}{\sum_{i=0}^{k-1} \Gamma_i} \cdot \Ent_{\pi_{k}}(f^{(k)}) \tag{Induction}\\
&= \left( 1 + \alpha_{k-1} \cdot \frac{\Gamma_{k-1}}{\sum_{i=0}^{k-1} \Gamma_i} \right)  \Ent_{\pi_{k}}(f^{(k)})\\
&= \frac{\sum_{i=0}^{k} \Gamma_i}{\sum_{i=0}^{k-1} \Gamma_i} \cdot \Ent_{\pi_{k}}(f^{(k)}). 
\end{align*}
This proves the theorem. 
\end{proof}

\subsection{Local Entropy Contraction via Local Spectral Expansion}
\label{subsec:local-ent-spec}
Thanks to \cref{thm:local-global}, it remains to show local entropy contraction in order to prove \cref{lem:ent-contraction}. 
In general, proving local entropy contraction is a difficult task. Our goal here is to show that one can deduce local entropy contraction from local spectral expansion if we further enforce certain bounds on the marginals of the distribution.


\begin{theorem}\label{thm:bdd+expander=ent}
Suppose that $(\SC,w)$ is a pure $n$-dimensional weighted simplicial complex that is $(\bb_0,\dots,\bb_{n-1})$-marginally bounded and has $(\zeta_0, \dots, \zeta_{n-2})$-local spectral expansion. 
Then $\SC$ satisfies $(\alpha_0,\dots,\alpha_{n-2})$-local entropy contraction where for $0\le k \le n-2$, 
\begin{equation}\label{eq:alpha}
\alpha_k = \max\left\{ 
1-\frac{4\zeta_k}{\bb_k^2(n-k)^2} , 
\frac{1-\zeta_k}{4+2\log(\frac{1}{2b_k b_{k+1}})}
\right\}. 
\end{equation}
\end{theorem}
\begin{remark}
The purpose of the second bound is to obtain some weak control over the local entropy contraction when the first bound becomes vacuous (e.g. when $n - k$ is sufficiently small). However, we technically do not need it in the proof of our main mixing results, since we only require local entropy contraction when $n - k \geq \Omega(n0$ and $n$ is sufficiently large. Nevertheless, we record it here to demonstrate that there is a more general connection between local spectral expansion and local entropy contraction.
\end{remark}

With \cref{thm:bdd+expander=ent}, we are able to prove \cref{lem:ent-contraction}. 

\begin{proof}[Proof of \cref{lem:ent-contraction}]
First notice that it suffices to prove the lemma for the case $s = n$, since when $s < n$ we can consider the $s$-dimensional simplicial complex $\SC' = \{\sigma \in \SC: |\sigma| \le s\}$ instead. 
When $s = n$, the lemma follows immediately from \cref{thm:local-global,thm:bdd+expander=ent}. 
\end{proof}

The rest of this section aims to prove \cref{thm:bdd+expander=ent}. 
We will show separately the two bounds in \cref{eq:alpha} on the rate of local entropy contraction, and we will refer to them as \emph{the first bound} and \emph{the second bound}. 
The first bound is more subtle and indicates that $\alpha_k = 1 - \Theta(\zeta_k)$ for the weighted simplicial complex $(\SC,\mu)$ corresponding to a marginally bounded distribution $\mu$ over $[q]^V$ (see \cref{claim:bounded}). 
The second bound is crude but may still be helpful when the first bound is bad.

\subsubsection{Proof of the First Bound}

We break the proof of it into the following lemmas, which are the essence of our approach.

First we generalize Lemma 10 of \cite{CGM19} as follows. 
\begin{lemma}
\label{lem:localspecentvar}
Suppose the local walk $P_{\emptyset}$ with stationary distribution $\pi_{1}$ satisfies $\lambda_{2}(P_{\emptyset}) \leq \zeta$. Then for every $f^{(2)} : \SC(2) \rightarrow \R_{\geq0}$ and taking $f^{(1)} = P_{1}^{\uparrow}f^{(2)}$, we have
\begin{align*}
    \Ent_{\pi_{2}} (f^{(2)}) - 2\, \Ent_{\pi_{1}} (f^{(1)}) \geq - \zeta \cdot  \frac{\var_{\pi_{1}} (f^{(1)}) }{\pi_{1} (f^{(1)}) }. 
\end{align*}
\end{lemma}

The following lemma shows that for marginally bounded simplicial complexes, for any global function $f^{(n)}$, the induced local functions $f_\tau^{(1)}$ are ``balanced'' in the sense that the values of $f_\tau^{(1)}$ cannot be too large compared to the expectation of it. 
\begin{lemma}\label{lem:bddentvarcondition}
If $\SC$ is a $(b_0,\dots, b_{n-1})$-marginally bounded simplicial complex, 
then for every function $f^{(n)}: \SC(n) \to \R_{\ge 0}$, all $0\le k\le n-1$, every $\tau \in \SC(k)$ such that $f^{(k)}(\tau) > 0$, and every $i\in \SC_{\tau}(1)$, 
we have
\[
\frac{f_{\tau}^{(1)}(i)}{\pi_{\tau,1} (f_{\tau}^{(1)})} \le \frac{1}{b_k(n-k)}. 
\]
\end{lemma}

Finally, we show that for ``balanced'' functions the entropy and variance differ only by a constant factor after normalization. 
\begin{lemma}\label{lem:entvarrelate}
Let $\pi$ be a distribution over a finite set $\Omega$, and let $f :\Omega \rightarrow \R_{\geq0}$ such that $\pi(f) > 0$. 
If $f(x) \le c \cdot \pi(f)$ for all $x \in \Omega$, then
\[
\frac{\Var_\pi(f)}{\pi(f)} \le 4c^2 \, \Ent_\pi(f). 
\]
\end{lemma}

Note that $c\ge 1$ and we always have the inequality $\Ent_\pi(f) \le \frac{\Var_\pi(f)}{\pi(f)}$.

We then show how to use these three lemmas to prove the first bound in \cref{eq:alpha}.
\begin{proof}[Proof of the first bound in \cref{eq:alpha}]
Let $0 \leq k \leq n-2$, $\tau \in \SC(k)$ and $f^{(n)} : \SC(n) \rightarrow \R_{\geq0}$ be arbitrary. Then applying, \cref{lem:localspecentvar} to the link $(\SC_{\tau}, w_{\tau})$ and using $\lambda_{2}(P_{\tau}) \leq \zeta_{k}$, we have the inequality
\begin{align}\label{eq:localentvarineq}
    \Ent_{\pi_{\tau,2}} (f_{\tau}^{(2)}) - 2 \, \Ent_{\pi_{\tau,1}} (f_{\tau}^{(1)}) \geq -\zeta_{k} \cdot \frac{\var_{\pi_{\tau,1}} (f_{\tau}^{(1)})}{\pi_{\tau,1} (f_{\tau}^{(1)})}
\end{align}
where $f_{\tau}^{(2)}:\SC_{\tau}(2) \rightarrow \R_{\geq0}$ and $f_{\tau}^{(1)} : \SC_{\tau}(1) \rightarrow \R_{\geq0}$ are derived from $f^{(n)}$. By \cref{lem:bddentvarcondition}, we have $f_{\tau}^{(1)}(i) \leq \frac{1}{b_{k}(n-k)} \cdot \pi_{\tau,1} (f_{\tau}^{(1)})$ for all $i \in \SC_{\tau}(1)$. It follows by \cref{lem:entvarrelate} that
\begin{align}\label{eq:bddlocalentvar}
    \frac{\var_{\pi_{\tau,1}} (f_{\tau}^{(1)})}{\pi_{\tau,1} (f_{\tau}^{(1)})} \leq \frac{4}{b_{k}^{2}(n-k)^2}  \cdot \Ent_{\pi_{\tau,1}} (f_{\tau}^{(1)}). 
\end{align}
Hence, it follows by combining \cref{eq:localentvarineq,eq:bddlocalentvar} that
\begin{align*}
    \Ent_{\pi_{\tau,2}} (f_{\tau}^{(2)}) - 2\, \Ent_{\pi_{\tau,1}} (f_{\tau}^{(1)}) \geq -\frac{4\zeta_{k}}{b_{k}^{2}(n-k)^2} \cdot \Ent_{\pi_{\tau,1}} (f_{\tau}^{(1)}). 
\end{align*}
Rearranging, we obtain
\begin{align*}
    \Ent_{\pi_{\tau,2}} (f_{\tau}^{(2)}) \geq \wrapp{1 + \wrapp{1 - \frac{4\zeta_{k}}{b_{k}^{2}(n-k)^2}}} \Ent_{\pi_{\tau,1}} (f_{\tau}^{(1)}). 
\end{align*}
As this holds for all $\tau \in \SC(k)$ and all $0 \leq k \leq n-2$, we obtain the first bound. 
\end{proof}

It remains to prove \cref{lem:localspecentvar,lem:bddentvarcondition,lem:entvarrelate}. 
We note that these lemmas are logically independent of each other.


\begin{proof}[Proof of \cref{lem:localspecentvar}]
First, observe that the desired inequality is scale invariant, and hence we may assume without loss that $\pi_{2} (f^{(2)}) = \pi_{1} (f^{(1)}) = 1$. 
We shall write $ij$ to represent $\{i,j\} \in \SC(2)$ for simplicity. 
Let us rewrite $2\, \Ent_{\pi_{1}} (f^{(1)})$ in a form which is more convenient to compare with $\Ent_{\pi_{2}} (f^{(2)})$. Observe that
\begin{align*}
    \Ent_{\pi_{1}} (f^{(1)}) &= \sum_{i \in \SC(1)} \pi_{1}(i) f^{(1)}(i) \log f^{(1)}(i) \\
    &= \sum_{i \in \SC(1)} \pi_{1}(i) \wrapp{\sum_{j \in \SC(1) : ij \in \SC(2)} \frac{\pi_{2}(ij)}{2 \pi_{1}(i)} \cdot f^{(2)}(ij)} \log f^{(1)}(i) \\
    &= \sum_{i \in \SC(1)} \sum_{j \in \SC(1) : ij \in \SC(2)} \frac{\pi_{2}(ij)}{2} \cdot f^{(2)}(ij) \log f^{(1)}(i) \\
    &= \frac{1}{2}\sum_{ij \in \SC(2)} \pi_{2}(ij) \cdot f^{(2)}(ij) \log \big( f^{(1)}(i) f^{(1)}(j) \big).
\end{align*}
By the inequality $a\log \frac{a}{b} \geq a - b$ for any $a \ge 0$ and $b > 0$, we can get
\begin{align*}
    &\Ent_{\pi_{2}} (f^{(2)}) - 2\, \Ent_{\pi_{1}} (f^{(1)}) 
    \\&= \sum_{ij \in \SC(2)} \pi_{2}(ij) \cdot f^{(2)}(ij) \wrapp{\log f^{(2)}(ij) - \log \big( f^{(1)}(i) f^{(1)}(j) \big)} \\
    &\geq \sum_{ij \in \SC(2)} \pi_{2}(ij) \cdot \wrapp{f^{(2)}(ij) - f^{(1)}(i)f^{(1)}(j)} \\
    &= \underset{=\pi_{2} (f^{(2)}) = 1}{ \underbrace{ \sum_{ij \in \SC(2)} \pi_{2}(ij) f^{(2)}(ij) } }
    - \sum_{ij \in \SC(2)} \pi_{2}(ij) f^{(1)}(i) f^{(1)}(j) \\
    &= 1 - (f^{(1)})^{\top} W f^{(1)}
\end{align*}
where $W \in \R_{\geq0}^{\SC(1) \times \SC(1)}$ is the symmetric matrix with entries $W(i,j) = \frac{\pi_{2}(ij)}{2}$ whenever $ij \in \SC(2)$, and $W(i,j) = 0$ otherwise. 
Note that $W = \diag(\pi_{1}) P_{\emptyset}$ since recall that $P_{\emptyset}$ has entries $P_{\emptyset}(i,j) = \frac{\pi_{2}(ij)}{2 \pi_{1}(i)}$ whenever $ij \in \SC(2)$, and $P_{\emptyset}(i,j) = 0$ otherwise. Now, we analyze $W$ spectrally assuming knowledge of the spectral gap of $P_{\emptyset}$. Observe that since $\pi_{1}(f^{(1)}) = 1$, we have that
\begin{align*}
    &1 - (f^{(1)})^{\top}Wf^{(1)} \\
    &= \underset{- \var_{\pi_{1}}(f^{(1)})}{\underbrace{\pi_{1}\wrapb{f^{(1)}}^{2} - \pi_{1}\wrapb{(f^{(1)})^{2}}}} + \underset{= \mathcal{E}_{P_{\emptyset}}(f^{(1)},f^{(1)})}{\underbrace{\pi_{1}\wrapb{(f^{(1)})^{2}} - (f^{(1)})^{\top}\diag(\pi_{1})P_{\emptyset}f^{(1)}}} \\
    &= \mathcal{E}_{P_{\emptyset}}(f^{(1)}, f^{(1)}) - \var_{\pi_{1}}(f^{(1)}) \\
    &\geq (1 - \zeta) \cdot \var_{\pi_{1}}(f^{(1)}) -  \var_{\pi_{1}}(f^{(1)}) \tag{Spectral Gap or Poincar\'{e} Inequality} \\
    &= - \zeta \cdot \frac{\var_{\pi_{1}}(f^{(1)})}{\pi_{1}(f^{(1)})} \tag{$\pi_{1}(f^{(1)}) = 1$} \\
\end{align*}
and we are done.
\end{proof}

\begin{proof}[Proof of \cref{lem:bddentvarcondition}]

Without loss of generality, we may assume $\pi_{n} (f^{(n)}) = 1$, which also implies that $\pi_{k}(f^{(k)}) = 1$ for all $k$. 
It follows that if we define $\nu_{k} = f^{(k)}\pi_{k}$ for $0\le k \le n$, then $\nu_{k}$ is a distribution on $\SC(k)$. 
For intuition, note that $\nu_{k} = \nu_{n}P_{n}^{\downarrow}\dotsb P_{k+1}^{\downarrow}$, and hence we can regard these distributions as from the weighted simplicial complex $(\SC,\nu_n)$. 
We then have
\begin{align*}
	\frac{f_\tau^{(1)}(i)}{\pi_{\tau,1}  (f_\tau^{(1)})  } 
	= \frac{f^{(k+1)}(\tau \cup \{i\})}{f^{(k)}(\tau)} 
	= \frac{ \frac{\nu_{k+1}(\tau \cup \{i\})}{(k+1) \, \nu_{k}(\tau)} }{ \frac{\pi_{k+1}(\tau \cup \{i\})}{(k+1) \, \pi_{k}(\tau)} } 
	= \frac{\nu_{\tau,1}(i)}{\pi_{\tau,1}(i)}.
\end{align*}
Trivially we have $\nu_{\tau,1}(i) \le \frac{1}{n-k}$; see the proof of \cref{claim:bounded} in \cref{subsubsec:proof-claim} and the remark after it. 
Furthermore, by assumption, we have $\pi_{\tau,1}(i) \ge \bb_k$. The claim follows.
\end{proof}

\begin{proof}[Proof of \cref{lem:entvarrelate}]
Without loss of generality, we may assume that $\pi(f) = 1$. 
Then, $f(x) \le c$ for all $x \in \Omega$. 
Let $\nu = f\pi$, so $f$ is the relative density of $\nu$ with respect to $\pi$. 
We need to show that 
\[
\Var_\pi(f) \le 4c^2 \, \Ent_\pi(f). 
\]
\cref{lem:donskervaradhan} implies that for every function $g: \Omega \to \R$, we have
\[
\nu(g) \le \kl{\nu}{\pi} + \log \pi(e^g) = \Ent_\pi(f) + \log \pi(e^g). 
\]
Let $g = t(f - 1)$ for some parameter $t>0$ to be determined. 
Then 
\[
\nu(g) = t \,  (\nu(f) - 1) = t \, (\pi(f^2) - 1) = t \, \Var_\pi(f). 
\]
Hence, we obtain that
\[
\Var_\pi(f) \le 
\frac{1}{t} \, \Ent_\pi(f)
+ \frac{1}{t} \, \log \pi\left( e^{t (f-1)} \right).
\]
This is known as the entropy inequality \cite{MSW03}. 

Notice that $c \ge 1$ always and $\Ent_\pi(f) \le \Var_\pi(f)$ when $\pi(f) = 1$ (see, e.g., \cite{Sal97}). 
Consider first the case that $1\le c \le 2$. 
We shall pick 
\[
t = \sqrt{\frac{\Ent_\pi(f)}{\Var_\pi(f)}} \le 1.
\]
Then $t(f - 1) \le c-1 \le 1$. 
Since $e^x \le 1+x+x^2$ when $x\le 1$, we get
\begin{align*}
\log \pi\left( e^{t (f-1)} \right) 
&\le \log \pi\left( 1 + t (f-1) + t^2 (f-1)^2 \right)\\
&= \log \left( 1 + t^2 \, \Var_\pi(f) \right)\\
&\le t^2 \, \Var_\pi(f).
\end{align*}
It follows that
\[
\Var_\pi(f) \le 
\frac{1}{t} \, \Ent_\pi(f)
+ t \, \Var_\pi(f). 
\]
For our choice of $t$, we obtain
\[
\Var_\pi(f) \le 4 \, \Ent_\pi(f) \le 4c^2 \, \Ent_\pi(f). 
\]

Next, consider the case that $c>2$. This time we pick
\[
t = \sqrt{\frac{\Ent_\pi(f)}{\Var_\pi(f)}} \cdot \frac{2\ln c}{c} \le \frac{2\ln c}{c}.
\]
Then $t(f - 1) \le 2\ln c$. 
For all $x \le 2\ln c$, it holds that $e^x \le 1+x+ (\frac{c}{2\ln c})^2 x^2$. 
Hence, we get
\begin{align*}
\log \pi\left( e^{t (f-1)} \right) 
&\le \log \pi\left( 1 + t (f-1) + \left( \frac{c}{2\ln c} \right)^2 t^2 (f-1)^2 \right)\\
&= \log \left( 1 + t^2  \left( \frac{c}{2\ln c} \right)^2  \Var_\pi(f)  \right)\\
&\le t^2  \left( \frac{c}{2\ln c} \right)^2  \Var_\pi(f).
\end{align*}
We then deduce that
\begin{align*}
\Var_\pi(f) &\le 
\frac{1}{t} \, \Ent_\pi(f)
+ t \left( \frac{c}{2\ln c} \right)^2  \Var_\pi(f) \\ 
\Longrightarrow \quad 
\Var_\pi(f)  &\le \left( \frac{c}{\ln c} \right)^2  \Ent_\pi(f)
\le 4c^2 \, \Ent_\pi(f).
\end{align*}
This establishes the lemma. 
\end{proof}

\subsubsection{Proof of the Second Bound}\label{sec:localentdecayconstantsize}
Here we prove the second bound in \cref{eq:alpha}. 
We do this by reducing entropy contraction to bounding the standard log-Sobolev constant. 
Since the marginals are bounded, a comparison inequality between the standard log-Sobolev constant and spectral gap then finishes the proof.

We will show that for every (local) function $f^{(2)} : \SC(2) \to \R_{\ge 0}$, we have
\begin{equation}\label{eq:0alpha0}
\Ent_{\pi_2} (f^{(2)}) \ge (1+\alpha_0) \, \Ent_{\pi_1} (f^{(1)}), 
\end{equation}
where 
\[
\alpha_0 = \frac{1-\zeta_0}{4+2\log(\frac{1}{2\bb_0\bb_1})}. 
\]
This establishes the second bound for the case $k = 0$ and $\tau = \emptyset$. 
For arbitrary $0\le k \le n-2$ and arbitrary $\tau \in \SC(k)$, we can just consider the link $(\SC_\tau,w_\tau)$ at $\tau$ instead and achieve the results.

Recall that for a reversible Markov chain $P$ with stationary distribution $\pi$ on a finite state space $\Omega$, the standard logarithmic Sobolev constant is defined as
\[
\rho(P) = \inf \left\{\frac{\EE_{P}(\sqrt{f}, \sqrt{f})}{\Ent_{\pi}(f)} \biggm| f: \Omega \to \R_{\ge 0}, \, \Ent_\pi(f) \neq 0 \right\}. 
\]


We prove \cref{eq:0alpha0} by the following two lemmas. First we relate $\alpha_0$ with the standard log-Sobolev constant of the order-$(2,1)$ down-up walk $P_{2,1}^\vee$.

\begin{lemma}
\label{lem:logsobentdecay}
For every (local) function $f^{(2)} : \SC(2) \to \R_{\ge 0}$, we have
\begin{align*}
	\Ent_{\pi_1}(f^{(1)}) \leq \left( 1 - \rho(P_{2,1}^{\vee}) \right) \, \Ent_{\pi_2}(f^{(2)}). 
\end{align*}
\end{lemma}

The following lemma is equation (3.9) from \cite{DSC96} which compares the standard log-Sobolev constant and the spectral gap. 
\begin{lemma}[\cite{DSC96}] 
\label{lem:logsobspecgap}
For every reversible Markov chain $P$ with stationary distribution $\pi$ on a finite state space $\Omega$, we have the inequality
\begin{align*}
    \rho(P) \geq \frac{\lambda(P)}{2+\log (1 / \pi^*)}
\end{align*}
where $\pi^* = \min_{x \in \Omega} \pi(x)$.
\end{lemma}

We are now ready to prove the second bound of \cref{eq:alpha}. 
\begin{proof}[Proof of the second bound in \cref{eq:alpha}]
From \cref{lem:logsobentdecay} we deduce that for every (local) function $f^{(2)} : \SC(2) \to \R_{\ge 0}$, 
\begin{equation}\label{eq:fst}
\Ent_{\pi_2}(f^{(2)}) 
\ge \frac{1}{1 - \rho(P_{2,1}^{\vee})}  \, \Ent_{\pi_1}(f^{(1)}) 
\ge \left( 1 + \rho(P_{2,1}^{\vee}) \right) \Ent_{\pi_1}(f^{(1)}). 
\end{equation}
Meanwhile, \cref{lem:logsobspecgap} gives
\begin{equation}\label{eq:scd}
\rho(P_{2,1}^{\vee}) \ge \frac{\lambda(P_{2,1}^{\vee})}{2+\log (1 / \pi_2^*)} 
\ge \frac{1-\zeta_0}{4 + 2\log (1 / \pi_2^*)},
\end{equation}
where the last inequality follows from
\[
\lambda(P_{2,1}^{\vee}) = \lambda(P_{1,2}^{\wedge}) = 1 - \lambda_2(P_{1,2}^{\wedge}) = \frac{1}{2} \left( 1 - \lambda_2(P_\emptyset) \right) \ge \frac{1-\zeta_0}{2}. 
\]
Also, the marginal boundedness of $(\SC,w)$ implies that for every $\{i,j\} \in \SC(2)$, 
\begin{equation}\label{eq:trd}
\pi_2(\{i,j\}) = 2 \cdot \pi_1(i) \cdot \pi_{\{i\},1}(j) \ge 2 \bb_0 \bb_1. 
\end{equation}
Combining \cref{eq:fst,eq:scd,eq:trd}, we get \cref{eq:0alpha0}. 
The bounds on local entropy contraction rate for any $0\le k \le n-2$ and $\tau \in \SC(k)$ follows by considering the link $(\SC_\tau,w_\tau)$. 
\end{proof}

Let us now prove \cref{lem:logsobentdecay}. We follow the proof of \cite[Proposition 6]{Mic97}, using the following technical lemma.
\begin{lemma}[{\cite[Lemma 5]{Mic97}}]\label{lem:miclotechnical}
For real numbers $t \geq 0$ and $s \geq -t$, we have the inequality
\begin{align*}
    (t+s)\log(t+s) \geq t\log t + s(1 + \log t) + (\sqrt{t + s} - \sqrt{t})^{2}. 
\end{align*}
\end{lemma}

\begin{proof}[Proof of \cref{lem:logsobentdecay}]
For convenience, we write $P^\uparrow = P_1^\uparrow$, $P^\downarrow = P_2^\downarrow$, and $P^\vee = P_{2,1}^\vee = P_2^\downarrow P_1^\uparrow$ in the proof. 
Since $\pi_{1}(f^{(1)}) = \pi_{2} (f^{(2)})$ and the inequality is scale invariant, we may assume these expectations are $1$. 
Towards proving the desired contraction inequality, we first prove the following intermediate inequality: for all $\ii \in \SC(1)$,
\begin{equation}\label{eq:logsobentdecaytechnical}
    \left(P^{\uparrow}f^{(2)} \log f^{(2)}\right)(\ii) \geq \left(P^{\uparrow}f^{(2)}\right)(\ii) \cdot \log \left(P^{\uparrow}f^{(2)}\right)(\ii) 
    + \left(P^{\uparrow}f^{(2)}\right)(\ii) - \left(P^{\uparrow}\sqrt{f^{(2)}}\right)(\ii)^{2}. 
\end{equation}
Let us first see how to use this inequality to prove the desired contraction inequality. Observe that
\begin{align*}
    &\Ent_{\pi_1}(f^{(1)}) 
    \\={}& \Ent_{\pi_1}\left(P^{\uparrow}f^{(2)}\right) \\
    ={}& \sum_{\ii \in \SC(1)} \pi_1(\ii) \cdot \left(P^{\uparrow}f^{(2)}\right)(\ii) \cdot \log \left(P^{\uparrow}f^{(2)}\right)(\ii) \\
    \underset{ \text{(\ref{eq:logsobentdecaytechnical})} }{\leq}&
    \sum_{\ii \in \SC(1)} \pi_1(\ii) \left(P_{1}^{\uparrow}f^{(2)}\log f^{(2)}\right)(\ii) 
    - \sum_{\ii \in \SC(1)} \pi_1(\ii) \left(\left(P^{\uparrow}f^{(2)}\right)(\ii) - \left(P^{\uparrow}\sqrt{f^{(2)}}\right)(\ii)^{2}\right) \\
    ={}& \left\langle \mathbf{1}, P^{\uparrow} f^{(2)} \log f^{(2)} \right\rangle_{\pi_1} - \left\langle \mathbf{1}, P^{\uparrow} f^{(2)} \right\rangle_{\pi_1} + \left\langle P^{\uparrow} \sqrt{f^{(2)}}, P^{\uparrow} \sqrt{f^{(2)}} \right\rangle_{\pi_1} \\
    ={}& \left\langle \mathbf{1}, f^{(2)} \log f^{(2)} \right\rangle_{\pi_2} - \left\langle \mathbf{1}, f^{(2)} \right\rangle_{\pi_2} + \left\langle \sqrt{f^{(2)}}, P^{\vee} \sqrt{f^{(2)}} \right\rangle_{\pi_2} \\
    ={}& \Ent_{\pi_2}(f^{(2)}) - \mathcal{E}_{P^{\vee}}\left(\sqrt{f^{(2)}}, \sqrt{f^{(2)}}\right) \\
    \leq{}& \left(1 - \rho(P^{\vee})\right) \, \Ent_{\pi_2}(f^{(2)}). 
\end{align*}
All that remains is to prove \cref{eq:logsobentdecaytechnical}. For every $\ii \in \SC(1)$, taking $t = \left(P^{\uparrow}f^{(2)}\right)(\ii)$, we have that
\begin{align*}
    &\left(P^{\uparrow}f^{(2)}\log f^{(2)}\right)(\ii) \\
    &= \sum_{\sigma \in \SC(2)} P^{\uparrow}(\ii,\sigma) f^{(2)}(\sigma) \log f^{(2)}(\sigma) \\
    &= \sum_{\sigma\in \SC(2)} P^{\uparrow}(\ii,\sigma) \left( t + f^{(2)}(\sigma) - t \right) \log \left( t + f^{(2)}(\sigma) - t \right) \\&\geq \sum_{\sigma \in \SC(2)} P^{\uparrow}(\ii,\sigma) \left(t\log t + (f^{(2)}(\sigma) - t)(1 + \log t) + \left(\sqrt{f^{(2)}(\sigma)} - \sqrt{t}\right)^{2}\right) \\
    &= \left(P^{\uparrow}f^{(2)}\right)(\ii)\log \left(P^{\uparrow}f^{(2)}\right)(\ii) + \sum_{\sigma \in \SC(2)} P^{\uparrow}(\ii,\sigma) \underset{\text{Expand}}{\underbrace{\left(\sqrt{f^{(2)}(\sigma)} - \sqrt{\left(P^{\uparrow}f^{(2)}\right)(\ii)}\right)^{2}}} \\
    &= \left(P^{\uparrow}f^{(2)}\right)(\ii)\log \left(P^{\uparrow}f^{(2)}\right)(\ii) + \underset{(*)}{\underbrace{2\left(P^{\uparrow}f^{(2)}\right)(\ii) - 2\sqrt{\left(P^{\uparrow}f^{(2)}\right)(\ii)} \cdot \left(P^{\uparrow}\sqrt{f^{(2)}}\right)(\ii)}}
\end{align*}
where in the inequality, we used \cref{lem:miclotechnical}. Let us now lower bound $(*)$. We observe that
\begin{align*}
    (*) - \left(\left(P^{\uparrow}f^{(2)}\right)(\ii) - \left(P^{\uparrow}\sqrt{f^{(2)}}\right)(\ii)^{2}\right) = \left(\sqrt{\left(P^{\uparrow}f^{(2)}\right)(\ii)} - \left(P^{\uparrow}\sqrt{f^{(2)}}\right)(\ii)\right)^{2} \geq 0. 
\end{align*}
\cref{eq:logsobentdecaytechnical} then follows and we are done.
\end{proof}

\section{Spectral Independence for the Monomer-Dimer Model}\label{sec:monomerdimerspecind}
Our goal in this section is to obtain strong spectral independence bounds for the monomer-dimer model; in particular, we prove \cref{thm:spec-ind-matching}. Fix a graph $G=(V_{G},E_{G})$ and a positive real number $\lambda > 0$. We define the monomer-dimer model on $G$ to be the distribution $\mu_{G}$ on $2^{E_{G}}$ supported on all matchings of $G$, where $\mu_{G}(M) \propto \lambda^{|M|}$. We note this model may be identified with the hard-core model on the line graph $L(G)$ with parameter $\lambda$, and hence, may also be viewed as a 2-spin system with spins in $\{0,1\}$. As above, we also think of the states as being assignments $\sigma:E \rightarrow \{0,1\}$ such that $\{e \in E : \sigma(e) = 0\}$ is a matching.

For the purposes of this section, we employ an equivalent notion of pairwise influence, originally defined in \cite{ALO20} and also used in \cite{CLV20}, which is more convenient for the monomer-dimer model since there are only two spin values. We define
\begin{align*}
    \mathcal{I}_{G}^{\tau}(e \rightarrow f) = \mu_{G}(\sigma_{f} = 0 \mid \sigma_{e} = 0, \sigma_{\Lambda} = \tau) - \mu_{G}(\sigma_{f} = 0 \mid \sigma_{e} = 1, \sigma_{\Lambda} = \tau)
\end{align*}
for every $\Lambda \subseteq E_{G}$, every feasible boundary condition $\tau : \Lambda \rightarrow \{0,1\}$, and every pair of distinct edges $e,f \notin \Lambda$; we take $\mathcal{I}_{G}^{\tau}(e \rightarrow f) = 0$ if $e = f$. 
We say the Gibbs distribution $\mu_{G}$ of the monomer-dimer model is $\eta$-spectrally independent if $\lambda_{1}(\mathcal{I}_{G}^{\tau}) \leq \eta$ for all $\Lambda \subseteq E$ and all feasible $\tau : \Lambda \rightarrow \{0,1\}$, where $\mathcal{I}_{G}$ is the associated matrix of pairwise influences. We note this notion of spectral independence is equivalent to \cref{def:spectral-ind}; see \cite{ALO20}. 

We prove the following. 
Note that \cref{thm:spec-ind-matching} is a consequence of it as the maximum eigenvalue of a matrix is upper bounded by the maximum absolute row sum. 

\begin{theorem}
\label{thm:matchingsspecind}
Fix an integer $\Delta \geq 3$, and a positive real number $\lambda > 0$. Then for every graph $G=(V_{G},E_{G})$ of maximum degree at most $\Delta$ with $m = |E|$, every $\Lambda \subseteq E_{G}$, and every feasible boundary condition $\sigma_{\Lambda} : \Lambda \rightarrow \{0,1\}$ on $\Lambda \subseteq V$ with $\abs{\Lambda} = k$, the Gibbs distribution $\mu = \mu_{G,\lambda}$ of the monomer-dimer model on $G$ with fugacity $\lambda$ is $(\eta_{0},\dots,\eta_{m-2})$-spectrally independent with
\[ 
    \eta_{k} \leq \min\wrapc{2\lambda\Delta, O(\sqrt{\lambda\Delta})}.
\]
\end{theorem}
\begin{remark}
We note that \cite{AASV21} independently gave an $O(1)$-spectral independence bound for vertex-to-vertex influences using Hurwitz stability which is incomparable to our result here.
\end{remark}

Our proof follows the strategy used in \cite{ALO20, CLV20, CGSV20, FGYZ20}. Specifically, we prove \cref{thm:matchingsspecind} in two steps. In the first step, we prove a reduction for bounding the total influence of an edge in $G$ to the total influence of an edge in the associated tree of self-avoiding walks in $G$. To do this, we extend known results \cite{God93} on the univariate matching polynomial, following a similar but simpler argument used in \cite{CLV20}. In the second step, we bound the total influence of an edge in any tree of maximum degree at most $\Delta$ by leveraging the associated tree recursions. We formalize these in the following two intermediate theorems.
\begin{theorem}[Reduction from Graphs to Trees]\label{thm:matchingsgraphtreereduction}
Fix a graph $G=(V_{G},E_{G})$, $r \in V_{G}$, $e \sim r$ incident to $r$, $f \in E_{G}$, and $\lambda > 0$. Then there exists a tree $T=T_{\SAW}(G,r) = (V_{T},E_{T})$ such that the following inequality holds:
\begin{align*}
    \sum_{f \in E_{G}: f \neq e} \abs{\mathcal{I}_{G}(e \rightarrow f)} \leq \sum_{g \in E_{T}: g \neq e} \abs{\mathcal{I}_{T}(e \rightarrow g)}.
\end{align*}
\end{theorem}
\begin{theorem}[Total Edge Influence in Trees]\label{thm:matchingstreetotalinf}
Let $T = (V_{T},E_{T})$ be any tree of maximum degree $\leq \Delta$, $e \in E_{T}$ be any edge, and fix $\lambda > 0$. Then we have the bound
\begin{align*}
    \sum_{f \in E_{T} : f \neq e} \abs{\mathcal{I}_{T}(e \rightarrow f)} \leq \min\wrapc{2\lambda\Delta, 2\sqrt{1 + \lambda\Delta}}.
\end{align*}
\end{theorem}
\begin{remark}
We note that this bound on the total influence of an edge is tight for the infinite $\Delta$-regular tree. However, it turns out that bounding the maximum eigenvalue of the influence matrix using the total influence of an edge is not tight for the infinite $\Delta$-regular tree. 
\end{remark}
Assuming the truth of these two theorems, we now give a straightforward proof of \cref{thm:matchingsspecind}.
\begin{proof}[Proof of \cref{thm:matchingsspecind}]

Fix $G$, $\Lambda \subseteq E_{G}$ and $\tau : \Lambda \rightarrow \{0,1\}$. Let $H=(V_{H},E_{H})$ be the graph obtained from $G$ by deleting all edges $e \in \Lambda$ such that $\tau(e) = 1$, and deleting all edges $f \in \Lambda$ along with edges incident to them such that $\tau(f) = 0$. Observe that $H$ is a subgraph of $G$ with maximum degree at most $\Delta$, and crucially, the conditional distribution $\mu_{G}^{\tau}$ is precisely $\mu_{H}$. By \cref{thm:matchingsgraphtreereduction,thm:matchingstreetotalinf}, we have the bound
\begin{align*}
    \lambda_{1}(\mathcal{I}_{G}^{\tau}) = \lambda_{1}(\mathcal{I}_{H}) \leq \min\{2\lambda\Delta, 2\sqrt{1 + \lambda\Delta}\}.
\end{align*}
As $G,\Lambda,\tau$ were arbitrary, the claim follows.
\end{proof}
All that remains is to prove \cref{thm:matchingsgraphtreereduction,thm:matchingstreetotalinf}. We do this in \cref{subsec:matchingsgraphtreereduction,subsec:matchingslevelone}, respectively, noting that the arguments are completely independent of one another.

\subsection{Reducing Influences in Graphs to Influences in Trees: Proof of \texorpdfstring{\cref{thm:matchingsgraphtreereduction}}{Theorem 6.2}}
\label{subsec:matchingsgraphtreereduction}
Fix a graph $G=(V_{G},E_{G})$ with maximum degree $\leq\Delta$, and a vertex $r \in V_{G}$. Let $T_{\SAW}(G,r)$ denote the self-avoiding walk tree in $G$ rooted at $r$; in the context of matchings, this is known as the ``path tree'' \cite{God93}, and we refer to \cite{God93} for formal definitions. 
Note that we do not impose any boundary conditions on $T_{\SAW}(G,r)$ like in \cite{Wei06}. For every vertex $u \in V_{G}$, we write $C(u)$ to be the set of copies of $u$ in $T$. Similarly, for every edge $e \in E_{G}$, we write $C(e)$ to be the set of copies of $e$ in $T$.

We prove the following more fine-grained relationship between pairwise influences in $G$ and pairwise influences in $T = T_{\SAW}(G,r)$.
\begin{proposition}[Influence in $G$ to Influence in $T_{\SAW}(G,r)$]\label{prop:graphtotreeinf}
For every graph $G=(V_{G},E_{G})$, $r \in V_{G}$, $e \sim r$, $f \in E_{G}$ with $f \neq e$, and edge activities $x_{e} \geq 0$, if we let $T = T_{\SAW}(G,r) = (V_{T},E_{T})$, then we have the identity
\begin{align*}
    \mathcal{I}_{G}(e \rightarrow f) = \sum_{f' \in C(f)} \mathcal{I}_{T}(e \rightarrow f').
\end{align*}
\end{proposition}
We note that by the Triangle Inequality, \cref{prop:graphtotreeinf} immediately implies \cref{thm:matchingsgraphtreereduction}. Hence, it suffices to prove \cref{prop:graphtotreeinf}, which we do by generalizing properties of the univariate matching polynomial.

Define the following multivariate edge-matching polynomial.
\begin{align*}
    \M_{G}(x_{e} : e \in E_{T}) = \sum_{M \subseteq E \text{ matching}} \prod_{e \in M} x_{e}.
\end{align*}
$\M_{G}$ is also the partition function of the monomer-dimer model on $G$ with edge activities $x_{e} \geq 0$. Furthermore, if $r \in V_{G}$ is arbitrary, and we denote $T = T_{\SAW}(G,r)$, then define
\begin{align*}
    \overline{\M}_{T}(x_{e} : e \in E_{G}) \overset{\defin}{=} \M_{T}(\overline{x}_{f} : f \in E_{T})
\end{align*}
where $\overline{x}_{f} = x_{e}$ for all $f \in C(e)$ and all $e \in E_{G}$. We note that while $\M_{G}$ is always multiaffine, $\overline{\M}_{T}$ is not. Furthermore, $\M_{G}$ is not homogeneous. Finally, note that the degree of any edge $e \sim r$ incident to $r$ is $1$ in $\overline{\M}_{T}$ since no self-avoiding walk can reuse $e$ after using $e$ to leave $r$. We will crucially need the following decomposition of $\M_{G}$.
\begin{lemma}\label{lem:matchingpolyrecursion}
For every graph $G=(V_{G},E_{G})$ and any vertex $v \in V_{G}$, we have the identity
\begin{align*}
    \M_{G}(x) = \M_{G - r}(x) + \sum_{v \sim r} x_{rv} \M_{G-r-v}(x).
\end{align*}
\end{lemma}
\begin{proof}
Group the matchings for which $r$ is not saturated in the term $\M_{G-r}(x)$. Similarly, group the matchings for which a fixed edge $e = \{r,v\}$ incident to $r$ is selected in $\M_{G-r-v}(x)$.
\end{proof}

We prove the following, a univariate analog of which was already proved in \cite{God93}.
\begin{lemma}\label{lem:multivardivis}
For every graph $G=(V_{G},E_{G})$ and $r \in V_{G}$, taking $T = T_{\SAW}(G,r)$, we have the identity
\begin{align*}
    \frac{\M_{G}(x)}{\M_{G-r}(x)} = \frac{\overline{\M}_{T}(x)}{\overline{\M}_{T - r}(x)}.
\end{align*}
Furthermore, we may write $\overline{\M}_{T}(x) = \M_{G}(x) \cdot q(x)$ for some polynomial $q$ which does not depend on $x_{e}$ for any $e \sim r$.
\end{lemma}
First, let us see how to use \cref{lem:multivardivis} to prove \cref{prop:graphtotreeinf}.
\begin{proof}[Proof of \cref{prop:graphtotreeinf}]
Fix $r \in V_{G}$, and write $T = T_{\SAW}(G,r)$. By \cref{lem:multivardivis}, we have that $\overline{\M}_{T}(x) = \M_{G}(x) \cdot q(x)$ for a polynomial $q$ which does not depend on $x_{e}$ for all $e \sim r$. It follows that if $e \sim r$, then
\begin{align*}
    \mu_{T}(\sigma_{e} = 0) = (x_{e}\partial_{x_{e}}\log \overline{\M}_{T})(x) = (x_{e}\partial_{x_{e}}\log \M_{G})(x) = \mu_{G}(\sigma_{e} = 0).
\end{align*}
It then also follows that for any $e \sim r$ and any edge $f \in E_{G}$, we have the identity
\begin{align*}
    (x_{f}x_{e}\cdot \partial_{x_{f}}\partial_{x_{e}} \log \overline{\M}_{T})(x) = (x_{f}x_{e}\cdot \partial_{x_{f}}\partial_{x_{e}} \log \M_{G})(x).
\end{align*}
Now, let us understand the left-hand and right-hand sides separately as influences. For the right-hand side, we have that
\begin{align*}
    &(x_{f}x_{e}\cdot \partial_{x_{f}}\partial_{x_{e}} \log \M_{G})(x) \\
    &= x_{e}x_{f}\cdot \partial_{x_{f}} \frac{(\partial_{x_{e}}\M_{G})(x)}{\M_{G}(x)} \\
    &= x_{f}x_{e} \cdot\wrapp{\frac{(\partial_{x_{f}}\partial_{x_{e}}\M_{G})(x)}{\M_{G}(x)} - \frac{(\partial_{x_{f}}\M_{G})(x) \cdot (\partial_{x_{e}}\M_{G})(f)}{\M_{G}(x)^{2}}} \\
    &= \mu_{G}(\sigma_{e} = 0, \sigma_{f} = 0) - \mu_{G}(\sigma_{e} = 0) \cdot \mu_{G}(\sigma_{f} = 0) \\
    &= \mu_{G}(\sigma_{e} = 0) \cdot \wrapp{\mu_{G}(\sigma_{f} = 0 \mid \sigma_{e} = 0) - \mu_{G}(\sigma_{f} = 0)} \\
    &= \mu_{G}(\sigma_{e} = 0) \cdot \mu_{G}(\sigma_{e}=1) \cdot \mathcal{I}_{G}(e \rightarrow f).
\end{align*}
For the left-hand side, we have by the Chain Rule that
\begin{align*}
    (x_{f}x_{e}\cdot \partial_{x_{f}}\partial_{x_{e}} \log \overline{\M}_{T})(x) &= x_{f}x_{e} \cdot \partial_{x_{f}} \frac{(\partial_{x_{e}}\overline{\M}_{T})(x)}{\overline{\M}_{T}(x)} \\
    &= \sum_{f' \in C(f)} x_{f}x_{e} \cdot\partial_{\overline{x}_{f'}} \frac{(\partial_{x_{e}}\M_{T})(\overline{x})}{\M_{T}(\overline{x})} \Bigg|_{\overline{x} = x} \\
    &= \sum_{f' \in C(f)} \mu_{T}(\sigma_{e} = 0) \cdot \mu_{T}(\sigma_{e} = 1) \cdot \mathcal{I}_{T}(e \rightarrow f').
\end{align*}
Since $\mu_{G}(\sigma_{e} = 0) = \mu_{T}(\sigma_{e} = 0)$ and $\mu_{G}(\sigma_{e} = 1) = \mu_{T}(\sigma_{e} = 1)$, the claim follows.
\end{proof}
All that remains is to prove \cref{lem:multivardivis}.
\begin{proof}[Proof of \cref{lem:multivardivis}]
We go by induction on the graph. First, we note that the claim is trivial in the case where $G$ itself is a tree, since then $T = G$ and $\M_{G} = \overline{\M}_{T}$ (i.e. $q$ is identically $1$). This forms our base case. 
Now, 
by \cref{lem:matchingpolyrecursion} we may write
\begin{align*}
    \M_{G}(x) = \M_{G - r}(x) + \sum_{v \sim r} x_{rv} \M_{G-r-v}(x)
\end{align*}
where we note the polynomials $\M_{G-r}(x), \M_{G-r-v}(x)$ do not depend on any $x_{e}$ for $e \sim r$. 
Therefore, we deduce that
\begin{align*}
    \frac{\M_{G}(x)}{\M_{G-r}(x)} &= 1 + \sum_{v \sim r} x_{rv} \cdot \frac{\M_{G-r-v}(x)}{\M_{G-r}(x)} \\
    &= 1 + \sum_{v \sim r} x_{rv} \cdot \frac{\overline{\M}_{T_{\SAW}(G-r, v) - v}(x)}{\overline{\M}_{T_{\SAW}(G-r,v)}(x)} \tag{Induction} \\
    &= 1 + \sum_{v \sim r} x_{rv} \cdot \frac{\overline{\M}_{T_{\SAW}(G,r) - r - v}(x)}{\overline{\M}_{T_{\SAW}(G,r) - r}(x)}
\end{align*}
For the last step, note that $T_{\SAW}(G-r,v)$ is the subtree of $T_{\SAW}(G,r)$ rooted at $v$ since $v$ is adjacent to $r$; the matching polynomials of the other subtrees all ``cancel''. Therefore, we have the following: 
\begin{align*}
    &1 + \sum_{v \sim r} x_{rv} \cdot \frac{\overline{\M}_{T_{\SAW}(G,r) - r - v}(x)}{\overline{\M}_{T_{\SAW}(G,r) - r}(x)} \\
    ={}& \frac{\overline{\M}_{T_{\SAW}(G,r)-r}(x) + \sum_{v \sim r} x_{rv} \overline{\M}_{T_{\SAW}(G,r) - r - v}(x)}{\overline{\M}_{T_{\SAW}(G,r)-r}(x)} \\
    ={}& \frac{\overline{\M}_{T_{\SAW}(G,r)}(x)}{\overline{\M}_{T_{\SAW}(G,r)-r}(x)}.
\end{align*}
This proves the first claim. For the second claim, we go by induction again. The base case where $G$ is a tree is again immediate. For the inductive step, we have
\begin{align*}
    \overline{\M}_{T_{\SAW}(G,r)}(x) &= \M_{G}(x) \cdot \frac{\overline{\M}_{T_{\SAW}(G,r)-r}(x)}{\M_{G-r}(x)} \\
    &= \M_{G}(x) \cdot \frac{\prod_{v \sim u} \overline{\M}_{T_{\SAW}(G-r, v)}(x)}{\M_{G-r}(x)} \\
    &= \M_{G}(x) \cdot q(x)
\end{align*}
where in the second step we use that deleting $r$ from $T_{\SAW}(G,r)$ disconnects the subtrees $T_{\SAW}(G-r,v)$, and in the final step we use that $\M_{G-r}(x)$ divides each $\overline{\M}_{T_{\SAW}(G-r,v)}(x)$ by the induction hypothesis. This shows the lemma. 
\end{proof}

\subsection{The Total Influence in a Tree: Proof of \texorpdfstring{\cref{thm:matchingstreetotalinf}}{Theorem 6.3}}
\label{subsec:matchingslevelone}
At this point, we can forget self-avoiding walk trees, and just focus on the special case where $G$ itself is a tree $T$. Throughout, we assume our tree $T$ has maximum degree at most $\Delta$. If $e \in T$ with endpoints $r_{1},r_{2}$, then we may view $T$ as two trees $T(r_{1}), T(r_{2})$ on disjoint sets of vertices which are connected by the edge $e$, with $T(r_{1})$ being rooted at $r_{1}$ and $T(r_{2})$ being rooted at $r_{2}$. If $v$ is a vertex in $T(r_{1})$ (resp. $T(r_{2})$), we write $T(v)$ for the subtree of $T(r_{1})$ (resp. $T(r_{2})$) rooted at $v$. We also write $L_{v}(k)$ for the set of descendants of $v$ (in $T(v)$) at distance exactly $k$ from $v$.

We will let $\mu_{T}(r)$ to denote the marginal probability that the vertex $r \in V_{T}$ is saturated in a random matching drawn from the Gibbs distribution, and let $\mu_{T}(\overline{r}) := 1 - \mu_{T}(r)$. Similarly, for an edge $e \in E_{T}$, we write $\mu_{T}(e)$ for the marginal probability that $e$ is in a random matching drawn from the Gibbs distribution, and let $\mu_{T}(\overline{e}) := 1 - \mu_{T}(e)$. Note that since at most one edge incident to $r$ may be in any given matching, we have
\begin{align}\label{eq:matching-vertex-saturation-edge-marginal-sum}
    \mu_{T}(r) = \sum_{e \sim r} \mu_{T}(e).
\end{align}

We now prove several intermediate technical results which we will use to deduce \cref{thm:matchingstreetotalinf}. To state them, we will need the following recursion for the probabilities $\mu_{G}(\overline{r})$ in the monomer-dimer model on $G$ with fugacity $\lambda \geq 0$:
\begin{align}\label{eq:matchings-tree-prob-recursion}
    \mu_{G}(\overline{r}) = F_{\lambda}(\mu_{G-r}(\overline{v}) : v \in L_{r}(1)) \quad\text{where}\quad F_{\lambda}(p) = F_{d,\lambda}(p) := \frac{1}{1 + \lambda \sum_{i=1}^{d} p_{i}}.
\end{align}
We note this is immediate from \cref{lem:matchingpolyrecursion} and using that $\mu_{G}(\overline{r}) = \frac{\M_{G-r}(\lambda)}{\M_{G}(\lambda)}$ where we take $\lambda = \lambda\mathbf{1}$.

\begin{proposition}\label{prop:potentialbound}
Consider the potential function $\Phi(x) := \log x$. Then for every tree $T=(V_{T},E_{T})$, every edge $e = \{r_{1},r_{2}\} \in E_{T}$, every positive integer $k \geq 1$, and every $r \in e = \{r_{1},r_{2}\}$, we have the inequality
\begin{align*}
    &\sum_{f \in E_{T(r)} : \dist(e,f) = k} \abs{\mathcal{I}_{T}(e \to f)} \\
    &\leq \min\wrapc{\wrapp{\frac{\lambda\Delta}{1 + \lambda\Delta}}^{k}, \wrapp{\frac{\lambda\Delta}{1+\lambda\Delta}}^{k \bmod 2} \cdot \wrapp{\sup_{y} \norm{\nabla (\Phi \circ F_{\lambda}^{\circ 2} \circ \Phi^{-1})(y)}_{1}}^{\floor{k/2}}}.
\end{align*}
\end{proposition}
Given \cref{prop:potentialbound}, we will also need a bound on the gradient norm. Conveniently, this gradient norm was already analyzed in \cite{BGKNT07} to establish the correlation decay property.
\begin{lemma}[{\cite[Lemma 3.3]{BGKNT07}}]\label{lem:gradientnormbound}
Consider the potential function $\Phi(x) = \log x$. Then we have the following bound on the norm of the gradient for the two-step log-marginal recursion:
\begin{align*}
    \sup_{y} \norm{\nabla (\Phi \circ F_{\lambda}^{\circ 2} \circ \Phi^{-1})(y)}_{1} \leq 1 - \frac{2}{\sqrt{1 + \lambda \Delta} + 1}.
\end{align*}
\end{lemma}
We now show how to use intermediate technical results to prove \cref{thm:matchingstreetotalinf}. Immediately following, we will prove \cref{prop:potentialbound}.

\begin{proof}[Proof of \cref{thm:matchingstreetotalinf}]
By combining \cref{prop:potentialbound,lem:gradientnormbound}, we have the following two bounds
\begin{align*}
    \sum_{f \in E_{T} : f \neq e} \abs{\mathcal{I}_{T}(e \rightarrow f)} &= \wrapp{\sum_{f \in E_{T(r_{1})}} + \sum_{f \in E_{T(r_{2})}}} \abs{\mathcal{I}_{T}(e \rightarrow f)} \\
    &\leq 2\sum_{k=1}^{\infty} \wrapp{1 - \frac{2}{\sqrt{1 + \lambda \Delta} + 1}}^{\floor{k/2}} \cdot \wrapp{\frac{\lambda\Delta}{1+\lambda\Delta}}^{k \bmod 2} \\
    &= \frac{\lambda\Delta}{1+\lambda\Delta} \cdot \wrapp{\sqrt{1 + \lambda\Delta} + 1} + \wrapp{\sqrt{1+\lambda\Delta}-1} \\
    &\leq 2\sqrt{1 + \lambda\Delta} \\
    \sum_{f \in E_{T} : f \neq e} \abs{\mathcal{I}_{T}(e \rightarrow f)} &\leq 2\sum_{k=1}^{\infty} \wrapp{\frac{\lambda\Delta}{1+\lambda\Delta}}^{k} = 2 \cdot \frac{\lambda\Delta}{1+\lambda\Delta} \cdot \frac{1}{1 - \frac{\lambda\Delta}{1 + \lambda\Delta}} = 2\lambda\Delta.
\end{align*}
This proves the theorem. 
\end{proof}

\subsubsection{Proof of \texorpdfstring{\cref{prop:potentialbound}}{Proposition 6.8}}
The crux of the proof rests on the following two lemmas. The first is a convenient factorization of the pairwise influence in trees, which was already observed in prior work \cite{ALO20, CLV20} in the context of vertex-spin systems.
\begin{lemma}[Factorization of Pairwise Influence in Trees]\label{lem:matchings-tree-influence-product}
Fix two edges $e,f \in E_{T}$. Let $e = e_{1},r = u_{1},e_{2},u_{2},\dots,u_{k},e_{k+1} = f$ be the unique path in $T$ from $e$ to $f$, where edge $e_{i}$ connects vertices $u_{i-1}$ and $u_{i}$. Then we have
\begin{align*}
    \mathcal{I}_{T}(e \rightarrow f) = \prod_{i=1}^{k} \mathcal{I}_{T}(e_{i} \rightarrow e_{i+1}).
\end{align*}
\end{lemma}
\begin{proof}[Proof of \cref{lem:matchings-tree-influence-product}]
It suffices to show that if $g$ is any edge on the unique path from $e$ to $f$, then $\mathcal{I}_{T}(e \rightarrow f) = \mathcal{I}_{T}(e \rightarrow g) \cdot \mathcal{I}_{T}(g \rightarrow f)$; the full claim then follows by induction. This simpler identity follows immediately from the fact that conditioning on $g$ disconnects $e$ from $f$ so that they become independent.
\end{proof}
\begin{lemma}\label{lem:matchings-edge-marginal-product-vertex-marginals}
Let $T$ be a tree rooted at $r$ and let $v \in L_{r}(1)$. Then we have the identity
\begin{align*}
    \lambda \cdot \mu_{T}(\overline{r}) \cdot \mu_{T(v)}(\overline{v}) = \mu_{T}(\{r,v\}).
\end{align*}
\end{lemma}
\begin{proof}
More generally, if $G$ is any graph (not necessarily a tree), and $e = \{r,v\}$, then
\begin{align*}
    \mu_{G}(\{r,v\}) = \lambda \cdot \mu_{G}(\overline{r} \wedge \overline{v}) = \lambda \cdot \mu_{G}(\overline{r}) \cdot \mu_{G - r}(\overline{v}),
\end{align*}
because any matching in $G$ containing $\{r,v\}$ is precisely a matching in $G - r - v$ adjoined with the edge $\{r,v\}$ (which introduces the multiplicative factor of $\lambda$ in the right-hand side). If $G$ is a tree $T$, then $T - r$ is the union of subtrees which are disconnected from each other. Hence, $\mu_{T-r}(\overline{v}) = \mu_{T(v)}(\overline{v})$.
\end{proof}

We are now ready to prove \cref{prop:potentialbound}. First, observe that if $e,f$ are neighboring edges in $T$, then $\mathcal{I}_{T}(e \to f) = - \mu_{T - e}(f)$ due to the hard constraints of the monomer-dimer model. Hence,
\begin{align*}
    \sum_{f \in E_{T(r)} : \dist(e,f) = k} \abs{\mathcal{I}_{T}(e \to f)} &= \sum_{f \in E_{T(r)} : \dist(e,f) = k} \prod_{i=1}^{k} \abs{\mathcal{I}_{T}(e_{i} \to e_{i+1})} \tag{\cref{lem:matchings-tree-influence-product}} \\
    &= \sum_{f \in E_{T(r)} : \dist(e,f) = k} \prod_{i=1}^{k} \mu_{T - e_{i}}(e_{i+1}) \\
    &= \sum_{u_{i+1} \in L_{u_{i}}(1), \forall i\in[k]} \prod_{i=1}^{k} \mu_{T(u_{i})}(\{u_{i},u_{i+1}\}) \\
    &= \sum_{u_{i+1} \in L_{u_{i}}(1), \forall i\in[k]} \prod_{i=1}^{k} \lambda \cdot \mu_{T(u_{i})}(\overline{u_{i}}) \cdot \mu_{T(u_{i+1})}(\overline{u_{i+1}}), \tag{\cref{lem:matchings-edge-marginal-product-vertex-marginals}}
\end{align*}
where again we write $e = e_{1},r = u_{1},e_{2},u_{2},\dots,u_{k},e_{k+1} = f$ for the unique path in $T$ from $e$ to $f$, for each $f$.

Note that for the first bound $\wrapp{\frac{\lambda\Delta}{1+\lambda\Delta}}^{k}$, one could have stopped at the penultimate step. To see this, observe that by directly applying the tree recursion $F_{\lambda}$ from \cref{eq:matchings-tree-prob-recursion},
\begin{align*}
    \mu_{T}(\overline{r}) = \frac{1}{1 + \lambda \sum_{v \in L_{r}(1)} \mu_{T(v)}(\overline{v})} \geq \frac{1}{1 + \lambda \Delta}
\end{align*}
since $\mu_{T(v)}(\overline{v}) \leq 1$ trivially for all $v \in L_{r}(1)$. Hence $\mu_{T}(r) \leq \frac{\lambda \Delta}{1 + \lambda \Delta}$. This upper bound applies to general rooted trees, in particular, the root $u_{i}$ of $T(u_{i})$ for each $i=1,\dots,k$. 
It follows from induction that
\begin{align*}
    &\sum_{u_{i+1} \in L_{u_{i}}(1), \forall i\in[k]} \prod_{i=1}^{k} \mu_{T(u_{i})}(\{u_{i},u_{i+1}\}) \\
    &= \sum_{u_{i+1} \in L_{u_{i}}(1), \forall i\in[k-1]} \prod_{i=1}^{k-1} \mu_{T(u_{i})}(\{u_{i},u_{i+1}\}) \underset{= \mu_{T(u_{k})}(u_{k})}{\underbrace{\sum_{u_{k+1} \in L_{u_{k}}(1)} \mu_{T(u_{k})}(\{u_{k},u_{k+1}\})}} \\
    &\leq \frac{\lambda\Delta}{1+\lambda\Delta} \sum_{u_{i+1} \in L_{u_{i}}(1), \forall i\in[k-1]} \prod_{i=1}^{k-1} \mu_{T(u_{i})}(\{u_{i},u_{i+1}\}) \\
    &\leq \wrapp{\frac{\lambda\Delta}{1+\lambda\Delta}}^{k}.
\end{align*}
This gives the first bound in \cref{prop:potentialbound}, which is already sufficient for \cref{thm:matching}. For completeness, we now prove the second bound, with a similar but more technical argument, which is better when $\lambda\Delta$ is sufficiently large.

It turns out, one may view the preceding simple analysis as a ``one-step'' analysis, in the sense that we only applied the tree recursion $F_{\lambda}$ one step at a time. We will establish the second upper bound via a ``two-step analysis''.

We have shown the identity above
\begin{align}\label{eq:matchings-infprod-vertex-marginals}
    \sum_{f \in E_{T(r)} : \dist(e,f) = k} \abs{\mathcal{I}_{T}(e \to f)} = \sum_{u_{i+1} \in L_{u_{i}}(1), \forall i\in[k]} \prod_{i=1}^{k} \lambda \cdot \mu_{T(u_{i})}(\overline{u_{i}}) \cdot \mu_{T(u_{i+1})}(\overline{u_{i+1}}).
\end{align}
Let us now reinterpret the terms $\lambda \cdot \mu_{T(u_{i})}(\overline{u_{i}}) \cdot \mu_{T(u_{i+1})}(\overline{u_{i+1}})$ appearing in the right-hand side of \cref{eq:matchings-infprod-vertex-marginals} as \emph{derivatives} of the tree recursion $F_{\lambda}$ after composing with our potential $\Phi$. Composing with the recursion $F_{\lambda}$ yields the following recursion for the logarithm of the marginals $y_{i} = \log p_{i} = \Phi(p_{i})$:
\begin{align*}
    (\Phi \circ F_{\lambda} \circ \Phi^{-1})(y) &= -\log\wrapp{1 + \lambda \sum_{i=1}^{d} \exp(y_{i})}.
\end{align*}
Differentiating and applying the Inverse Function Theorem, we obtain the the identities
\begin{align*}
    (\partial_{p_{i}}F_{\lambda})(p) &= -\frac{\lambda}{\wrapp{1 + \lambda \sum_{j=1}^{d}p_{j}}^{2}} = -\lambda F_{\lambda}(p)^{2}; \\
    \partial_{y_{i}}(\Phi \circ F_{\lambda} \circ \Phi^{-1})(y) &= \frac{(\Phi' \circ F_{\lambda})(p)}{\Phi'(p_{i})} \cdot (\partial_{p_{i}}F_{\lambda})(p) = -\frac{p_{i}}{F_{\lambda}(p)} \cdot \lambda F_{\lambda}(p)^{2} \\
    &= -\lambda \cdot p_{i} \cdot F_{\lambda}(p).
\end{align*}
Since for any $y = \log p$, we have
\begin{align*}
    \norm{\nabla (\Phi \circ F_{\lambda} \circ \Phi^{-1})(y)}_{1} &= F_{\lambda}(p) \cdot \lambda \sum_{i=1}^{d} p_{i} \\
    &= \frac{\lambda \sum_{i=1}^{d} p_{i}}{1 + \lambda \sum_{i=1}^{d} p_{i}} \\
    &= 1 - F_{\lambda}(p) \\
    &\leq \frac{\lambda\Delta}{1+\lambda\Delta},
\end{align*}
we see immediately see via the same kind of inductive argument that the right-hand side of \cref{eq:matchings-infprod-vertex-marginals} is upper bounded by $\wrapp{\frac{\lambda\Delta}{1+\lambda\Delta}}^{k}$. However, we can do better. Following \cite{BGKNT07}, we consider the two-step recursion $F_{\lambda}^{\circ 2}$, which admits faster contraction rates when $\lambda\Delta \geq \frac{3}{4}$ by \cref{lem:gradientnormbound}. First, we calculate that
\begin{align*}
    (\Phi \circ F_{\lambda}^{\circ 2} \circ \Phi^{-1})(y) &= -\log \wrapp{1 + \lambda \sum_{i=1}^{d} \frac{1}{1 + \lambda \sum_{j=1}^{d} \exp(y_{ij})}} \\
    \partial_{y_{ij}}(\Phi \circ F_{\lambda}^{\circ 2} \circ \Phi^{-1})(y) &= \frac{(\Phi' \circ F_{\lambda}^{\circ 2})(p)}{\Phi'(p_{ij})} \cdot (\partial_{p_{ij}} F_{\lambda}^{\circ 2})(y) \\
    &= \frac{p_{ij}}{F_{\lambda}^{\circ 2}(p)} \cdot ((\partial_{p_{i}}F_{\lambda}) \circ F_{\lambda})(p) \cdot (\partial_{p_{j}}F_{\lambda})(p_{i}) \\
    &= \frac{p_{ij}}{F_{\lambda}^{\circ 2}(p)} \cdot \wrapp{-\lambda (F_{\lambda} \circ F_{\lambda})(p)^{2}} \cdot \wrapp{-\lambda F_{\lambda}(p_{i})^{2}} \\
    &= \wrapp{\lambda \cdot p_{ij} \cdot F_{\lambda}(p_{i})} \cdot \wrapp{\lambda \cdot F_{\lambda}(p_{i}) \cdot F_{\lambda}^{\circ 2}(p)},
\end{align*}
where we imagine that $j$ is a child of $i$ which is a child of the root. With this, we can then peel off two levels at a time. In particular, by induction the right-hand side of \cref{eq:matchings-infprod-vertex-marginals} satisfies
\begin{align*}
    &\sum_{u_{i+1} \in L_{u_{i}}(1), \forall i\in[k]} \prod_{i=1}^{k} \lambda \cdot \mu_{T(u_{i})}(\overline{u_{i}}) \cdot \mu_{T(u_{i+1})}(\overline{u_{i+1}}) \\
    &= \sum_{u_{i+1} \in L_{u_{i}}(1), \forall i\in[k-2]} \prod_{i=1}^{k-2} \lambda \cdot \mu_{T(u_{i})}(\overline{u_{i}}) \cdot \mu_{T(u_{i+1})}(\overline{u_{i+1}}) \\
    &\qquad \times \underset{=\norm{\nabla (\Phi \circ F_{\lambda}^{\circ 2} \circ \Phi^{-1}(y)}_{1} \text{ where } y \text{ is the logarithm of the } \mu_{T(u_{k+1})}(\overline{u_{k+1}})}{\underbrace{\wrapp{\sum_{\substack{u_{k} \in \\ L_{u_{k-1}}(1)}} \sum_{\substack{u_{k+1} \in \\ L_{u_{k}}(1)}} \wrapp{\lambda\mu_{T(u_{k-1})}(\overline{u_{k-1}})\mu_{T(u_{k})}(\overline{u_{k}})} \wrapp{\lambda\mu_{T(u_{k})}(\overline{u_{k}})\mu_{T(u_{k+1})}(\overline{u_{k+1}})}}}} \\
    &\leq \wrapp{\sup_{y} \norm{\nabla (\Phi \circ F_{\lambda}^{\circ 2} \circ \Phi^{-1})(y)}_{1}} \cdot \sum_{\substack{u_{i+1} \in L_{u_{i}}(1) \\ \forall i\in[k-2]}} \prod_{i=1}^{k-2} \lambda \cdot \mu_{T(u_{i})}(\overline{u_{i}}) \cdot \mu_{T(u_{i+1})}(\overline{u_{i+1}}) \\
    &\leq \wrapp{\sup_{y} \norm{\nabla (\Phi \circ F_{\lambda}^{\circ 2} \circ \Phi^{-1})(y)}_{1}}^{\floor{k/2}} \cdot \wrapp{\frac{\lambda\Delta}{1+\lambda\Delta}}^{k \bmod 2}
\end{align*}
as desired.

\section{Proofs of Main Results}
\label{app:proof-main}
In this section we prove our main results \cref{thm:2-spin,thm:hard-core,thm:ising,thm:coloring,thm:matching}. 

By \cref{thm:main}, to establish optimal mixing time bound it suffices to show marginal boundedness and spectral independence for the corresponding Gibbs distribution.

We first consider antiferromagnetic 2-spin systems. 
Let $\beta,\gamma,\lambda$ be reals such that $0\le \beta \le \gamma$, $\gamma > 0$, $\beta \gamma < 1$ and $\lambda > 0$ so the triple $(\beta,\gamma,\lambda)$ specifies parameters of an antiferromagnetic 2-spin system. 
We state here the formal definition of up-to-$\Delta$ uniqueness with gap $\delta$ given in \cite{LLY13}.
\begin{definition}[Up-to-$\Delta$ uniqueness with gap $\delta$, \cite{LLY13}]
\label{def:2spin-uniqueness}
For each $1\le d < \Delta$ define
\[
f_d(R) = \lambda \left( \frac{\beta R + 1}{R+\gamma} \right)^d
\]
and denote the unique fixed point of $f_d$ by $R^*_d$. 
We say the parameters $(\beta,\gamma,\lambda)$ are \emph{up-to-$\Delta$ unique with gap $\delta$} if $|f'_d(R^*_d)| < 1-\delta$ for all $1\le d < \Delta$. 
\end{definition}

\begin{proof}[Proof of \cref{thm:2-spin}]
The proof of Theorem 3 from \cite{CLV20} showed that for antiferromagnetic $2$-spin systems that are up-to-$\Delta$ unique with gap $\delta$, the Gibbs distribution $\mu$ is $O(1/\delta)$-spectrally independent. 
Also, by considering the worst configuration of the neighborhood for soft-constraint models (i.e., $0 < \beta \le \gamma$) or $2$-hop neighborhood for hard-constraint models (i.e., $0 = \beta < \gamma$), one can check that $\mu$ is $\bb$-marginally bounded for some constant $\bb = \bb(\Delta,\beta,\gamma,\lambda)$. 
The theorem then follows from \cref{thm:main}. 
\end{proof}

Though in general the constant $C = C(\Delta,\delta,\beta,\gamma,\lambda)$ for bounding the mixing time depends on the parameters $(\beta,\gamma,\lambda)$ of the model, in most applications such as the hard-core model (\cref{thm:hard-core}) and the Ising model (\cref{thm:ising}) we can make the constant $C$ independent of all parameters. 
This is achieved by considering separately when the parameters are pretty far away from the uniqueness threshold, in which case we can deduce rapid mixing under the \emph{Dobrushin uniqueness condition}~\cite{Dob70}, see also~\cite{BD97i,Weitz2}. 

\begin{lemma}
Consider an arbitrary distribution $\mu$ over $[q]^V$. 
For two distinct vertices $u,v \in V$, define
\[
R(u,v) = \max_{\substack{\tau, \xi \in \Omega_{V \setminus \{v\}} \\ \mathrm{Dif}(\tau, \xi) = \{u\}}} \TV{\mu(\sigma_v \seq \cdot \mid \sigma_{V \setminus \{v\}} \seq \tau)}{\mu(\sigma_v \seq \cdot \mid \sigma_{V \setminus \{v\}} \seq \xi)}
\]
where $\mathrm{Dif}(\tau,\xi) = \{w \in V: \tau_w \neq \xi_w\}$. 
If there exists $c \in (0,1)$ such that for every vertex $v \in V$ we have
\[
\sum_{u \in V \setminus \{v\}} R(u,v) \le 1-c
\]
(in which case we say the \emph{Dobrushin uniqueness condition} holds with constant $c$), then the mixing time of the Glauber dynamics for sampling from $\mu$ satisfies
\[
T_{\mathrm{mix}}(\Pgl, \eps) \le \frac{n}{c} \log\left( \frac{n}{\eps} \right). 
\]
\end{lemma}

We present next the proofs of \cref{thm:hard-core,thm:ising}. 

\begin{proof}[Proof of \cref{thm:hard-core}]
By \cref{thm:2-spin}, for every $\lambda \le (1-\delta) \lambda_c(\Delta)$ there exists $C = C(\Delta,\delta,\lambda)$ such that the Glauber dynamics mixes in $C n \log(n/\eps)$ steps. 
Meanwhile, it is easy to check that, when $\lambda \le \frac{1}{2\Delta}$ the Dobrushin uniqueness condition holds with $c = 1/2$, and thus the mixing time is upper bounded by $2n \log(n/\eps)$. 
If we take
\[
C' = C'(\Delta,\delta) := \max\left\{ 2, \sup_{\frac{1}{2\Delta} < \lambda \le (1-\delta) \lambda_c(\Delta)} C(\Delta,\delta,\lambda) \right\},
\]
then the mixing time of the Glauber dynamics is at most $C' n\log(n/\eps)$, as claimed. 
\end{proof}

\begin{proof}[Proof of \cref{thm:ising}]
Consider first the antiferromagnetic Ising model ($\beta = \gamma < 1$) and by symmetry we may assume $\lambda \le 1$. 
It is shown in \cite{CLV20} that the Gibbs distribution $\mu$ is $O(1/\delta)$-spectrally independent in this case, and by considering the worst neighborhood configuration one can check that $\mu$ is $\bb$-marginally bounded for
\[
\bb = \min \left\{ \frac{\lambda \beta^\Delta}{\lambda \beta^\Delta + 1}, \frac{\lambda^{-1} \gamma^\Delta}{\lambda^{-1} \gamma^\Delta + 1} \right\} = \frac{\lambda \beta^\Delta}{\lambda \beta^\Delta + 1} > \frac{\lambda \left(\frac{\Delta-2}{\Delta}\right)^\Delta}{\lambda \left(\frac{\Delta-2}{\Delta}\right)^\Delta + 1} \ge \frac{\lambda}{28}. 
\]
Thus, when $\lambda \ge 1/500$, \cref{thm:main} implies that the mixing time of the Glauber dynamics is at most $\Delta^{O(1/\delta)} n \log(n/\eps)$ for large enough $n$. 
Meanwhile, if $\lambda < 1/500$ then one can check that the Dobrushin uniqueness condition holds with $c = 1/2$, and thus the mixing time is upper bounded by $2n \log(n/\eps)$. 
This proves the theorem for the antiferromagnetic case. 

Next, consider the ferromagnetic Ising model ($\beta = \gamma > 1$). Assume $\lambda \ge 1$ for convenience. 
The Gibbs distribution $\mu$ is $O(1/\delta)$-spectrally independent by Theorem 26 of \cite{CLV20} and $\bb$-marginally bounded for
\[
\bb = \min \left\{ \frac{1}{\lambda \beta^\Delta + 1}, \frac{1}{\lambda^{-1} \gamma^\Delta + 1} \right\} = \frac{1}{\lambda \beta^\Delta + 1} > \frac{1}{\lambda (\frac{\Delta}{\Delta-2})^\Delta + 1} \ge \frac{1}{28 \lambda }. 
\]
If $\lambda \le 500$ the mixing time is $\le \Delta^{O(1/\delta)} n \log(n/\eps)$ by \cref{thm:main}, and if $\lambda > 500$ the mixing time is $\le 2n \log(n/\eps)$ by the Dobrushin uniqueness condition. This shows the ferromagnetic case, and completes the proof of the theorem. 
\end{proof}

For random colorings, we can use the same argument. 
\begin{proof}[Proof of \cref{thm:coloring}]
\cite{FGYZ20} showed that the uniform distribution $\mu$ of colorings is $O(1/\delta)$-spectrally independent under our assumption. 
(Note that the notion of spectral independence in \cite{FGYZ20} implies the one in \cite{CGSV20} which is \cref{def:spectral-ind}; see Lemma 3.6 of \cite{FGYZ20} and Theorem 8 of \cite{CGSV20}; also, \cite{FGYZ20} gave a better bound on the spectral independence constant and applicable to a slightly larger parameter region). 
Also, the proof of Lemma 3 from \cite{GKM15} can be adapted to show that $\mu$ is $\Omega(1/q)$-marginally bounded. 
Hence, \cref{thm:main} implies that the mixing time of the Glauber dynamics is at most $C n\log(n/\eps)$ for some $C = C(\Delta,\delta,q)$. 
Notice that when $q \ge 3\Delta$, the Dobrushin uniqueness condition holds with $c = 1/2$ and thus the mixing time is at most $2n \log(n/\eps)$. 
By taking
\[
C' = C'(\Delta, \delta) := \max \left\{ 2, \max_{(\alpha^* + \delta) \Delta \le q < 3\Delta} C(\Delta,\delta,q) \right\},
\]
we get an upper bound $C' n \log(n/\eps)$ for the mixing time. 
\end{proof}

Finally, we give the proof for the monomer-dimer model. 
\begin{proof}[Proof of \cref{thm:matching}]
Notice that the monomer-dimer model on $G$ is equivalent to the hard-core model on the line graph of $G$; so \cref{thm:main} is still applicable. 
\cref{thm:spec-ind-matching} shows that the Gibbs distribution $\mu$ of the monomer-dimer model is $\eta$-spectrally independent for 
\[
\eta = \min\wrapc{2\lambda\Delta, 2\sqrt{1 + \lambda\Delta}}.
\]
Meanwhile, by considering the worst configuration on the 2-hop neighborhood one can show that $\mu$ is $\bb$-marginally bounded for some $\bb = \bb(\Delta,\lambda)$. 
Thus, the theorem then follows from \cref{thm:main}. 
\end{proof}

\section{Open Problems}\label{sec:conclusion}
\begin{itemize}
    \item Can we improve the approximate tensorization constant $C_1$ in \cref{thm:at} and the mixing time bound in \cref{thm:main} with a better dependence on the maximum degree $\Delta$ and on the spectral independence $\eta$? 
    For example, for the hard-core model when $\lambda\le (1-\delta)\lambda_c(\Delta)$, 
    currently our mixing time bound scales as $\Delta^{O(\Delta^2/\delta)} \times O(n\log n)$. 
    Can we improve it and get $\poly(\Delta, 1/\delta) \, n \log n$? This has recently been partially resolved in \cite{CFYZ21}, who showed a lower bound of $\frac{1}{C(\delta)n}$ on the spectral gap of the Glauber dynamics, where $C(\delta)$ is a constant depending only on $\delta > 0$.
    
    \item One can show that the spectral independence of the monomer-dimer model on the infinite $\Delta$-regular tree $\T_{\Delta}$ is exactly $\frac{2x}{1 - x}$ where
    \begin{align*}
        x = \frac{1}{\Delta-1} \wrapp{1 - \frac{2}{\sqrt{1 + 4\lambda(\Delta-1)} + 1}}
    \end{align*}
    is the (unsigned) pairwise influence between edges of $\T_{\Delta}$ sharing a vertex. Note that for $\Delta = 2$, this is $\sqrt{1 + 4\lambda} - 1 = \Theta(\sqrt{\lambda})$, while for $\Delta \geq 3$, the spectral independence $O(1/\Delta)$ independent of $\lambda$.
    
    This suggests that while the bound in \cref{thm:matchingsspecind} on the total influence of an edge is tight, the bound in \cref{thm:spec-ind-matching} on the maximum eigenvalue obtained by controlling the $\infty$-norm of the influence matrix is not tight, in contrast to the upper bound in \cite{CLV20} for vertex two-spin systems (which has a matching lower bound \cite{ALO20}). It would be interesting to obtain improved bounds on the spectral independence for the monomer-dimer model.
    
\end{itemize}
\vfill

\pagebreak
\bibliographystyle{siamplain}
\bibliography{GD.bib}
\pagebreak

\begin{appendices}
\crefalias{section}{appsec}
\crefalias{subsection}{appsec}

\section{Factorization and Contraction of Variance}
\label{app:var}

In this paper we studied the approximate tensorization and uniform block factorization of entropy for spin systems, and local-to-global entropy contraction for weighted simplicial complexes. 
There are analogs of these notions in terms of \emph{variance} as well, which is closely related to the spectral gap of the corresponding Markov chains. 
Moreover, most of our main technical contributions also hold in the variance setting. 
In this appendix we summarize a few definitions and theorems that we can get for variance using the techniques from this paper. The results for simplicial complexes were also independently discovered by \cite{KM20}.

Let $\mu$ be a distribution over a finite set $\Omega$. 
Recall that for every function $f : \Omega \to \R$, the variance of $f$ is defined as $\Var(f) = \mu[(f-\mu(f))^2] = \mu(f^2) - \mu(f)^2$. 
If $P$ is a reversible Markov chain on $\Omega$ with stationary distribution $\mu$, then the spectral gap of $P$ is defined as 
\[
\lambda(P) = \inf \left\{ \frac{\EE_{P}(f, f)}{\Var(f)} \biggm| f: \Omega \to \R, \, \Var(f) \neq 0 \right\}. 
\]


\subsection{Spin Systems}

For spin systems, variance and entropy share a lot of common properties.  
For example, the decomposition result \cref{fact:ent-decomp} and the tensorization lemma for product measures \cref{lem:prod-factor} both hold for variance as well; we refer to \cite{MSW03} for statements. 

We can also define the notions of approximate tensorization and uniform block factorization of variance.

\begin{definition}[Approximate Tensorization of Variance]
We say that a distribution $\mu$ over $[q]^V$ satisfies the \emph{approximate tensorization of variance} (with constant $\Cat$) if for all $f: \Omega \to \R$ we have 
\begin{equation*}\label{eq:var-approx-tensor}
\Var (f) \le \Cat \sum_{v\in V} \mu[\Var_v(f)].
\end{equation*}
\end{definition}

\begin{definition}[Uniform Block Factorization of Variance]
We say that a distribution $\mu$ over $[q]^V$ satisfies the \emph{$\ell$-uniform block factorization of variance} (with constant $C$) if for all $f: \Omega \to \R$ we have 
\begin{equation*}\label{eq:var-block-factor}
\frac{\ell}{n} \, \Var (f) \le C \cdot \frac{1}{\binom{n}{\ell}} \sum_{S\in \binom{V}{\ell}} \mu[\Var_S(f)].
\end{equation*}
\end{definition}

Moreover, it is known that approximate tensorization of variance is equivalent to the Poincar\'{e} inequality, i.e., a bound on the spectral gap. 
We refer to \cite{CMT15,CP20} for more backgrounds. 

\begin{fact}\label{fact:gap-at}
A distribution $\mu$ over $[q]^V$ satisfies the approximate tensorization of variance with constant $\Cat$ if and only if the spectral gap of the Glauber dynamics for $\mu$ satisfies $\lambda(\Pgl) \ge \frac{1}{\Cat n}$. 
\end{fact}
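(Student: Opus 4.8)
The plan is to observe that, once the definitions are unwound, approximate tensorization of variance and the Poincar\'e inequality for the Glauber dynamics are literally the same inequality; the equivalence with the spectral gap is then just the definition of $\lambda(\Pgl)$.

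First I would record the explicit formula for the Glauber Dirichlet form, already stated in \cref{def:func-ineq}: for every $f:\Omega\to\R$,
\[
\EE_{\Pgl}(f,f) = \frac{1}{n}\sum_{v\in V}\mu[\Var_v(f)].
\]
To justify it I would write $\Pgl = \tfrac1n\sum_{v\in V}P_v$, where $P_v$ is the single-site heat-bath update at $v$ (as in the proof of \cref{fact:at-mixing}), use linearity of the Dirichlet form in the transition matrix, and compute $\EE_{P_v}(f,f) = \mu[\Var_v(f)]$ by conditioning on the configuration on $V\setminus\{v\}$ and applying the elementary identity $\tfrac12\sum_{a,b}p(a)p(b)(g(a)-g(b))^2 = \Var_p(g)$ to the conditional measure.

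Next I would chase definitions in both directions. By definition of the spectral gap, the assertion $\lambda(\Pgl)\ge \tfrac{1}{\Cat n}$ is equivalent to the Poincar\'e inequality $\tfrac{1}{\Cat n}\Var(f)\le \EE_{\Pgl}(f,f)$ holding for all $f:\Omega\to\R$; substituting the Dirichlet form identity turns this into $\tfrac{1}{\Cat n}\Var(f)\le\tfrac1n\sum_{v}\mu[\Var_v(f)]$, i.e.\ $\Var(f)\le \Cat\sum_v\mu[\Var_v(f)]$, which is precisely approximate tensorization of variance with constant $\Cat$. Since each step is an equivalence, reading the chain backwards gives the converse, so the two conditions coincide.

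I do not expect any genuine obstacle here: the statement is a bookkeeping identity, and the only mildly technical point---the Dirichlet form formula---is standard and could alternatively be cited (see \cite[Proposition 1.1]{CMT15} or \cite{MSW03}, which also record the full equivalence between approximate tensorization of variance, the Poincar\'e inequality, and a bound on the spectral gap).
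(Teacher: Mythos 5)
Your proof is correct and is essentially the argument the paper has in mind: the paper states \cref{fact:gap-at} without proof and defers to \cite{CMT15,CP20}, and your reduction via the Dirichlet form identity $\EE_{\Pgl}(f,f)=\tfrac1n\sum_v\mu[\Var_v(f)]$ (already recorded in \cref{def:func-ineq}) together with the definition of the spectral gap is exactly the standard computation underlying those references. No gaps.
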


Similarly, the uniform block factorization of variance is the same as a bound on the spectral gap of the block dynamics which updates a random subset of vertices in each step.

One of our main results, \cref{lem:comparison}, which deduces approximate tensorization from uniform block factorization, is also true for variance.

\begin{lemma}\label{lem:var-comparison}
Let $\Delta\ge 3$ be an integer and $\bb > 0$ be a real. 
Consider the Gibbs distribution $\mu$ on an $n$-vertex graph $G$ of maximum degree at most $\Delta$ and assume that $\mu$ is $\bb$-marginally bounded. 
Suppose there exist positive reals $\theta \le \frac{b^2}{12\Delta}$ and $C$ such that 
$\mu$ satisfies the $\ceil{\theta n}$-uniform block factorization of variance with constant $C$. 
Then $\mu$ satisfies the approximate tensorization of variance with constant $\Cat =  O (C) $
where $O(\cdot)$ hides a constant factor depending only on $\bb$. 
\end{lemma}

More importantly, one can modify previous results and proofs in \cite{ALO20,CLV20,CGSV20,FGYZ20} and \cref{thm:spec-ind-matching} to show that the optimal uniform block factorization constant is $O(1)$ for spectrally independent distributions, as we can see in the next subsection. 
Thus, combining \cref{lem:var-comparison,fact:gap-at} we can already get an optimal $\Omega(1/n)$ bound on the spectral gap of the Glauber dynamics.

\subsection{Simplicial Complexes}

For simplicial complexes, variance also has similar properties as entropy; e.g., as mentioned in \cite{CGM19}, the decomposition result (see \cref{lem:entropydecomp}) holds for variance as well.

We can also define the notions of global and local variance contraction in weighted simplicial complexes. 

\begin{definition}[Global Variance Contraction]
We say a pure $n$-dimensional weighted simplicial complex $(\SC,w)$ satisfies the \emph{order-$(r,s)$ global variance contraction} with rate $\kappa = \kappa(r,s)$ if for all $f^{(s)} : \SC(s) \rightarrow \R$ we have
\[
\Var_{\pi_r}(f^{(r)}) \le (1-\kappa)\, \Var_{\pi_s}(f^{(s)}).
\]
\end{definition}

It is known that variance contraction is equivalent to the Poincar\'{e} inequality; see, e.g., \cite[Lemma 1.13]{MT06} and the references therein. 

\begin{fact}\label{fact:var-contraction-gap}
A pure $n$-dimensional weighted simplicial complex $(\SC,w)$ satisfies the order-$(r,s)$ global variance contraction with rate $\kappa$ if and only if the spectral gap of the order-$(s,r)$ down-up walk satisfies $\lambda(P_{s,r}^\vee) \ge \kappa$. 
\end{fact}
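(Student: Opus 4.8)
The plan is to reduce both directions of the equivalence to a single identity expressing the Dirichlet form of $P^\vee_{s,r}$ as the drop in variance under projection to $\SC(r)$.

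Write $U = P_r^\uparrow \cdots P_{s-1}^\uparrow$ for the composition of the global up operators, viewed as a linear map $\R^{\SC(s)} \to \R^{\SC(r)}$, so that $f^{(r)} = U f^{(s)}$ in the notation of the Global Variance Contraction definition, and write $D = P_s^\downarrow \cdots P_{r+1}^\downarrow \colon \R^{\SC(r)} \to \R^{\SC(s)}$. First I would verify, directly from the explicit entries of $P_k^\uparrow$ and $P_{k+1}^\downarrow$, the one-step adjointness $\langle P_k^\uparrow g, h\rangle_{\pi_k} = \langle g, P_{k+1}^\downarrow h\rangle_{\pi_{k+1}}$, and then telescope it along the levels $r, r+1, \dots, s$ to get $\langle U f, g\rangle_{\pi_r} = \langle f, D g\rangle_{\pi_s}$ for all $f \colon \SC(s)\to\R$ and $g \colon \SC(r)\to\R$; equivalently $D = U^\ast$ with respect to the weighted inner products. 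In particular $P^\vee_{s,r} = D U = U^\ast U$ is self-adjoint, hence reversible, with respect to $\pi_s$, and (using $D\mathbf 1 = \mathbf 1$, $U\mathbf 1 = \mathbf 1$) one gets $\pi_r(f^{(r)}) = \pi_s(f^{(s)})$.

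Next I would compute, for any $f = f^{(s)}\colon \SC(s)\to\R$,
\[
\langle f, P^\vee_{s,r} f\rangle_{\pi_s} = \langle f, D U f\rangle_{\pi_s} = \langle U f, U f\rangle_{\pi_r} = \langle f^{(r)}, f^{(r)}\rangle_{\pi_r},
\]
so that the standard Dirichlet-form identity for a reversible chain gives
\[
\mathcal E_{P^\vee_{s,r}}(f, f) = \langle f, (I - P^\vee_{s,r}) f\rangle_{\pi_s} = \pi_s(f^2) - \pi_r\big((f^{(r)})^2\big).
\]
Since the Dirichlet form and the variances are all unchanged when $f$ is replaced by $f - \pi_s(f)$ (and $f^{(r)}$ is then replaced by $f^{(r)} - \pi_s(f)$, which has $\pi_r$-mean zero by the previous paragraph), this upgrades to
\[
\mathcal E_{P^\vee_{s,r}}(f, f) = \Var_{\pi_s}(f) - \Var_{\pi_r}(f^{(r)}).
\]
Plugging this into the variational formula $\lambda(P^\vee_{s,r}) = \inf \big\{ \mathcal E_{P^\vee_{s,r}}(f,f)/\Var_{\pi_s}(f) : \Var_{\pi_s}(f)\neq 0 \big\}$ shows that $\lambda(P^\vee_{s,r}) \ge \kappa$ holds if and only if $\Var_{\pi_r}(f^{(r)}) \le (1-\kappa)\Var_{\pi_s}(f^{(s)})$ for every $f^{(s)}$, which is precisely order-$(r,s)$ global variance contraction with rate $\kappa$.

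I do not expect any genuine obstacle here: everything is elementary linear algebra, and it is the variance analog of the mechanism behind the entropy statements in the body of the paper; a reader who prefers may instead simply quote \cite[Lemma 1.13]{MT06}. The only points needing a little care are keeping track of which inner product each adjointness relation is taken in, and the constant-shift argument used to pass from the quadratic forms $\pi_s(f^2)$ and $\pi_r((f^{(r)})^2)$ to the genuine variances.
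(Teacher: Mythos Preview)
Your proposal is correct and follows exactly the mechanism the paper indicates: the paper does not give a full proof but simply cites \cite[Lemma 1.13]{MT06} and remarks that $P_{s,r}^\vee$ is the product of the two mutually adjoint operators $P_s^\downarrow\cdots P_{r+1}^\downarrow$ and $P_r^\uparrow\cdots P_{s-1}^\uparrow$, which is precisely what you flesh out in detail. Your argument is thus the same approach, just made explicit.
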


The reason behind \cref{fact:var-contraction-gap} is that $P_{s,r}^\vee = (P_s^\downarrow \cdots P_{r+1}^\downarrow) (P_r^\uparrow \cdots P_{s-1}^\uparrow)$ is the product of two operators $(P_s^\downarrow \cdots P_{r+1}^\downarrow)$ and $(P_r^\uparrow \cdots P_{s-1}^\uparrow)$ that are adjoint to each other. 
Also note that $\lambda(P_{r,s}^\wedge) = \lambda(P_{s,r}^\vee)$ and so the same holds for the up-down walk as well.

The local variance contraction is defined in the following way. 

\begin{definition}[Local Variance Contraction]\label{def:var-localentcontraction}
We say a pure $n$-dimensional weighted simplicial complex $(\SC,w)$ satisfies \emph{$(\alpha_0,\dots,\alpha_{n-2})$-local variance contraction} 
if for every $0\le k \le n-2$ and every $\tau \in \SC(k)$, it holds for all $f_\tau^{(2)} : \SC_\tau(2) \to \R$ that
\begin{align*}
    \Var_{\pi_{\tau,2}} (f_{\tau}^{(2)}) \geq (1 + \alpha_{k}) \, \Var_{\pi_{\tau,1}} (f_{\tau}^{(1)}).
\end{align*}
\end{definition}

We remark that the definition of local variance contraction is slightly stronger than its analog \cref{def:localentcontraction} for entropy, as here we consider all local functions $f_\tau^{(2)}$ while in \cref{def:localentcontraction} we only consider those induced by a global function $f^{(n)}$. 

Actually, the notion of local variance contraction is almost the same as local spectral expansion; this can be verified in the same way as for \cref{fact:var-contraction-gap}.
\begin{fact}\label{fact:var-expan}
Fix a pure $n$-dimensional weighted simplicial complex $(\SC,w)$. Then $(\SC,w)$ satisfies $(\alpha_0,\dots,\alpha_{n-2})$-local variance contraction iff it is a $(\zeta_0,\dots,\zeta_{n-2})$-local spectral expander, where $\zeta_k = \frac{1-\alpha_k}{1+\alpha_k}$ for all $k$.  
\end{fact}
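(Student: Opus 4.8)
The plan is to prove the equivalence one link at a time, reducing each link to the order-$(1,2)$ instance of the Poincar\'e/variance-contraction correspondence. Fix $0 \le k \le n-2$ and $\tau \in \SC(k)$, and look at the link $(\SC_\tau, w_\tau)$, restricted to levels $1$ and $2$. The first observation is that $(\alpha_0,\dots,\alpha_{n-2})$-local variance contraction at $\tau$ (\cref{def:var-localentcontraction}) is literally the assertion that $(\SC_\tau,w_\tau)$ satisfies order-$(1,2)$ global variance contraction with rate $\kappa_k$, where $1-\kappa_k = \frac{1}{1+\alpha_k}$: indeed $\Var_{\pi_{\tau,2}}(f_\tau^{(2)}) \ge (1+\alpha_k)\,\Var_{\pi_{\tau,1}}(f_\tau^{(1)})$ is the same inequality as $\Var_{\pi_{\tau,1}}(f_\tau^{(1)}) \le (1-\kappa_k)\,\Var_{\pi_{\tau,2}}(f_\tau^{(2)})$. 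Crucially, since \cref{def:var-localentcontraction} quantifies over \emph{all} local functions $f_\tau^{(2)}:\SC_\tau(2)\to\R$ (unlike its entropy analogue \cref{def:localentcontraction}), this reduction is exact and no restriction to globally-lifted functions is needed.

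Next I would invoke \cref{fact:var-contraction-gap} applied to the $2$-dimensional complex $(\SC_\tau,w_\tau)$: order-$(1,2)$ variance contraction with rate $\kappa_k$ holds (for all functions) if and only if $\lambda(P^\vee_{\tau,2,1}) \ge \kappa_k = \frac{\alpha_k}{1+\alpha_k}$, where $P^\vee_{\tau,2,1} = P^\downarrow_{\tau,2} P^\uparrow_{\tau,1}$ is the order-$(2,1)$ down-up walk in the link. (If one prefers a self-contained derivation rather than citing \cref{fact:var-contraction-gap}, this is immediate from the identity $\EE_{P^\vee_{\tau,2,1}}(f_\tau^{(2)},f_\tau^{(2)}) = \Var_{\pi_{\tau,2}}(f_\tau^{(2)}) - \Var_{\pi_{\tau,1}}(f_\tau^{(1)})$, which follows from $P^\uparrow_{\tau,1}$ and $P^\downarrow_{\tau,2}$ being adjoint together with $\pi_{\tau,1}(f_\tau^{(1)}) = \pi_{\tau,2}(f_\tau^{(2)})$, and from the definition of the Poincar\'e inequality in \cref{def:func-ineq}.) I would then substitute the spectral identity already recorded in the proof of the second bound of \cref{eq:alpha}, namely $\lambda(P^\vee_{\tau,2,1}) = \lambda(P^\wedge_{\tau,1,2}) = \tfrac12\bigl(1 - \lambda_2(P_\tau)\bigr)$, which uses that $P^\wedge_{\tau,1,2}$ and $P^\vee_{\tau,2,1}$ have the same nonzero spectrum and that $P_\tau = 2P^\wedge_{\tau,1,2} - I$.

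Putting the pieces together, $(\alpha_0,\dots,\alpha_{n-2})$-local variance contraction is equivalent to requiring, for every $k$ and every $\tau \in \SC(k)$, that $\tfrac12(1-\lambda_2(P_\tau)) \ge \frac{\alpha_k}{1+\alpha_k}$, which rearranges to $\lambda_2(P_\tau) \le 1 - \frac{2\alpha_k}{1+\alpha_k} = \frac{1-\alpha_k}{1+\alpha_k} = \zeta_k$ --- exactly $(\zeta_0,\dots,\zeta_{n-2})$-local spectral expansion. The map $x \mapsto \frac{1-x}{1+x}$ is an involution, so the stated relation $\zeta_k = \frac{1-\alpha_k}{1+\alpha_k}$ is symmetric in the two families, and being strictly decreasing it makes both directions of the ``if and only if'' go through simultaneously. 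There is no real obstacle here: the argument is essentially bookkeeping. The only points that need care are (i) tracking the three rate conversions ($\alpha_k \leftrightarrow \kappa_k \leftrightarrow \lambda(P^\vee_{\tau,2,1}) \leftrightarrow \lambda_2(P_\tau)$) without sign errors, and (ii) the ``for all functions'' quantifier in \cref{def:var-localentcontraction}, which is precisely what distinguishes the variance statement from the one-sided entropy version and what permits the clean equivalence.
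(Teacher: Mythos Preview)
Your proposal is correct and is precisely the argument the paper has in mind: the paper does not spell out a proof but simply says ``this can be verified in the same way as for \cref{fact:var-contraction-gap},'' and your write-up does exactly that---apply the variance-contraction/Poincar\'e equivalence at the $(1,2)$ level of each link and then convert via $P_\tau = 2P^\wedge_{\tau,1,2}-I$. Your rate computations ($\kappa_k=\frac{\alpha_k}{1+\alpha_k}$ and $\lambda_2(P_\tau)\le \frac{1-\alpha_k}{1+\alpha_k}$) are all correct, and your observation that the ``for all local $f_\tau^{(2)}$'' quantifier in \cref{def:var-localentcontraction} is what makes the equivalence clean is on point.
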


Using the same proof approach, we can get the following analog of \cref{thm:local-global} for variance.

\begin{theorem}[Local-to-Global Variance Contraction]\label{thm:var-local-global}
If a pure $n$-dimensional weighted simplicial complex $(\SC,w)$ satisfies $(\alpha_0,\dots,\alpha_{n-2})$-local variance contraction, 
then it satisfies the order-$(r,s)$ global variance contraction with rate 
\[
\kappa = \frac{\sum_{k=r}^{s-1} \alpha_0  \cdots \alpha_{k-1}}{\sum_{k=0}^{s-1} \alpha_0 \cdots \alpha_{k-1}}, 
\]
and the spectral gaps of the order-$(s,r)$ down-up and order-$(r,s)$ up-down walks satisfy 
\[
\lambda(P_{s,r}^\vee) = \lambda(P_{r,s}^\wedge) \ge \frac{\sum_{k=r}^{s-1} \alpha_0  \cdots \alpha_{k-1}}{\sum_{k=0}^{s-1} \alpha_0 \cdots \alpha_{k-1}}. 
\]
\end{theorem}

Note that, \cref{thm:var-local-global}, combined with results from \cite{ALO20,CLV20,CGSV20,FGYZ20} and \cref{thm:spec-ind-matching}, can be used to establish uniform block factorization of variance for linear-sized blocks, with constant $C = O(1)$. 

To give a better illustration of \cref{thm:var-local-global}, we consider the order-$(s,s-1)$ down-up and order-$(s-1,s)$ up-down walks in a $(\zeta_0,\dots,\zeta_{n-2})$-local spectral expander. 
The main result of \cite{AL20} shows the following:
\begin{equation}\label{eq:var-AL20}
\lambda(P_{s,s-1}^\vee) = \lambda(P_{s-1,s}^\wedge) \ge \frac{1}{s} (1-\zeta_0) \cdots (1-\zeta_{s-2}). 
\end{equation}
Meanwhile, combining \cref{fact:var-contraction-gap}, \cref{fact:var-expan}, and \cref{thm:var-local-global}, we can get
\begin{equation}\label{eq:var-CLV}
\lambda(P_{s,s-1}^\vee) = \lambda(P_{s-1,s}^\wedge) \ge \frac{(1-\zeta_0) \cdots (1-\zeta_{s-2})} {\sum_{k=0}^{s-1} (1-\zeta_0) \cdots (1-\zeta_{k-1}) (1+\zeta_k) \cdots (1+\zeta_{s-2})}. 
\end{equation}
To understand the two bounds, consider two special cases: 
(1) If $\zeta_k = 0$ for all $k$ (corresponding to strongly log-concave distributions), then both \cref{eq:var-AL20,eq:var-CLV} give the same bound $\frac{1}{s}$; 
(2) If $\zeta_k = \frac{1}{k+2}$ for all $k$ (see Corollary 1.6 in \cite{AL20} and applications of this setting), both \cref{eq:var-AL20,eq:var-CLV} give the same bound $\frac{1}{s^2}$. 
However, it was shown in \cite{GM20} that the bound \cref{eq:var-AL20} is always no worse than \cref{eq:var-CLV} we get here, since the sequence of local variance contraction rates must be consistent with the trickling down theorem of \cite{Opp18}.

\end{appendices}
\end{document}